\theoremstyle{thmstyleone}
\newtheorem{theorem}{Theorem}
\newtheorem{proposition}[theorem]{Proposition}
\theoremstyle{thmstyletwo}
\newtheorem{lemma}[theorem]{Lemma}
\newtheorem{corollary}[theorem]{Corollary} 
\newtheorem{observation}[theorem]{Observation}
\newtheorem{definition}[theorem]{Definition}
\newcommand{\ta}{{\tt a} \relax}
\newcommand{\tb}{{\tt b} \relax}
\definecolor{darkred}{RGB}{128,0,0}
\definecolor{darkgreen}{RGB}{0,128,0}
\definecolor{lightgreen}{RGB}{224,255,224}
\definecolor{darkblue}{RGB}{0,0,128}
\definecolor{lightblue}{RGB}{224,244,255}
\newcommand{\vir}[1]{``#1''}
\def\lcp{\mbox{\rm {lcp}}}
\newcommand{\runs}{\textit{runs}}
\newcommand{\bwt}{\textrm{BWT}}
\newcommand{\rev}{\textrm{rev}}
\newcommand{\conj}{\textrm{conj}}
\newcommand{\Oh}{{\cal O}}
\newcommand{\CA}{\textrm{CA}}
\newcommand{\vv}[2]{\beta^{#2}(#1)}
\def\a{{\tt a}}
\def\b{{\tt b}}
\def\x{{\tt x}}
\newcolumntype{P}[1]{>{\centering\arraybackslash}p{#1}}
\begin{document}

\title[Bit catastrophes for the Burrows-Wheeler Transform]{Bit catastrophes for the Burrows-Wheeler Transform}

\author*[1]{\fnm{Sara} \sur{Giuliani}}\email{sara.giuliani\_01@univr.it}

\author*[2]{\fnm{Shunsuke} \sur{Inenaga}}\email{inenaga.shunsuke.380@m.kyushu-u.ac.jp}

\author*[1]{\fnm{Zsuzsanna} \sur{Lipt\'{a}k}}\email{zsuzsanna.liptak@univr.it}

\author*[3]{\fnm{Giuseppe} \sur{Romana}}\email{giuseppe.romana01@unipa.it}

\author*[3]{\fnm{Marinella} \sur{Sciortino}}\email{marinella.sciortino@unipa.it}

\author*[4]{\fnm{Cristian} \sur{Urbina}}\email{crurbina@dcc.uchile.cl}

\affil[1]{\orgname{University of Verona}, \city{Verona}, \country{Italy}}
\affil[2]{\orgname{Kyushu University}, \city{Fukuoka}, \country{Japan}}
\affil[3]{\orgname{University of Palermo}, \city{Palermo}, \country{Italy}}
\affil[4]{\orgname{University of Chile}, \city{Santiago}, \country{Chile}}

\abstract{A bit catastrophe, loosely defined, is when a change in just one character of a string causes a significant change in the size of the compressed string. We study this phenomenon for the Burrows-Wheeler Transform (BWT), a string transform at the heart of several of the most popular compressors and aligners today. The parameter determining the size of the compressed data is the number of equal-letter runs of the BWT, commonly denoted $r$. 

We exhibit infinite families of strings in which insertion, deletion, resp.\ substitution of one character increases $r$ from constant to $\Theta(\log n)$, where $n$ is the length of the string. These strings can be interpreted both as examples for an increase by a multiplicative or an additive $\Theta(\log n)$-factor. As regards multiplicative factor, they attain the upper bound given by Akagi, Funakoshi, and Inenaga [Inf \& Comput. 2023] of $\Oh(\log n \log r)$, since here $r=\Oh(1)$. 

We then give examples of strings in which insertion, deletion, resp.\ substitution of a character increases $r$ by a $\Theta(\sqrt n)$ additive factor. These strings significantly improve the best known lower bound for an additive factor of $\Omega(\log n)$ [Giuliani et al., SOFSEM 2021]. }

\keywords{Burrows-Wheeler transform, equal-letter run, repetitiveness measure, sensitivity.}

\maketitle

\section{Introduction}\label{sec:intro} 
The Burrows-Wheeler Transform (BWT)~\cite{BurrowsWheeler94} is a reversible transformation of a string, consisting of a permutation of its characters. It can be obtained by sorting all of its rotations and then concatenating their last characters. The BWT is present in several compressors, such as {\em bzip}~\cite{bzip}. It also lies at the heart of some of the most powerful compressed indexes in terms of query time and functionality, such as the well-known {\em FM-index}~\cite{FerraginaM00}, and the more recent {\em RLBWT}~\cite{MakinenN05} and {\em r-index}~\cite{GagieNP18,GagieNP20,BannaiGI20,BoucherCL0S21,BoucherCL0S24}. Some of the most commonly used bioinformatics tools such as {\tt bwa}~\cite{bwa}, {\tt bowtie}~\cite{Bowtie}, and SOAP2~\cite{soap2} also use the BWT at their core. 

Given a string $w$, the measure $r=r(w)$ is defined as $r(w)=runs(\bwt(w))$, where $runs(v)$ denotes the number of maximal equal-letter runs of a string $v$. It is well known that $r$ tends to be small on repetitive inputs. This is because, on texts with many repeated substrings, the BWT tends to create long runs of equal characters (so-called {\em clustering effect})~\cite{RosoneS13}. Due to the widespread use of runlength-compressed BWT-based data structures, the BWT can thus be viewed as a {\em compressor}, with the number of runs $r$ the compression size. On the other hand, $r$ is also increasingly being used as a {\em repetitiveness measure}, i.e.\ as a parameter of the input text itself. In his recent survey~\cite{Navarro21a}, Navarro explored the relationships between many repetitiveness measures, among these $r$. In particular, all repetitiveness measures considered are lower bounded by $\delta$~\cite{KociumakaNP20}, a measure closely related to the {\em factor complexity} function~\cite{Loth2}. It was further shown in~\cite{KempaK22} that $r$ is within a $\mathrm{polylog}(n)$ factor of $z$, the number of phrases of the LZ77-compressor~\cite{ZivL77}. 

Giuliani et al.~\cite{GILPST21} studied the ratio of the runs of the BWT of a string and of its reverse. The authors gave an infinite families of strings for which this ratio is $\Theta(\log n)$, where $n$ is the length of the string. This family can also serve as an example of strings in which appending one character can cause $r$ to increase from $\Theta(1)$ to $\Theta(\log n)$. 
In this paper we further explore this effect, extending it to the other edit operations of deletion and substitution, for which we also give examples of a change from $\Theta(1)$ to $\Theta(\log n)$. 
Note that this attains the known upper bound of $\Oh(\log r \log n)$~\cite{AKAGI2023}. 

Akagi et al.~\cite{AKAGI2023} explored the question of how changes of just one character affect 
the compression ratio of known compressors; they refer to this as the compressors' {\em sensitivity}. More precisely, the sensitivity of a compressor is the maximum {\em multiplicative factor} by which a single-character edit operation can change the size of the output of the compressor. In addition, they also study the maximum {\em additive factor} an edit operation may cause in the output. Our second family of strings falls in this category: these are strings with $r$ in $\Theta(\sqrt n)$ on which an edit operation (insertion, deletion, or substitution) can cause $r$ to increase by a further additive factor of $\Theta(\sqrt n)$. This is a significant improvement over the previous lower bound of $\Omega(\log n)$~\cite{GILPST21}. 

Lagarde and Perifel in~\cite{LZbitcatastrophe} show that Lempel-Ziv 78 (LZ78) compression~\cite{ZivL78} suffers from the so-called ``one-bit catastrophe'': they give an infinite family of strings for which prepending a character causes a previously compressible string to become incompressible. They also show that this ``catastrophe'' is not a ``tragedy'': they prove that this can only happen when the original string was already poorly compressible. 

Here we use the term ``one-bit catastrophe'' in a looser meaning, namely simply to denote the effect that an edit operation may change the compression size significantly, i.e.\ increase it such that $r(w_n') = \omega(r(w_n))$, for an infinite family $(w_n)_{n>0}$, where $w_n'$ is the word resulting from applying a single edit operation to $w_n$. For a stricter terminology we would need to decide for one of the different definitions of BWT-compressibility currently in use.
In particular, a string may be called compressible with the BWT if $r$ is in $\Oh(n/\mathrm{polylog}(n))$~\cite{KempaK22}, or if $r(w)/\runs(w) \to\ 0$~\cite{FrosiniMRRS22_DLT}, or even as soon as $\runs(w) > r(w)$~\cite{MANTACI_TCS2017}. 

Some of our bit catastrophes can also be thought of as ``tragedies'', since 
the example families of the first group are precisely those with the best possible compression: their BWT has $2$ runs. In this sense our result on the BWT is even more surprising than that of~\cite{LZbitcatastrophe} on LZ78. 

Note that, in contrast to Lempel-Ziv compression, for the BWT, appending, prepending, and inserting are equivalent operations, since the BWT is invariant w.r.t.\ conjugacy. This means that, if there exists a word $w$ and a character $c$ s.t.\ appending $c$ to $w$ causes a certain change in $r$, then this immediately implies the existence of equivalent examples for prepending and inserting character $c$. This is because $r(wc)/r(w) = x$ (appending) implies that $r(c w)/r(w)=x$ (prepending), as well as $r(uc v)/r(uv) = x$, for every conjugate $uv$ of $w=vu$ (insertion). 

Finally, the BWT comes in two variants: in one, the transform is applied directly on the input string: this is the preferred variant in literature on combinatorics on words, and the one we concentrate on in most of the paper. In the other, the input string is assumed to have an end-of-string marker, usually denoted $\$$: this variant is common in the string algorithms and data structures literature. 
We show that there can be a multiplicative $\Theta(\log n)$, or an additive $\Theta(\sqrt n)$ factor difference between the two transforms. It is interesting to note that the previous remark about the equivalence of insertion in different places in the text does not extend to the variant with the final dollar. We show, however, that our results regarding the $\Theta(\sqrt n)$ additive factor apply also to this variant, for all three edit operations, and that appending a character at the end of the string---i.e., just before the $\$$-character---can result in a multiplicative $\Theta(\log n)$ increase. This is in stark contrast with the known fact that {\em prepending} a character can change the number of runs of the $\$$-variant by at most $2$~\cite{AKAGI2023}.

This work is an extended version of our conference article with the same title, published in the  proceedings of DLT 2023~\cite{GILRSU_DLT}. 

\section{Preliminaries}\label{sec:basics}

In this section, we give the necessary definitions and terminology used throughout the paper. 

\subsection{Basics on words} 

Let $\Sigma$ be a finite ordered {\em alphabet}, of cardinality $\sigma$. The elements of $\Sigma$ are called {\em characters} or {\em letters}. A {\em word} (or {\em string}) over $\Sigma$ is a finite sequence $w=w[0]w[1]\cdots w[n-1]=w[0..n-1]$ of characters from $\Sigma$. We denote by $|w|=n$ the length of $w$, with $\epsilon$ the unique word of length $0$. The set of words of length $n$ is denoted $\Sigma^n$, and $\Sigma^* = \cup_{n\geq 0} \Sigma^n$ is the set of all words over $\Sigma$. Given a word $w=w[0..n-1]$, its {\em reverse} is the word $\rev(s) = w[n-1]w[n-2]\cdots w[0]$. For a non-empty word $w = w[0..n-1]$, we denote by $\widehat{w}$ the word $w[0.. n-2]$, i.e.\ $w$ without its last character. We use the notation $\prod^{k}_{i=1} v_i$ for the concatenation of the words $v_1, v_2,\dots,v_k$. In particular, $v^k$ for $k\geq 1$ stands for the $k$-fold concatenation of the word $v$. 

Let $w$ be a word over $\Sigma$. If $w=uxv$ for some (possibly empty) words $u,x,v\in \Sigma^*$, then $u$ is called a {\em prefix}, $x$ a {\em substring} (or {\em factor}), and $v$ a {\em suffix} of $w$. A {\em proper} prefix, substring, or suffix of $w$ is one that does not equal $w$. If $x$ is a substring of $w$, then there exist $i,j$ such that $x = w[i]\cdots w[j] = w[i..j]$, where $w[i..j]=\epsilon$ if $i>j$. If $w[i..j]=x$, then $i$ is called an {\em occurrence} of $x$. 

Let $u,v \in \Sigma^*$. If $w=uv$ and $w'=vu$, then $w$ and $w'$ are called {\em conjugates} or {\em rotations} of one another. Equivalently, $w'$ is a conjugate of $w$ if there is $0\leq i \leq |w|$ such that $w' = w[i .. |w|]w[0..i-1]$. In this case, we write $w'=\conj_i(w)$.
A word $u$ is a {\em circular factor} of a word $w$ if it is a prefix of $conj_i(w)$ for some $0\leq i <|w|$, and $i$ is called an {\em occurrence} of $u$. If a word $w$ can be written as $w=v^k$ for some $k>1$, then $w$ is called a {\em power}, otherwise $w$ is called {\em primitive}. It is easy to see that $w$ is primitive if and only if it has $|w|$ distinct conjugates.  

Given two words $v,w$, the {\em longest common prefix} of $v$ and $w$, denoted $\lcp(v, w)$, is the unique word $u$ such that $u$ is a prefix of both $v$ and $w$, 
and $v[|u|]\neq w[|u|]$ if neither of the two words is prefix of the other.
The {\em lexicographic order} on $\Sigma^*$ is defined as follows: $v \leq_{\text lex} w$ if $v=\lcp(v,w)$, or else if $v[|u|] < w[|u|]$, where $u=\lcp(v,w)$.
A word is called a \emph{Lyndon word} if it is lexicographically strictly smaller than all of its conjugates. 

Finally, an {\em equal-letter run} (or simply {\em run}) is a maximal substring consisting of the same character, and $\runs(v)$ is the number of equal-letter runs in the word $v$. For example, the word {\tt catastrophic} has $12$ runs, while the word {\tt mississippi} has $8$ runs. 

\subsection{The Burrows-Wheeler Transform} 

Let $w\in \Sigma^*$. The {\em conjugate array} $\CA = \CA_w$ of $w$ is a permutation of $\{0,1,\ldots,n-1\}$ defined by: $\CA[i]<\CA[j]$ if (i) $\conj_i(w) <_{\text{lex}} \conj_j(w)$, or (ii) $\conj_i(w) = \conj_j(w)$ and $i<j$. So $\CA[k]$ contains the index of the $k$th conjugate of $w$ in lexicographic order. (Note that the conjugate array is the circular equivalent of the suffix array.) For example, if $w = $ {\tt catastrophic}, then $\CA_w = [3, 1, 0, 11, 9, 10, 7, 8, 6, 4, 2, 5]$. 

The {\em Burrows-Wheeler Transform} (BWT) of the word $w$ is a permutation of the characters of $w$, usually denoted $L=\bwt(w)$, defined as $L[i]=w[n-1]$ if $\CA[i]=0$, and $L[i] = w[\CA[i]-1]$ otherwise. For example, the BWT of the word {\tt catastrophic} is {\tt tcciphrotaas}. It follows from the definition that $w$ and $w'$ are conjugates if and only if $\bwt(w)=\bwt(w')$. 

We denote with $r(w) = \runs(\bwt(w))$ the number of runs in the BWT of the word $w$. For example, $r({\tt catastrophic}) = \runs({\tt tcciphrotaas}) = 10$. 

In the context of string algorithms and data structures, it is usually assumed that each string terminates with an end-of-string symbol (denoted by $\$$), not occurring elsewhere in the string; the $\$$-symbol is smaller than all other symbols in the alphabet. In fact, with this assumption, sorting the conjugates of $w\$$ can be reduced to lexicographically sorting its suffixes. Note that appending the character $\$$ to the word $w$ changes the output of BWT. We denote by $r_{\$}(w)=runs(\bwt(w\$))$. For example, $\bwt({\tt catastrophic\$})={\tt ctci\$phrotaas}$ and $r_{\$}({\tt catastrophic})=12$.

\subsection{Standard words}

Given an infinite sequence of integers $(d_0 , d_1 , d_2 , \ldots)$, with $d_0 \geq 0,$ and $d_i > 0$ for all $i > 0$, called a {\em directive sequence}, define a sequence of words $s_i$ with $i \geq 0$ of increasing length as follows: $s_0 = \b$, $s_1 = \a$, $s_{i+1} = s_i^{d_{i-1}} s_{i-1}$, for $i \geq 1$. The words $s_i$ are called {\em standard words}. The index $i$ is referred to as the {\em order} of $s_i$. 
Without loss of generality, here we can assume that $d_0>0$ (otherwise, the role of $\b$ and $\a$ is exchanged.). 
It is known that for $i\geq 2$, 
every standard word $s_i$ can be written as $s_i = x_i \a\b$ if $i$ is even, $s_i = x_i \b\a$ if $i$ is odd, where the factor $x_i$ is a palindrome \cite{deLuca97a}. 

Standard words are a well studied family of binary words with a lot of interesting combinatorial properties and characterizations and appear as extreme cases in many contexts~\cite{KMP77, CS2010, SciortinoZ07, MantaciRS03}. In particular, in \cite{MantaciRS03}, it was shown that the BWT of a binary word has exactly two runs if and only if it is a conjugate of a standard word or a conjugate of a power of a standard word.

{\em Fibonacci words} are a particular case of standard words, with directive sequence consisting of only ones. More precisely, Fibonacci words can be defined as follows: $s_0 = {\tt b}$, $s_1 = {\tt a}$, $s_{i+1} = s_i s_{i-1}$, for $i \geq 1$. It is easy to see that for all $i\geq 0$, $|s_i|=F_i$, where $(F_i)_{i\geq 0}$ is the Fibonacci sequence $1,1,2,3,5,8,13,21,\ldots$, defined by the recurrence $F_0=F_1=1$, and $F_{i+1} = F_i+F_{i-1}$ for $i\geq 1$. Using the well-known fact that the Fibonacci sequence grows exponentially in $i$, we have that $i = \Theta(\log |s_i|)$.
Moreover, for all $k\geq 1$, $s_{2k}=x_{2k}{\tt ab}$ and $s_{2k+1}=x_{2k+1}{\tt ba}$, where $x_{2k}$ and $x_{2k+1}$ are  palindromes (in particular, $x_2=\epsilon$). These words $x_h,$ for $h\geq 2$, are also referred to as {\em central words}. The recursive structure of the words $x_{2k}$ and $x_{2k+1}$ is also known \cite{deLucaM94}: $x_{2k}=x_{2k-1}{\tt ba}x_{2k-2}=x_{2k-2}{\tt ab}x_{2k-1}$ and $x_{2k+1}=x_{2k}{\tt ab}x_{2k-1}=x_{2k-1}{\tt ba}x_{2k}$.

\section{Increasing $r$ by a $\Theta(\log n)$-factor}\label{sec:Fibonacci}

In this section we give infinite families of words on which a single edit operation, such as insertion, deletion or substitution of a character, can cause an increase of $r$ from constant to $\Theta(\log n)$, where $n$ is the length of the word. 
The impact of the three edit operations on the BWT of the word is shown in Figure \ref{table:ins_del_sub}.

\begin{figure}
		 \centering
		\begin{subfigure}[b]{0.4\textwidth}
            \resizebox{.85\textwidth}{!}{
			\centering
			\begin{tabular}{r | r c c |}		
				  & CA &  \multicolumn{1}{p{3cm}}{\centering rotations of {\tt abaababaabaab}} & \multicolumn{1}{p{0.1cm}}{\centering } \\
				\hline
				0  &  7  & {\tt aabaababaabab } & {\tt b }\\
				1  &  2  & {\tt aababaabaabab } & {\tt b }\\
				2  &  10  & {\tt aababaababaab } & {\tt b }\\
				3  &  5  & {\tt abaabaababaab } & {\tt b }\\
				4  &  0  & {\tt abaababaabaab } & {\tt b }\\
				5  &  8  & {\tt abaababaababa } & {\tt a }\\
				6  &  3  & {\tt ababaabaababa } & {\tt a }\\
				7  &  11  & {\tt ababaababaaba } & {\tt a }\\
				8  &  6  & {\tt baabaababaaba } & {\tt a }\\
				9  &  1  & {\tt baababaabaaba } & {\tt a }\\
				10  &  9  & {\tt baababaababaa } & {\tt a }\\
				11  &  4  & {\tt babaabaababaa } & {\tt a }\\
				12  &  12  & {\tt babaababaabaa } & {\tt a }\\
			\end{tabular}
                }
			\caption[]
			{Fibonacci word of order 6}
			\label{fig:fib6}
		\end{subfigure}
		\qquad
		\begin{subfigure}[b]{0.4\textwidth}  
                \resizebox{.85\textwidth}{!}{
			\centering 
			\begin{tabular}{r | r c c |}	
					& CA &  \multicolumn{1}{p{3cm}}{\centering rotations of {\tt abaaba\textcolor{red}{b}baabaab}} & \multicolumn{1}{p{0.5cm}}{\centering } \\
				\hline
				0 & 8  & {\tt aabaababaaba\textcolor{red}{b}b } & {\tt b }\\
    			1 & 11  & {\tt aababaaba\textcolor{red}{b}baab } & {\tt b }\\
    			2 & 2  & {\tt aaba\textcolor{red}{b}baabaabab } & {\tt b }\\
    			3 & 9  & {\tt abaababaaba\textcolor{red}{b}ba } & {\tt a }\\
    			4 & 0  & {\tt abaaba\textcolor{red}{b}baabaab } & {\tt b }\\
    			5 & 12  & {\tt ababaaba\textcolor{red}{b}baaba } & {\tt a }\\
    			6 & 3  & {\tt aba\textcolor{red}{b}baabaababa } & {\tt a }\\
    			7 & 5  & {\tt a\textcolor{red}{b}baabaababaab } & {\tt b }\\
    			8 & 7  & {\tt baabaababaaba\textcolor{red}{b} } & {\tt \textcolor{red}{b} }\\
    			9 & 10  & {\tt baababaaba\textcolor{red}{b}baa } & {\tt a }\\
    			10 & 1  & {\tt baaba\textcolor{red}{b}baabaaba } & {\tt a }\\
    			11 & 13  & {\tt babaaba\textcolor{red}{b}baabaa } & {\tt a }\\
    			12 & 4  & {\tt ba\textcolor{red}{b}baabaababaa } & {\tt a }\\
    			13 & 6  & {\tt \textcolor{red}{b}baabaababaaba } & {\tt a }\\
								
			\end{tabular}
                }
			\caption[]
			{Insertion}    
			\label{fig:fibins}
		\end{subfigure} 

            \medskip
  
		\begin{subfigure}[b]{0.4\textwidth}  
                \resizebox{.85\textwidth}{!}{
			\centering 
			\begin{tabular}{r | r c c |}		
			& CA &  \multicolumn{1}{p{3cm}}{\centering rotations of {\tt abaababaabaa}} & \multicolumn{1}{p{0.5cm}}{\centering } \\
			\hline
			0 & 10  & {\tt aaabaababaab } & {\tt b }\\
			1 & 7  & {\tt aabaaabaabab } & {\tt b }\\
			2 & 11  & {\tt aabaababaaba } & {\tt a }\\
			3 & 2  & {\tt aababaabaaab } & {\tt b }\\
			4 & 8  & {\tt abaaabaababa } & {\tt a }\\
			5 & 5  & {\tt abaabaaabaab } & {\tt b }\\
			6 & 0  & {\tt abaababaabaa } & {\tt a }\\
			7 & 3  & {\tt ababaabaaaba } & {\tt a }\\
			8 & 9  & {\tt baaabaababaa } & {\tt a }\\
			9 & 6  & {\tt baabaaabaaba } & {\tt a }\\
			10 & 1  & {\tt baababaabaaa } & {\tt a }\\
			11 & 4  & {\tt babaabaaabaa } & {\tt a }\\
			&&&\\			
			\end{tabular}
               }
			\caption[]
			{Deletion}    
			\label{fig:fibdel}
		\end{subfigure}
		\qquad
		\begin{subfigure}[b]{0.4\textwidth}   
                \resizebox{.85\textwidth}{!}{
			\centering 
			\begin{tabular}{r | r c c |}		
				& CA &  \multicolumn{1}{p{3cm}}{\centering rotations of {\tt abaababaabaa\textcolor{red}{a}}} & \multicolumn{1}{p{0.5cm}}{\centering } \\
				\hline
				0 & 10  & {\tt aa\textcolor{red}{a}abaababaab } & {\tt b }\\
    			1 & 11  & {\tt a\textcolor{red}{a}abaababaaba } & {\tt a }\\
    			2 & 7  & {\tt aabaa\textcolor{red}{a}abaabab } & {\tt b }\\
    			3 & 12  & {\tt \textcolor{red}{a}abaababaabaa } & {\tt a }\\
    			4 & 2  & {\tt aababaabaa\textcolor{red}{a}ab } & {\tt b }\\
    			5 & 8  & {\tt abaa\textcolor{red}{a}abaababa } & {\tt a }\\
    			6 & 5  & {\tt abaabaa\textcolor{red}{a}abaab } & {\tt b }\\
    			7 & 0  & {\tt abaababaabaa\textcolor{red}{a} } & {\tt \textcolor{red}{a} }\\
    			8 & 3  & {\tt ababaabaa\textcolor{red}{a}aba } & {\tt a }\\
    			9 & 9  & {\tt baa\textcolor{red}{a}abaababaa } & {\tt a }\\
    			10 & 6  & {\tt baabaa\textcolor{red}{a}abaaba } & {\tt a }\\
    			11 & 1  & {\tt baababaabaa\textcolor{red}{a}a } & {\tt a }\\
    			12 & 4  & {\tt babaabaa\textcolor{red}{a}abaa } & {\tt a }\\ 	
			\end{tabular}
                }
			\caption[]
			{Substitution}
			\label{fig:fibsub}
		\end{subfigure}
\caption{The BWT matrix of the Fibonacci word of order 6 (a), and that of the result for 3 bit-catastrophes:  (b) inserting a character in position $6 = F_{6-1}-2$,  (c) deleting the last character, (d) substituting the last character.}
  \label{table:ins_del_sub}
\end{figure}

\subsection{Inserting a character}\label{subsec:fibinsertion}

First we recall a result from~\cite{GILPST21}, namely that appending a character to the reverse of a Fibonacci word can increase the number of runs by a logarithmic factor \cite{GILPST21}. This result was shown using the following proposition: 

\bigskip

\begin{proposition}[\cite{GILPST21}, Prop.~3]
\label{prop:furtherC_at_the_end}
Let $v$ be the reverse of a Fibonacci word $s$. If $s$ is of even order $2k$, then $r(v \b) = 2k$. If $s$ is of odd order $2k+1$, then $r(v \a) = 2k$.
\end{proposition}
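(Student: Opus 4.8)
The plan is to analyze the structure of the BWT matrix of $v\b$ (resp.\ $v\a$) where $v = \rev(s_{2k})$ (resp.\ $v=\rev(s_{2k+1})$), by understanding how the lexicographic order of the conjugates of $v$ with one extra character interacts with the known combinatorial structure of Fibonacci/standard words. Since $s_{2k} = x_{2k}\a\b$ with $x_{2k}$ a palindrome, we have $v = \rev(s_{2k}) = \b\a\, x_{2k}$, so $v\b = \b\a x_{2k}\b$; similarly for the odd case $v\a = \a\b x_{2k+1}\a$. The key is that appending the ``extra'' letter ($\b$ in the even case) creates a word whose set of circular factors, and the sorted order thereof, can be controlled using the recursive decomposition $x_{2k} = x_{2k-1}\b\a x_{2k-2} = x_{2k-2}\a\b x_{2k-1}$ (and the odd analogue).

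First I would set up the correspondence between runs of $\bwt(v\b)$ and ``descents'' in the sorted list of conjugates: $r$ equals the number of indices $i$ where $\bwt[i]\neq\bwt[i+1]$, i.e.\ where two lexicographically consecutive conjugates are preceded (circularly) by different letters. So the task reduces to counting, among lexicographically adjacent pairs of rotations, how many have differing preceding characters. The natural approach is induction on $k$: I would try to show that the BWT matrix of $\rev(s_{2k})\b$ ``contains'' that of $\rev(s_{2k-2})\b$ in a structured way, with each step of the recursion contributing exactly two new runs (one new $\a\to\b$ and one $\b\to\a$ alternation), giving the total $2k$. Concretely, one expects the rotations to split into a block starting with $\a$ and a block starting with $\b$ (as in Figure~\ref{fig:fib6} and Figure~\ref{fig:fibins}), and within each block the fine structure is governed by which power of $s_{2k-1}$ a rotation begins with, by de Luca's palindrome identities.

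An alternative, and probably cleaner, route is to exploit that $v\b$ is (a conjugate-rotation manipulation away from) something expressible via the standard-word machinery directly, and to track $r$ through the recursion $s_{i+1} = s_i^{d_{i-1}} s_{i-1}$ specialized to $d_j=1$. In particular I would look for a closed description of $\bwt(\rev(s_{2k})\b)$ — perhaps showing it equals a specific word with $2k$ runs whose run-lengths are themselves Fibonacci-like — and then verify the base case ($k=1$: $s_2=\a\b$, $v=\b\a$, $v\b=\b\a\b$, whose BWT should have $2$ runs) and the inductive step by a factor-by-factor comparison of the rotation sets of $\rev(s_{2k})\b$ and $\rev(s_{2k+2})\b$. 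Throughout, the palindromicity of $x_{2k}$ is what makes $\rev(s_{2k})$ tractable, since it turns reversal into a mild re-bracketing rather than an opaque operation.

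The main obstacle I anticipate is the bookkeeping in the inductive step: precisely locating where, in the sorted list of rotations of $\rev(s_{2k+2})\b$, the ``old'' rotations (inherited from order $2k$) sit relative to the ``new'' ones, and proving that exactly two fresh run-boundaries are introduced and no old ones are destroyed. This requires careful control of lexicographic comparisons between rotations that share long common prefixes — exactly the regime where Fibonacci words are delicate — so I would need lemmas pinning down $\lcp$ values of the relevant rotations, likely again via the recursive identities for $x_h$. A secondary subtlety is handling the single rotation containing the appended $\b$ together with the original border of $v$ (the rotation beginning ...$\b\b$, which is the unique place the word differs from a genuine rotation of a Fibonacci word), and checking it contributes the expected boundary rather than an extra one; Figure~\ref{fig:fibins} suggests this is exactly the rotation at CA-position $4$ in that example, and its behavior should be pinned down separately in the base of the induction and then shown to persist.
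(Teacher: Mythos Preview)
This proposition is not proved in the present paper: it is quoted verbatim from \cite{GILPST21} (their Proposition~3) and used as a black box. So there is no ``paper's own proof'' to compare against here.

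That said, the paper does reveal something about the shape of the original proof. In the proof of Proposition~\ref{prop:furtherC_notinSigma_even} it invokes \cite[Prop.~3]{GILPST21} in the explicit form
\[
\bwt(\a v) \;=\; \b^{F_{2k-2}}\,\a\a\,\b^{F_{2k-4}}\cdots \b^{F_2}\,\a\,\b^{F_0}\,\a^{F_{2k}-k+1},
\]
for $v=\rev(s_{2k+1})$; counting runs in this word gives $2k$ directly. So the cited argument evidently produces a closed-form description of the BWT with Fibonacci run-lengths, rather than an induction that tracks only the \emph{number} of runs. This matches your ``alternative, and probably cleaner, route'' almost exactly, and the later Proposition~\ref{prop:removedC_bwtForm} in the present paper (for $\widehat{s}$) is carried out in precisely that style: partition the BWT matrix into blocks by common prefixes, use the identities $x_{2k}=x_{2k-1}\b\a x_{2k-2}=x_{2k-2}\a\b x_{2k-1}$ to locate each rotation, and read off the last characters block by block.

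Your primary plan --- an induction on $k$ showing that the matrix for order $2k$ embeds in that for order $2k+2$ with exactly two new run-boundaries --- is a genuinely different organization. It is plausible, but as you yourself flag, the hard part is proving that no \emph{old} run-boundary is destroyed when the new rotations are interleaved; controlling long $\lcp$'s between rotations of $\rev(s_{2k+2})\b$ that share a prefix of length $\Theta(F_{2k})$ is exactly where Fibonacci words bite. The closed-form route sidesteps this entirely, since once you know the full BWT string you just count its runs. If you pursue the inductive version, you will likely end up re-deriving most of the block structure anyway in order to pin down where the new rotations land, so the closed-form approach is both closer to the source and less work.
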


\bigskip

\noindent
A well-known property of each standard word is that its reverse is one of its rotations~\cite{deLuca97a}.
Since $s$ is a Fibonacci word, and thus a standard word, its reverse $v$ has the same BWT as $s$.Since $s$ is a standard word, $r(s)=2$, and therefore, also $r(v)=2$. Using the fact that the length of the $i$th Fibonacci word is $F_i$, and that the Fibonacci sequence $(F_i)_{i\geq 0}$ grows exponentially in $i$, it follows that by appending a final $\b$, the number of runs of the BWT is increased by a logarithmic factor in $n=|v|$, namely from $2=\Oh(1)$ to either $i$ (if $i=2k$, for some $k$) or $i-1$ (if $i=2k+1$, for some $k$), which are both $\Theta(\log n)$. 
\medskip

Similarly to Prop.~\ref{prop:furtherC_at_the_end}, we will prove that adding a character $\x$ greater than $\b$ and not present in the word has the same effect as adding a further $\b$ at the end of the reverse of a Fibonacci word of even order. Intuitively, this is because in both cases a new factor is introduced in the word, namely $\b\b$ respectively $\x$. Both these factors are greater than all the other factors of the word, and they are the only change in the word. Adding a further $\a$ to the reverse of a Fibonacci word of odd order, or a character smaller than $\a$, have a similar effect. We formalize this in the following proposition:

\bigskip

\begin{proposition}\label{prop:furtherC_notinSigma_even}\label{prop:furtherC_notinSigma_odd}
Let $v$ be the reverse of the Fibonacci word $s$, with $|v|\geq 2$. %
\begin{enumerate}
    \item If $s$ is of even order $2k$ and $\x > \b$, then $r(v\x) = 2k+1$. \newline
    \item If $s$ is of odd order $2k+1$ and $\x < \a$, then $2k+2\leq r(v\x) \leq 2k+3$.
\end{enumerate}
\end{proposition}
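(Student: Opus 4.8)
The plan is to derive both parts from Proposition~\ref{prop:furtherC_at_the_end} by comparing $\bwt(v\x)$ with $\bwt(v\b)$ in the even case, and with $\bwt(v\a)$ in the odd case. A preliminary reduction: since $\x$ does not occur in $v$, the rotation $\conj_n(v\x)=\x v$ is the only rotation of $v\x$ beginning with $\x$, hence the largest one when $\x>\b$ and the smallest one when $\x<\a$; moreover, for distinct $i,j<n$ neither of the words $(v\x)[i..n]=v[i..n-1]\x$ and $(v\x)[j..n]=v[j..n-1]\x$ is a prefix of the other (the single occurrence of $\x$ sits at their right ends), so the rotations of $v\x$ are sorted into the same order as the suffixes of $v\x$. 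I will also use the elementary fact that the last letter of $\conj_i(v\x)$ equals $v[i-1]$ when $1\le i\le n$ and equals $\x$ when $i=0$, and likewise with $\b$, resp.\ $\a$, in place of $\x$ when the appended letter is $\b$, resp.\ $\a$.

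Next I compare conjugate arrays. In the even case, write $v=\rev(s_{2k})=\b\a\,x_{2k}$ (using $s_{2k}=x_{2k}\a\b$ with $x_{2k}$ a palindrome); then $v$ begins with $\b$, has no factor $\b\b$, and satisfies $v[n-1]=\a$. The rotations of $v\x$ and of $v\b$ are in the natural correspondence $\conj_i(v\x)\leftrightarrow\conj_i(v\b)$, differing only in the single appended letter. The rotation $\b v=\conj_n$ begins with $\b\b$ while every other $\b$-rotation begins with $\b\a$, so $\conj_n$ is the largest rotation under both appended letters; and for the remaining rotations $\{\conj_i:i<n\}$ the relative order is independent of the appended letter, because, comparing two of them, either the comparison finishes before the appended position of either is reached (so only letters of $v$ are involved), or it first reaches the appended position of one of the two, and then --- using that $v$ has no factor $\b\b$ and that $v[n-1]=\a$ --- examining one further letter decides it exactly as the strictly larger letter $\x$ decides it at once. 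Hence the conjugate arrays of $v\x$ and $v\b$ coincide, and by the reduction $\bwt(v\x)$ is obtained from $\bwt(v\b)$ by turning a single $\b$ into $\x$, at the position occupied by the rotation $v\b=\conj_0$. In the odd case the same scheme works with $v=\rev(s_{2k+1})=\a\b\,x_{2k+1}$ (which begins with $\a$, avoids the factor $\a\a\a$, and ends with $\b\a$, so $v[n-1]=\a$ and $v[n-2]=\b$) and with $v\a$ in place of $v\b$; the only new point is that $\conj_n(v\x)=\x v$ is now the smallest rotation, so relative to the sorted rotations of $v\a$ --- where $\conj_n(v\a)=\a v$ sits above the unique $\a\a\a$-rotation $\conj_{n-1}(v\a)$ --- it is relocated to the very front. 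Thus $\bwt(v\x)$ is obtained from $\bwt(v\a)$ by moving the BWT-letter of $\conj_n$ (which is $v[n-1]=\a$) to the front of the string and simultaneously turning the BWT-letter of $\conj_0$ from $\a$ into $\x$.

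It remains to count runs. In the even case, starting from $r(v\b)=2k$, turning one occurrence of $\b$ into the fresh letter $\x$ changes $\runs$ by $0$, $1$, or $2$, according to whether that $\b$ is a run on its own, lies at a boundary of a longer $\b$-run, or lies in its interior; the crux is to show that the middle case applies to the position of $\conj_0=v\b$, and for this I would unfold the description of the sorted rotations of $v\b$ behind Proposition~\ref{prop:furtherC_at_the_end}, which situates $v\b$ among the $\b\a$-rotations and identifies its BWT-letter as the last letter of a $\b$-run of length at least two; this yields $r(v\x)=2k+1$. In the odd case, starting from $r(v\a)=2k$, moving the BWT-letter $\a$ of $\conj_n$ to the front creates exactly one new run --- the letter now at the front is the BWT-letter of $\conj_{n-1}(v\a)$, namely $v[n-2]=\b\neq\a$ --- and destroys none, as one checks (again from that structural description) that this $\a$ was not an isolated run in $\bwt(v\a)$; and turning the BWT-letter of $\conj_0$ from $\a$ into $\x$ adds, as above, one or two runs. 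Summing the two contributions gives $2k+2\le r(v\x)\le 2k+3$.

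The step I expect to be the real obstacle is the last one. For the odd case the coarse bounds above are enough, but pinning the even case down to the exact value $2k+1$ forces one to make the run structure of $\bwt(v\b)$ explicit --- in particular the precise rank of the rotation $v\b$ among the sorted rotations of $v\b$ --- rather than to invoke Proposition~\ref{prop:furtherC_at_the_end} as a black box. The conjugate-array comparison of the middle step, by contrast, is a routine though somewhat tedious longest-common-prefix analysis, driven entirely by the forbidden factors $\b\b$ (even order) and $\a\a\a$ (odd order) of Fibonacci words.
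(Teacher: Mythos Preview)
Your approach is essentially the paper's: compare $v\x$ with $v\b$ (even) resp.\ $v\a$ (odd), argue the conjugate array is preserved up to the one displaced rotation, and then read off the run change from the explicit BWT structure coming from~\cite{GILPST21}. The paper works with the conjugate $\x v$ rather than $v\x$, so its $\conj_0,\conj_1$ correspond to your $\conj_n,\conj_0$, but modulo this reindexing the two arguments coincide, including the appeal to the precise position of the altered rotation inside the known run decomposition of $\bwt(v\b)$ resp.\ $\bwt(v\a)$.

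One point to tighten: you write that for the odd case ``the coarse bounds above are enough,'' but your lower bound $2k+2$ needs the substitution step to add \emph{at least} one run, i.e.\ that the $\a$ in the BWT-position of $\conj_0$ is not a length-one run flanked by $\b$'s; and your ``destroys none'' claim for the move likewise needs that $\conj_n$'s $\a$ is not isolated. Both facts are true (in the formula $\bwt(v\a)=\b^{F_{2k-2}}\a\a\b^{F_{2k-4}}\cdots$, the rotation $\a v$ contributes the first $\a$ of the $\a\a$ block, and $v\a$ lands inside the final long $\a$-run), but they are not ``coarse'': they come from exactly the same structural description you already flagged as the real obstacle in the even case. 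So the odd case is no lighter than the even one --- you need to locate \emph{two} rotations rather than one. (Minor slip: $\conj_n(v\a)=\a v$ starts with $\a\a\b$ while $\conj_{n-1}(v\a)$ starts with $\a\a\a$, so $\conj_n$ sits \emph{below} $\conj_{n-1}$ in the matrix, not above; and it is not adjacent to it in general.)
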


\begin{proof}
Recall that $\bwt(v\x) = \bwt(\x v)$. For the proof we consider the conjugate $\x v$. 

Let us consider firstly the case in which the Fibonacci word $s$ is of even order $2k$. 
Let us consider the word $v'$ obtained by prepending the character $\b$ to $v$, i.e.  $v' = \b v$. 
If we denote by $n=|v'|$, then $v = v'[1..n-1]$, and $\x v$ and $v'$ differs only for the characters in position $0$. Moreover, $\x v[0] = \x \b$ is the lexicographically greatest 2-length substring of $v$, and $v'[0..1] = \b\b$ is the lexicographically greatest 2-length substring of $v'$. In particular, the relative lexicographic order of the conjugates $\conj_{h}(\x v)$ and $ \conj_{h}(v')$ with $0\leq h\leq  n-1$ is the same.  Therefore the BWT of $\x v$ and $v'$ differs only by the character preceding the conjugates starting in position $1$. By \cite[Proposition 4]{GILPST21}, the character preceding $\conj_{1}(v')$ is the last character of a run of $\b$'s, therefore the character $\x$ which precedes $\conj_{1}(\x v)$ adds only one run, namely the 1-length run of $\x$. Since the $r(v') = 2k$, then $\bwt(\x v)$ has the same $2k$ runs, plus the further run consisting of the single $\x$. Therefore $r(v\x) = r(\x v) =2k+1$. 

Let us consider now the case in which $s$ is of odd order $2k+1$. Let us consider the word $v' = \a v$ of length $n$. If $k=1$, it is easy to verify that $r(\a\b\a\a)=2$ and $r(\a\b\a\x)=4$, so $r(v\x)=2k+2$.
Let us suppose $k>1$. From \cite[Prop. 3]{GILPST21}, we know that $\bwt(\a v) = \b^{F_{2k-2}}\a\a\b^{F_{2k-4}} \cdots \b^{F_2}\a\b^{F_0}\a^{F_{2k}-k+1}$. Let us consider also the conjugate $\x v$ having the same $\bwt$ as $v\x$. Since $v = v'[1..n-1]$, then $\x v$ and $v'$ differs only for the characters in position $0$.
Note that $\x\a$ and $\a\a\a$ are the lexicographically smallest substrings of $\x v$ and $v'$, respectively. It follows that the relative lexicographic order of the conjugates $\conj_{h}(\x v)$ $\conj_{h}(v')$ with $1\leq h\leq n-1$ is the same.
The additional rotation $\x v$ starts with $\x$, and it is now the smallest among all rotations, therefore it will be at the very beginning of the matrix. 
Since $\x v$ ends with $\a$ and the lexicographically following rotations end with $\b$, it increments the number of runs by 1 with respect to the BWT of $v'$. On the other hand, the rotation $\conj_{0}(v')$ ends with $\a$ and follows all the rotations that start with $\a\a$ and end with $\b$, and precedes the rotation $\conj_{n-3}(v')$ that starts with $\a\b\a\a\a$ and ends with $\a$.
Additionally, the rotation ending with $\x$ contributes to $r$ with at most 2 more runs. 
This is because it either falls in between runs of two distinct characters, or within a run of a single character.
In total, $\bwt(\x v)$ has at most $2k+3$ runs, and $2k+2\leq r(\x v)= r(v\x) \leq 2k+3$.
 \end{proof}

 \begin{figure}[h]
	\centering
	\scalebox{.65}{
	\tikzset{every picture/.style={line width=0.75pt}} 

	\subfloat[Insert a $\tt{b} $ in position $F_{2k-1}-2$. \label{fig:ins_b}]
	{\parbox[b][5cm][t]{.45\textwidth}{	\begin{tikzpicture}[x=1pt,y=1pt,yscale=-1,xscale=1]
				
				\draw [color={rgb, 255:red, 126; green, 211; blue, 33 }  ,draw opacity=1 ][line width=3]    (150,26) -- (170,26) ;
				\draw [color={rgb, 255:red, 245; green, 166; blue, 35 }  ,draw opacity=1 ][line width=3]    (70,26) -- (132,26) ;
				\draw    (70,50) -- (150,50) ;
				\draw    (70,40) -- (70,50) ;
				\draw    (150,40) -- (150,50) ;
				\draw    (150,50) -- (190,50) ;
				\draw    (150,40) -- (150,50) ;
				\draw    (190,40) -- (190,50) ;
				
				\draw [color={rgb, 255:red, 245; green, 166; blue, 35 }  ,draw opacity=1 ][line width=3]    (130,85) -- (192,85) ;
				\draw [color={rgb, 255:red, 126; green, 211; blue, 33 }  ,draw opacity=1 ][line width=3]    (93,85) -- (113,85) ;
				
				\draw [color={rgb, 255:red, 245; green, 166; blue, 35 }  ,draw opacity=1 ][line width=3]    (130,125) -- (192,125) ;
				\draw [color={rgb, 255:red, 126; green, 211; blue, 33 }  ,draw opacity=1 ][line width=3]    (93,125) -- (113,125) ;

				\draw (49,22) node [anchor=north west][inner sep=0.75pt]   [align=left] {$s =$};
				\draw (133,22) node [anchor=north west][inner sep=0.75pt]   [align=left] {$\tt{b} \tt{a} $};
				\draw (171,22) node [anchor=north west][inner sep=0.75pt]   [align=left] {$\tt{a} \tt{b} $};			
				\draw (98,52) node [anchor=north west][inner sep=0.75pt]   [align=left] {$F_{2k-1}$};
				\draw (157,52) node [anchor=north west][inner sep=0.75pt]   [align=left] {$F_{2k-2}$};				
				\draw (36,81) node [anchor=north west][inner sep=0.75pt]   [align=left] {$rev(s) =$};
				\draw (75,81) node [anchor=north west][inner sep=0.75pt]   [align=left] {$\tt{b} \tt{a} $};
				\draw (116,81) node [anchor=north west][inner sep=0.75pt]   [align=left] {$\tt{a} \tt{b} $};
				\draw (30,121) node [anchor=north west][inner sep=0.75pt]   [align=left] {$\b  rev(s) =$};
				\draw (70,121) node [anchor=north west][inner sep=0.75pt]   [align=left] {$\tt{b} \tt{b} \tt{a} $};
				\draw (116,121) node [anchor=north west][inner sep=0.75pt]   [align=left] {$\tt{a} \tt{b} $};

			\end{tikzpicture}
	}}
	\qquad
	\subfloat[Insert an $\tt{a} $ in position $F_{2k}-2$\label{fig:ins_a}]
	{\parbox[b][5cm][t]{.45\textwidth}{
			\begin{tikzpicture}[x=1pt,y=1pt,yscale=-1,xscale=1]
				
				\draw [color={rgb, 255:red, 126; green, 211; blue, 33 }  ,draw opacity=1 ][line width=3]    (150,26) -- (170,26) ;
				\draw [color={rgb, 255:red, 245; green, 166; blue, 35 }  ,draw opacity=1 ][line width=3]    (70,26) -- (132,26) ;
				\draw    (70,50) -- (150,50) ;
				\draw    (70,40) -- (70,50) ;
				\draw    (150,40) -- (150,50) ;
				\draw    (150,50) -- (190,50) ;
				\draw    (150,40) -- (150,50) ;
				\draw    (190,40) -- (190,50) ;
				\draw (98,52) node [anchor=north west][inner sep=0.75pt]   [align=left] {$F_{2k-1}$};
				\draw (157,52) node [anchor=north west][inner sep=0.75pt]   [align=left] {$F_{2k-2}$};

				\draw [color={rgb, 255:red, 245; green, 166; blue, 35 }  ,draw opacity=1 ][line width=3]    (112,85) -- (172,85) ;
				\draw [color={rgb, 255:red, 126; green, 211; blue, 33 }  ,draw opacity=1 ][line width=3]    (72,85) -- (92,85) ;
				
				\draw [color={rgb, 255:red, 126; green, 211; blue, 33 }  ,draw opacity=1 ][line width=3]    (170,120) -- (190,120) ;
				\draw [color={rgb, 255:red, 245; green, 166; blue, 35 }  ,draw opacity=1 ][line width=3]    (90,120) -- (152,120) ;
				
				\draw [color={rgb, 255:red, 126; green, 211; blue, 33 }  ,draw opacity=1 ][line width=3]    (170,155) -- (190,155) ;
				\draw [color={rgb, 255:red, 245; green, 166; blue, 35 }  ,draw opacity=1 ][line width=3]    (90,155) -- (152,155) ;
				
				\draw (49,22) node [anchor=north west][inner sep=0.75pt]   [align=left] {$s =$};
				\draw (133,22) node [anchor=north west][inner sep=0.75pt]   [align=left] {$\tt{b} \tt{a} $};
				\draw (171,22) node [anchor=north west][inner sep=0.75pt]   [align=left] {$\tt{a} \tt{b} $};
				
				\draw (49,81) node [anchor=north west][inner sep=0.75pt]   [align=left] {$t =$};
				\draw (176,81) node [anchor=north west][inner sep=0.75pt]   [align=left] {$\tt{b} \tt{a} $};
				\draw (96,81) node [anchor=north west][inner sep=0.75pt]   [align=left] {$\tt{a} \tt{b} $};
				\draw (35,116) node [anchor=north west][inner sep=0.75pt]   [align=left] {$rev(t) =$};
				\draw (76,116) node [anchor=north west][inner sep=0.75pt]   [align=left] {$\tt{a} \tt{b} $};
				\draw (153,116) node [anchor=north west][inner sep=0.75pt]   [align=left] {$\tt{b} \tt{a} $};
				\draw (28,149) node [anchor=north west][inner sep=0.75pt]   [align=left] {$\a  rev(t) =$};
				\draw (70,149) node [anchor=north west][inner sep=0.75pt]   [align=left] {$\tt{a} \tt{a} \tt{b} $};
				\draw (153,149) node [anchor=north west][inner sep=0.75pt]   [align=left] {$\tt{b} \tt{a} $};

			\end{tikzpicture}
			
	}}
}	\caption{Example of adding a $\b$ and an $\a$ within the Fibonacci word of even order at positions $F_{2k-1}-2$ and $F_{2k}-2$, respectively. It causes a logarithmic increment of the number of $\bwt$-runs.}\label{fig:fib_increment}
\end{figure}

 The following proposition can be deduced from Prop.~\ref{prop:furtherC_at_the_end} and shows that there exist at least two positions in a Fibonacci word of even order where adding a character causes a logarithmic increment of $r$. In Fig. \ref{fig:fib_increment} these two positions are shown.

\bigskip
\begin{proposition}\label{prop:insert_infib}
Let $s$ be the Fibonacci word of even order $2k$, and $n=|s|$. Let $s'$ be the word resulting from inserting a $\b$ at position $F_{2k-1}-2$, and $s''$ the word resulting from inserting an $\a$ at position $F_{2k}-2$. Then $\bwt(s')$ and $\bwt(s'')$ have $\Theta(\log n)$ runs. 
\end{proposition}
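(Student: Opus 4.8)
The plan rests on two facts: the BWT is invariant under conjugacy, and Proposition~\ref{prop:furtherC_at_the_end} (combined with $r(s_{2k})=2$ and the exponential growth of $(F_i)$, so that $k=\Theta(\log n)$ with $n=F_{2k}$). The unifying observation is that inserting a character $c$ at position $i$ into a word $w$ of length $n$ yields $w[0..i-1]\,c\,w[i..n-1]$, which is a conjugate of $\conj_i(w)\cdot c$; hence the resulting word has exactly $r(\conj_i(w)\cdot c)$ BWT-runs. So for both $s'$ and $s''$ it suffices to identify the relevant conjugate of $s_{2k}$ and then invoke (a form of) Proposition~\ref{prop:furtherC_at_the_end}.

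For $s'$: the first step is to show $\conj_{F_{2k-1}-2}(s_{2k})=\rev(s_{2k})$. Writing $s_{2k}=s_{2k-1}s_{2k-2}$ with $s_{2k-1}=x_{2k-1}\b\a$, rotating by $|x_{2k-1}|=F_{2k-1}-2$ moves the palindromic prefix $x_{2k-1}$ to the end, so $\conj_{F_{2k-1}-2}(s_{2k})=\b\a\,s_{2k-2}\,x_{2k-1}=\b\a\,x_{2k-2}\a\b\,x_{2k-1}$; on the other hand $\rev(s_{2k})=\b\a\,\rev(x_{2k})=\b\a\,x_{2k}=\b\a\,x_{2k-2}\a\b\,x_{2k-1}$ by palindromicity of $x_{2k}$ and the identity $x_{2k}=x_{2k-2}\a\b\,x_{2k-1}$. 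These coincide, so $\bwt(s')=\bwt(\rev(s_{2k})\,\b)$ and Proposition~\ref{prop:furtherC_at_the_end} gives $r(s')=2k=\Theta(\log n)$.

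For $s''$: since $s_{2k}=x_{2k}\a\b$ we get $\conj_{F_{2k}-2}(s_{2k})=\a\b\,x_{2k}$, so $s''$ is conjugate to $\a\b\,x_{2k}\,\a$, hence to $\a\a\b\,x_{2k}$. Using $x_{2k}=x_{2k-1}\b\a\,x_{2k-2}$, $s_{2k-1}=x_{2k-1}\b\a$, $s_{2k-2}=x_{2k-2}\a\b$, one checks $\a\a\b\,x_{2k}=\a\cdot\rev(t)$, where $t:=\conj_{F_{2k-1}}(s_{2k})=s_{2k-2}s_{2k-1}=s_{2k-2}^2\,s_{2k-3}$ is itself a standard word, of odd order $2k-1$ and directive sequence $(1,\dots,1,2)$ of length $2k-2$, and $\rev(t)=\a\b\,x_{2k-1}\b\a\,x_{2k-2}$ is a conjugate of $\rev(s_{2k})$, hence of $s_{2k}$, so $r(\rev(t))=2$. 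Thus $r(s'')=r(\a\cdot\rev(t))$: the same situation as the odd-order case of Proposition~\ref{prop:furtherC_at_the_end}, except that $t$ is not a Fibonacci word. The upper bound $r(s'')=\Oh(\log n)$ is then immediate from the sensitivity bound of~\cite{AKAGI2023}, since $r(\rev(t))=2$ and $|t|=n$. For the matching lower bound $r(s'')=\Omega(\log n)$ I would mimic the proof of Proposition~\ref{prop:furtherC_at_the_end}, i.e.\ \cite[Prop.~3]{GILPST21}: that argument produces an explicit ``staircase'' expression for the BWT of $\rev(s)\cdot c$ whose number of steps tracks the directive sequence of $s$; carried out for $t$ (directive sequence of length $2k-2$) it yields $\Theta(k)=\Theta(\log n)$ runs.

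The main obstacle is exactly this last point. Part~1 is clean because the pertinent conjugate of $s_{2k}$ is literally $\rev(s_{2k})$, so Proposition~\ref{prop:furtherC_at_the_end} applies verbatim and even gives an exact count. In Part~2 the pertinent conjugate is $\rev(t)$ for the non-Fibonacci standard word $t=s_{2k-2}^2 s_{2k-3}$, so one needs either a generalization of Proposition~\ref{prop:furtherC_at_the_end} (equivalently of \cite[Prop.~3]{GILPST21}) to arbitrary standard words, or a direct computation of $\bwt(\a\cdot\rev(t))$. I expect this computation to proceed with essentially the same combinatorics, the only new ingredient being the single exponent-$2$ block coming from the last directive symbol, which perturbs the exact run count by $\Oh(1)$ but leaves the $\Theta(\log n)$ asymptotics untouched.
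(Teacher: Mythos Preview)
Your approach coincides with the paper's. For $s'$ the argument is identical: the paper also identifies $\conj_{F_{2k-1}-2}(s)=\rev(s)$ (via $v=\b\a\,x_{2k}$) and applies Proposition~\ref{prop:furtherC_at_the_end} to get $r(s')=2k$.

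For $s''$ you again reach the same reduction as the paper: both identify $\conj_{F_{2k}-2}(s)=\a\b\,x_{2k}=\rev(t)$ where $t=x_{2k}\b\a=s_{2k-2}s_{2k-1}$ is a standard word of odd order $2k-1$, so that $r(s'')=r(\rev(t)\,\a)$. The only difference is in how this last quantity is evaluated. You correctly flag that Proposition~\ref{prop:furtherC_at_the_end} does not apply verbatim because $t$ is not Fibonacci, and you propose to rerun the staircase computation of \cite[Prop.~3]{GILPST21} for the directive sequence $(1,\dots,1,2)$. That plan is sound, but it is also unnecessary: the paper simply invokes \cite[Prop.~8]{GILPST21}, which is precisely the extension to this standard word and gives the exact value $r(\rev(t)\,\a)=2k-2$. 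So the ``main obstacle'' you identify is already handled by an existing result in the same reference you are citing; with that citation your argument becomes complete and matches the paper's, including the exact run counts (your upper bound via~\cite{AKAGI2023} and the sketched lower bound are then superfluous).
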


\begin{proof}
It is known that each standard word and its reverse are conjugate. Let us consider $s=x_{2k} \a \b$, where $x_{2k}$ is palindrome. Moreover, from known properties on Fibonacci words, $x_{2k}=x_{2k-1}\b \a x_{2k-2}=x_{2k-2} \a \b x_{2k-1}$ with $x_{2k-1}$ and $x_{2k-2}$ palindrome, as well. Let $v= \b \a x_{2k}$ be the reverse of $s$. From Proposition \ref{prop:furtherC_at_the_end}, we know that $r(v\b)=2k$. One can verify that appending a $\b$ to $v$
is equivalent to inserting $\b$ at position $F_{2k-1}-2$ of $s$.  On the other hand, the word $t=x_{2k}\b\a$ is a conjugate of both $v$ and $s$. Moreover, by using properties of directive sequences, $t$ is a standard word of odd order $2k-1$ \cite{BersteldeLuca97}. It is easy to verify that $\rev(t)=\conj_{F_{2k}-2}(s)$. By~\cite[Proposition 8]{GILPST21},  $r(\rev(t)\a)=2k-2=\Theta(\log n)$, and for similar considerations as above, appending an $\a$ to $\rev(t)$ is equivalent to inserting $\a$ at position $F_{2k}-2$ in $s$. 
\end{proof}

An analogous result to Prop.~\ref{prop:insert_infib} can be proved for Fibonacci words of odd order. 

\subsection{Deleting a character}\label{subsec:fibdeletion}
We next show that deleting a character can result in a logarithmic increment in $r$. In particular, we consider a Fibonacci word of even order and compute the form of its BWT, as shown in the following proposition.

\bigskip
\begin{proposition}\label{prop:removedC_bwtForm}
Let $s$ be the Fibonacci word of even order $2k>4$ and $\widehat{s} = s[0..n-2]$, where $n=|s|$. Then $\bwt(\widehat{s})$ has the following form:  $$\bwt(\widehat{s}) = \b^{k-1}\a\b^{F_{2k-3}-k+1}\a\b^{F_{2k-5}} \cdots \b^{F_5}\a\b^{F_3}\a\b\a^{F_{2k-1}-k+1}.$$ 
Therefore, $\bwt(\widehat{s})$ has $2k$ runs.
\end{proposition}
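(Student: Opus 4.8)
The plan is to compute $\bwt(\widehat{s})$ directly by understanding the conjugate array of $\widehat{s}$, exploiting the recursive structure of Fibonacci/standard words. Recall that $s = s_{2k}$ is the Fibonacci word of even order $2k$, so $|s| = F_{2k}$ and $\widehat{s} = s[0..F_{2k}-2]$, which has length $F_{2k}-1 = F_{2k-1}+F_{2k-2}-1$. Since $s_{2k} = x_{2k}\a\b$ with $x_{2k}$ a palindrome, we have $\widehat{s} = x_{2k}\a$, i.e.\ $\widehat{s}$ is obtained from $s$ by deleting the final $\b$. The first key observation is to relate $\widehat{s} = x_{2k}\a$ to a known word: using $x_{2k} = x_{2k-2}\a\b x_{2k-1}$ (the recursion for central words), or alternatively by noting that $x_{2k}\a$ is a conjugate-related object. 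Actually the cleanest route is via the reverse and Prop.~\ref{prop:furtherC_at_the_end}: $\widehat{s} = x_{2k}\a$, and $\rev(\widehat{s}) = \a\, \rev(x_{2k}) = \a x_{2k}$ since $x_{2k}$ is a palindrome. Now $\a x_{2k}$ is precisely $\widehat{s}$ read... no — $\rev(\widehat{s}) = \a x_{2k}$, whereas $\widehat{s} = x_{2k}\a$, so $\widehat{s}$ and $\rev(\widehat{s})$ are conjugates of one another, hence have the same BWT. This does not yet give the form, so I need a more computational handle.

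The main approach I would take is induction on $k$, computing $\bwt(\widehat{s_{2k}})$ from $\bwt(\widehat{s_{2k-2}})$ (or relating to the BWT of lower-order standard words). Concretely, I would set up the $F_{2k}-1$ rotations of $\widehat{s} = x_{2k}\a$, classify them by their first few characters (every binary Sturmian-type word has circular factors that are easy to sort, and the number of distinct circular factors of each length is linear), and identify where the single ``extra'' $\a$ (coming from the fact that $\widehat{s}$ has one more $\a$-proportion than a genuine standard word) perturbs the run structure. The target formula $\bwt(\widehat{s}) = \b^{k-1}\a\b^{F_{2k-3}-k+1}\a\b^{F_{2k-5}}\cdots\b^{F_5}\a\b^{F_3}\a\b\a^{F_{2k-1}-k+1}$ strongly suggests comparing with $\bwt$ of a standard word: recall from Prop.~\ref{prop:furtherC_at_the_end}'s proof (the cited \cite[Prop.~3]{GILPST21}) that $\bwt(\a\, \rev(s_{2k})) = \b^{F_{2k-2}}\a\a\b^{F_{2k-4}}\cdots\b^{F_2}\a\b^{F_0}\a^{F_{2k}-k+1}$ — a formula of exactly the same shape but with index shifted by one and with an $\a\a$ block in the middle rather than the leading $\b^{k-1}$. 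So I would try to show that $\widehat{s_{2k}}$ is a conjugate of (or has BWT closely derivable from) something like $\a\,\rev(s_{2k-1})$ or a standard word of order $2k-1$, and then track how deleting one character merges/splits the runs, accounting for exactly the $\b^{k-1}$ prefix and the fact that $k-1$ of the $\b$-runs in the ``full'' formula each lose... one $\b$? — the exponents $F_{2k-3}-k+1$, $F_{2k-5}$, $F_{2k-7}$, ... indicate only the first large $\b$-run is shortened by $k-1$ and the leading $\b^{k-1}$ is pulled out front, i.e.\ the rotations $x_{2k}\a$ that begin with $\a^j\b$ for $j=1,\dots,k-1$ are the ones whose preceding character is $\b$ and that got reshuffled to the top of the matrix.

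More precisely, here is the step list I would follow. (1) Write $\widehat{s} = x_{2k}\a$ and enumerate its circular factors of small length, using the recursion $x_{2k} = x_{2k-1}\b\a x_{2k-2} = x_{2k-2}\a\b x_{2k-1}$; in particular determine, for each $j$, which rotations begin with $\a^j$ and which with $\a^j\b$. (2) Observe that the maximal power of $\a$ that is a circular factor of $\widehat{s} = x_{2k}\a$ is $\a^{k}$ (or $\a^{k-1}$ — I'd pin down the exact exponent from the directive sequence; for the true Fibonacci word $s_{2k}$ the longest $\a$-run is $2$, but $\widehat{s}$ wraps around so $x_{2k}\a\cdot x_{2k}\a$ contains $\a\a$ at the seam, giving at most $\a^3$... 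I need to be careful here and this is where the $k-1$ really comes from — it is the number of distinct rotations starting with $\a\a$, not the length of an $\a$-run). Rethinking: in $\bwt(\widehat s)$ the $k-1$ leading $\b$'s are the characters preceding the $k-1$ smallest rotations, which begin $\a\a\ldots$; so (2') count the rotations of $x_{2k}\a$ beginning with $\a\a$ and show there are exactly $k-1$ of them, each preceded by $\b$. (3) The next rotation (the $k$-th smallest) begins $\a\b\ldots$ and is preceded by $\a$ — this gives the lone $\a$ after $\b^{k-1}$. (4) For the remaining rotations, show the run structure mirrors, block by block, that of a standard word of order $2k-1$, via the known formula $\bwt(\a\,\rev(s_{2k-1}))$ cited above, with the exponents $F_{2k-3}, F_{2k-5}, \dots, F_3, F_0$; reconcile the ``$-k+1$'' correction on the first block with the $k-1$ $\b$'s removed to the front and the final $\a$-run length $F_{2k-1}-k+1$ by a global character-count check ($\#\a$ in $\widehat{s}$ equals $F_{2k-1}$, $\#\b$ equals $F_{2k-2}-1$). (5) Count runs in the resulting formula: the $\a$-letters appearing as singletons number $k-1$ (for the $\b$-prefix) plus... — sum it up to get $2k$.

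The hard part will be step (4): rigorously transferring the known BWT formula for a standard/reverse-standard word to $\widehat{s}$, i.e.\ proving that apart from the explicitly described perturbation (the $\b^{k-1}\a$ prefix and the shortening of the first $\b$-block by $k-1$) the lexicographic order of all other rotations of $\widehat{s}$ agrees with that of the corresponding word whose BWT is already known. This requires a careful comparison of rotations of $x_{2k}\a$ with rotations of the standard word, showing the ``missing'' final $\b$ of $s_{2k}$ only affects $O(k)$ of the rotations near the lexicographic extremes and leaves the relative order of the rest untouched. An alternative, possibly cleaner, route for this step is to use the fact (from \cite{MantaciRS03}) that $r(v)=2$ iff $v$ is a conjugate of a power of a standard word, together with an explicit description of where a single deletion can split each of the $2$ runs — but since the increase is to $2k$, not $O(1)$, one deletion must simultaneously interact with $\Theta(k)$ ``levels'' of the Fibonacci recursion, so I expect the inductive/structural comparison to be unavoidable and to constitute the bulk of the proof. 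The final run count ``$2k$ runs'' then follows immediately by inspection of the displayed formula, since it is an alternating sequence $\b^+\a\b^+\a\cdots\b^+\a\b\a^+$ with exactly $k$ $\b$-blocks and $k$ $\a$-blocks (the $k-1$ singleton $\a$'s in the middle, the first $\b^{k-1}$ block, and the trailing $\a^{F_{2k-1}-k+1}$), giving $2k$ maximal runs.
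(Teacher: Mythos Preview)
Your plan misreads the target formula, and this derails steps (2') and (3). You propose to show that exactly $k-1$ rotations of $\widehat{s}$ begin with $\a\a$, each preceded by $\b$, and that the $k$-th smallest rotation begins $\a\b$. Both claims are false. Since $\widehat{s}$ contains $F_{2k-1}$ letters $\a$, $F_{2k-2}-1$ letters $\b$, and no circular factor $\b\b$, there are $F_{2k-3}+1$ rotations beginning with $\a\a$, and the \emph{entire} prefix $\b^{k-1}\,\a\,\b^{F_{2k-3}-k+1}$ of $\bwt(\widehat{s})$ accounts for all of them. The $k$ smallest are the rotations starting $\a\a\a$, $\a x_4\a\a$, $\a x_6\a\a$, \ldots, $\a x_{2k-2}\a\a$, $\a x_{2k}$, and only the last of these---which still begins $\a\a$, since $x_{2k}$ starts with $\a$---is preceded by $\a$; the remaining $F_{2k-3}-k+1$ rotations with prefix $\a\a$ are each preceded by $\b$. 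So the isolated $\a$ in position $k$ of the BWT comes from a rotation that begins $\a\a$, not $\a\b$. Your confusion in (2) about the longest $\a$-run is related: the unique circular factor $\a\a\a$ (length~$3$, not~$k$) is precisely what singles out those $k$ smallest rotations.

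This miscount also undermines step (4). The transfer from $\bwt(\a\,\rev(s_{2k-1}))$ cannot work as stated: that word has length $F_{2k-1}+1 \neq F_{2k}-1 = |\widehat{s}|$, so it is not a conjugate of $\widehat{s}$, and you identify no other reference word with a known BWT formula. The paper attempts no such reduction. Instead it decomposes the BWT matrix directly into a top part (rotations with prefix $\a\a$), a middle part (prefix $\a\b$), and a bottom part (prefix $\b$), driven by the factorisation $s = x_{2k-1}\b\a\,x_{2k-3}\b\a\cdots x_5\b\a\,s_4$. In the middle part, the rotations with prefix $\a\b$ are grouped by the largest $h$ for which $x_h$ is a prefix; for each even $h$ with $4\le h\le 2k-2$ the single rotation beginning $x_h\a\a$ (preceded by $\a$) is followed by a block of rotations beginning $x_{h+1}\a$ (each preceded by $\b$), yielding $\a\b^{F_{2k-5}}\a\b^{F_{2k-7}}\cdots\a\b$. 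A Fibonacci-sum identity then shows all $F_{2k-2}-1$ letters $\b$ have been placed, so the bottom part is the single run $\a^{F_{2k-1}-k+1}$.
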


\bigskip

To give the proof, we divide the BWT matrix of the word $\widehat{s}$ in three parts: {\em top}, {\em middle} and {\em bottom part}, showing the form of each part separately:
\begin{align*}
    \bwt_{\text{top}}(\widehat{s}) = & \b^{k-1}\a\b^{F_{2k-3}-k+1} \text{, consisting of 3 runs,}\\
    \bwt_{\text{mid}}(\widehat{s}) = & \a\b^{F_{2k-5}}\a\b^{F_{2k-7}}\cdots\a\b \text{, consisting of } 2(k-2) \text{ runs,}\\
    \bwt_{\text{bot}}(\widehat{s}) = & \a^{F_{2k-1}-k+1} \text{, consisting of 1 run}.
\end{align*}

Altogether, we then have $3+2(k-2)+1 = 2k$ runs.

In order to describe the structure of the matrix, we start with the following lemma that provides information on the structure of $s$.

\bigskip
\begin{lemma}\label{le:structure_s}
    Let $s$ be the Fibonacci word of even order $2k>4$ and  $n=|s|$. Then, $s$ can be factorized as follows:
    $$s=x_{2k-1}\b\a x_{2k-3}\b\a\cdots x_7\b\a x_{5}\b\a s_4=x_{2k-2}\a\b x_{2k-3}\b\a x_{2k-4}\a\b \cdots x_4\a\b x_{3}\b\a s_4,$$
where $x_i$ denotes the central word of order $i$, with $3\leq i\leq 2k-1$ and $s_4=x_4\a\b=\a\b\a\a\b$ is the Fibonacci word of order $4$.
\end{lemma}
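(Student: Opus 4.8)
The statement to establish is the double factorization of the Fibonacci word $s=s_{2k}$ of even order $2k>4$ given in Lemma~\ref{le:structure_s}. The natural strategy is induction on $k$, driven by the recursive identities for central words recalled at the end of Section~\ref{sec:basics}, namely $x_{2j}=x_{2j-1}\b\a x_{2j-2}=x_{2j-2}\a\b x_{2j-1}$ and $x_{2j+1}=x_{2j}\a\b x_{2j-1}=x_{2j-1}\b\a x_{2j}$, together with $s_{2j}=x_{2j}\a\b$ and $s_{2j+1}=x_{2j+1}\b\a$. First I would verify the base case $k=3$ directly: $s_6=x_6\a\b$, and expanding $x_6=x_5\b\a x_4$ gives $s_6=x_5\b\a x_4\a\b = x_5\b\a s_4$, which matches the first claimed product (the product telescopes to a single block $x_5\b\a s_4$ when $2k-1=5$); the second product for $k=3$ follows from $x_6=x_4\a\b x_5$, so $s_6=x_4\a\b x_5\a\b$, and one checks $x_5\a\b = x_3\b\a s_4$ using $x_5 = x_4\a\b x_3 = \ldots$, reducing again to the recursions. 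These are short finite checks.

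For the inductive step, assume the factorization holds for $s_{2k-2}$ and derive it for $s_{2k}$. The key is that $s_{2k}=x_{2k}\a\b$ and, peeling one level of the central-word recursion, $x_{2k}=x_{2k-1}\b\a x_{2k-2}$, so $s_{2k}=x_{2k-1}\b\a x_{2k-2}\a\b = x_{2k-1}\b\a\, s_{2k-2}$. Now substitute the inductive hypothesis for $s_{2k-2}$ — its first factorization is exactly $x_{2k-3}\b\a x_{2k-5}\b\a\cdots x_7\b\a x_5\b\a s_4$ — and prepending $x_{2k-1}\b\a$ yields precisely the first claimed factorization of $s_{2k}$. For the second (alternating $\a\b$ / $\b\a$) factorization, I would instead use $x_{2k}=x_{2k-2}\a\b x_{2k-1}$, giving $s_{2k}=x_{2k-2}\a\b x_{2k-1}\a\b$; then I need to rewrite the tail $x_{2k-1}\a\b$. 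Using $x_{2k-1}=x_{2k-2}\a\b x_{2k-3}$ repeatedly (or more directly $x_{2k-1}=x_{2k-3}\b\a x_{2k-2}$... here care is needed to pick the orientation that produces the pattern $x_{2k-3}\b\a x_{2k-4}\a\b\cdots$), one peels off blocks $x_{2k-3}\b\a$, $x_{2k-4}\a\b$, and so on, alternating, until reaching $x_3\b\a s_4$; this is again just iterated application of the two central-word recursions, and it can itself be phrased as a sub-induction or, more cleanly, derived from the already-proven first factorization by observing that both products describe the same word $s_{2k}$ and that replacing each $x_{2j-1}$ by $x_{2j-2}\a\b x_{2j-3}$ (alternating with $x_{2j-2}=x_{2j-3}\b\a x_{2j-4}$) transforms one into the other.

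The main obstacle is purely bookkeeping: keeping the indices, the orientations ($\b\a$ versus $\a\b$), and the parity aligned so that the peeled blocks match the stated pattern exactly, and making sure the recursion terminates at the correct base block $s_4=x_4\a\b=\a\b\a\a\b$ rather than over- or under-shooting by one level. There is no conceptual difficulty — every step is an instance of a known identity for central/standard words — but the two factorizations have different "phases," so I would present the first one fully by the clean induction above and then obtain the second either by a parallel induction or by a short lemma stating that $x_{2j-1}\b\a = x_{2j-2}\a\b x_{2j-3}\b\a\cdot(\text{shorter tail})$, applied iteratively. I would also note explicitly that $2k-1\ge 5$ and $2k-2\ge 4$ under the hypothesis $2k>4$, so all central words $x_i$ appearing have $i\ge 3$ and are well defined, and that $s_4$ is reached because the recursion in each factorization decreases the order by $2$ at each block, bottoming out at order $4$.
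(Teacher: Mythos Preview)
Your proposal is correct and follows essentially the same approach as the paper: induction on $k$, using the central-word recursions and the identity $s_{2k}=x_{2k-1}\b\a\, s_{2k-2}$ (equivalently, the paper writes $s_{2k+2}=s_{2k+1}s_{2k}$ with $s_{2k+1}=x_{2k+1}\b\a$) to prepend the new block and invoke the inductive hypothesis. For the second factorization the paper is slightly more direct---it uses $s_{2k+1}=x_{2k}\a\b x_{2k-1}\b\a$ to produce both new leading blocks at once before applying the second-form inductive hypothesis to $s_{2k}$---whereas you reach the same conclusion by first writing $s_{2k}=x_{2k-2}\a\b x_{2k-1}\a\b$ and then observing $x_{2k-1}\a\b=x_{2k-3}\b\a\, s_{2k-2}$; both routes are just repeated applications of the same identities and differ only in bookkeeping.
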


\begin{proof}
It follows by using the induction on $k$ and the fact that $s=s_{2k-1}s_{2k-2}$ and $s=x_{2k}\a\b$, where  $x_{2k}=x_{2k-1}{\tt ba}x_{2k-2}=x_{2k-2}{\tt ab}x_{2k-1}$ and $x_{2k-1}=x_{2k-3}{\tt ba}x_{2k-2}=x_{2k-2}{\tt ab}x_{2k-3}$.
In fact, the equality $s_6=x_5\b\a x_4 \a\b$ holds since $s_4=x_4 \a\b$. On the other hand, $s_6=x_4\a \b x_5\a\b=x_4 \a\b x_3 \b \a x_4 \a \b$. Since $s_{2k+2}=s_{2k+1}s_{2k}$ and $s_{2k+1}=x_{2k+1}\b\a$, we have that 
$s_{2k+2}=x_{2k+1}\b\a x_{2k-1}\b\a x_{2k-3}\b\a\cdots x_7\b\a x_{5}\b\a s_4b$. On the other hand, $s_{2k+2}=s_{2k+1}x_{2k-2}\a\b x_{2k-3}\b\a x_{2k-4}\a\b \cdots x_4\a\b x_{3}\b\a s_4$. The thesis follows from the fact that $s_{2k+1}=x_{2k+1}\b\a=x_{2k}{\tt ab}x_{2k-1}\b\a$.
\end{proof}
 We identify the following 3 conjugates of the word $\widehat{s}=s[0..n-2]$ of length $n-1$ that delimit the 3 parts of the BWT matrix of the word: 
\begin{align*}
\conj_{n-3}(\widehat{s}) =& \a \a x_{2k-1}\b\a x_{2k-3}\b\a \cdots x_5\b\a \a\b,\\ 
\conj_{n-5}(\widehat{s}) =& x_4 \a x_{2k-1} \cdots x_5 \b\a\\ 
\conj_{0}(\widehat{s}) = &x_{2k}\a
\end{align*}
The structure of these 3 conjugates follows from Lemma \ref{le:structure_s}. It is easy to see that $\conj_{n-3}(\widehat{s}) < \conj_{n-5}(\widehat{s}) < \conj_{0}(\widehat{s})$.
The rotation $\conj_{n-3}(\widehat{s})$ , starting with $\a\a x_{2k-1}$
is the smallest rotation in the matrix due to the unique $\a\a\a$ prefix. The rotation $\conj_{n-5}(\widehat{s})$, starting with $x_4=\a\b\a$ indicates the beginning of the middle part, and it is the smallest rotation starting with $\a\b$. 
Finally, the word itself $\widehat{s} = x_{2k}\a$ determines the beginning of the bottom part, namely the last long run of $\a$'s in the BWT.

The top part of the matrix consists of all rotations of the word starting with $\a\a$.  We give first the following lemma characterizing all occurrences of the factor $\a\a$ in $\widehat{s}$.

\bigskip 
\begin{lemma}\label{le:x_h_even}
    Let $\widehat{s}=s[0..n-2]$, where $s$ is the Fibonacci word of order $2k>4$ and $n=|s|$.  Every occurrence of the circular factor $\a\a$ in $\widehat{s}$ is an occurrence of $\a x_{2h}$ for some $2\leq h\leq k$. For every $2\leq h\leq k-1$ there is exactly one occurrence of $\a x_{2h}$ followed by $\a\a$.
\end{lemma}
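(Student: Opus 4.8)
The plan is to exploit the recursive structure of the central words, namely $x_{2h}=x_{2h-1}\b\a x_{2h-2}=x_{2h-2}\a\b x_{2h-1}$ and $x_{2h+1}=x_{2h}\a\b x_{2h-1}=x_{2h-1}\b\a x_{2h}$, together with the well-known fact that every Fibonacci word (and every central word) is $\a\b\a$-free of cubes and more specifically that $\a\a$ occurs in $s$ only as the ``middle'' of an occurrence of some $\a x_{2h}\a$ — equivalently, that $\a\a\a$ never occurs in $s$ but $\a\a$ does. First I would fix the inductive claim in the form: in $s_{2k}$ (or in the circular word $\widehat s$), each occurrence of $\a\a$ is the occurrence of $\a x_{2h}$ (reading the $\a$ of the first $\a$ and then the $x_{2h}$ that starts $\a\ldots$, since $x_{2h}$ begins with $\a$ for $h\ge 2$; note $x_2=\epsilon$ gives the base case $\a\a=\a x_2\a$). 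The combinatorial fact underlying this is classical: the set of factors of a Fibonacci word of a given length that are ``special'' is tightly controlled, and the occurrences of $\a\a$ are exactly the positions where a run of $\a$'s of length $2$ sits; these are known to be at the centres of palindromic factors of the form $\a x_{2h}\a$.

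The induction would go on $k$ (or on the order of the standard word), using Lemma~\ref{le:structure_s} to locate all occurrences of $\a\a$ explicitly. Indeed, Lemma~\ref{le:structure_s} factorises $s=x_{2k-1}\b\a\,x_{2k-3}\b\a\cdots x_5\b\a\,s_4$; inside each $x_{2j-1}$ we recurse (since $x_{2j-1}=x_{2j-3}\b\a x_{2j-2}$, and the relevant $\a\a$'s live inside the $x_{2j-2}$ block and at its boundaries), and the $\b\a$ separators together with the leading/trailing letters of the adjacent central words never create a new $\a\a$ that is not already of the prescribed form. One checks the base case $s_4=\a\b\a\a\b$ by hand: the unique $\a\a$ there is $\a x_2\a$ with $h=2$. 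For the circular factor issue — the ``wrap-around'' occurrence of $\a\a$ in $\widehat s$ created by deleting the last letter of $s=x_{2k}\a\b$, so that $\widehat s=x_{2k}\a$ and the circular word reads $\ldots\a\,x_{2k}\,\a\ldots$ — one observes directly that this gives precisely an occurrence of $\a x_{2k}$ with $h=k$, which is why the range is $2\le h\le k$ rather than $2\le h\le k-1$.

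For the second sentence — that for each $2\le h\le k-1$ there is exactly one occurrence of $\a x_{2h}$ that is immediately followed by $\a\a$ — I would again read off the factorisation. An occurrence of $\a x_{2h}$ is followed by $\a\a$ exactly when the next two circular letters are $\a\a$, i.e.\ when this occurrence of $\a x_{2h}$ extends to $\a x_{2h}\a x_{2h'}\cdots$ with the continuing word beginning $\a\a$; using $x_{2h+1}=x_{2h-1}\b\a x_{2h}$ and $x_{2h+2}=x_{2h}\a\b x_{2h+1}=x_{2h}\a\b x_{2h-1}\b\a x_{2h}$ one sees that within $s$ the factor $x_{2h}\a x_{2h}$ (the two copies of $x_{2h}$ flanking a single $\a$, which is the pattern that makes $\a x_{2h}$ be followed by $\a$, and then the second $x_{2h}$ starts with $\a$ giving $\a x_{2h}\cdot\a\a$) occurs exactly once. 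Concretely, $x_{2h}\a x_{2h}$ is a circular factor of $x_{2h+2}$ and, by the ``singularity'' of extensions in standard words, extends uniquely; tracking this single occurrence up through the factorisation of Lemma~\ref{le:structure_s} shows it survives as a unique occurrence in $\widehat s$, and that the $h=k$ case is the one excluded because the corresponding continuation would need $x_{2k}\a x_{2k}$, which does not fit in a word of length $|s|-1$.

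The main obstacle I expect is the bookkeeping at the seams of the factorisation in Lemma~\ref{le:structure_s}: one must be careful that no occurrence of $\a\a$ is created by the juxtaposition of a central word ending in $\a$ with a following $\b\a$ or with a following central word, and conversely that every genuine $\a\a$ is captured by exactly one block of the recursion (no double counting across a seam). Handling the wrap-around copy of $\a\a$ coming from the deletion — i.e.\ keeping straight the difference between factors of $s$ and circular factors of $\widehat s$ — is the other delicate point, and is precisely what forces the upper index to be $k$ in the first claim but $k-1$ in the second.
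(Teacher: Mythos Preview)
Your plan for the first sentence is essentially the paper's: both rely on the factorisation of Lemma~\ref{le:structure_s} and the recursive structure of central words to locate every circular occurrence of $\a\a$. The paper is more concrete---it observes directly that each new $\a\a$ in $s$ appears at the seam $s_{2h-1}\cdot s_{2h-2}$ (since $s_{2h-1}$ ends in $\a$ and $s_{2h-2}$ begins with $\a$) and then partitions the resulting occurrences according to what follows $\a x_{2h}$---but your inductive outline heads to the same place, and you correctly identify the wrap-around occurrence as the $h=k$ case.

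For the second sentence, however, your reduction to counting occurrences of $x_{2h}\a x_{2h}$ is a genuine gap. You argue that an occurrence of $\a x_{2h}\a x_{2h}$ yields one of $\a x_{2h}\a\a$ (since $x_{2h}$ begins with $\a$), but that implication goes only one way; you never establish the converse, and the claim that $x_{2h}\a x_{2h}$ occurs exactly once is asserted via an unspecified ``singularity of extensions'' rather than proved. The paper's route is shorter and avoids this pattern entirely: since $x_{2h}$ ends in $\b\a$ for every $h\ge 2$, any circular occurrence of $\a x_{2h}\a\a$ must contain the factor $\a\a\a$; but $\a\a\a$ occurs exactly once as a circular factor of $\widehat s$ (at position $n-3$, created by the deletion), so uniqueness is immediate. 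Existence is then read off from Lemma~\ref{le:structure_s}: the $\a$ preceding each explicit block $x_{2h-1}$ in that factorisation starts an occurrence of $\a\widehat{s_{2h}}=\a x_{2h}\a$, which continues circularly to $\a x_{2h}\a\a$. You already note in passing that $\a\a\a$ does not occur in $s$; the fix is to exploit its uniqueness in the circular word $\widehat s$ directly, instead of routing through $x_{2h}\a x_{2h}$.
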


\begin{proof}
By using Lemma \ref{le:structure_s}. we have that $\widehat{s}=x_{2k-1}\b\a x_{2k-3}\b\a\cdots x_7\b\a x_{5}\b\a x_4\a$.  It follows from the recursive construction of s that occurrences of $\a\a$ are generated whenever we create $s_{2h} = s_{2h-1}s_{2h-2}=x_{2h-1}\b\a x_{2h-2}\a\b$. This is because each central word $x_i$ starts with $\a\b$ and ends with $\b\a$. Therefore, it follows from the structure of $\widehat{s}$ that we can partition all its occurrences of $\a\a$ into four disjoint sets: the occurrence of $\a x_{2k}$, the occurrences of $\a x_{2h}$ followed by $\a\b$, the ones followed by $\b\a$, the ones followed by $\a\a$, with $4\leq h\leq k-1$. Since $x_{2h}$ ends with $\a$ and the factor $\a\a\a$ occurs only at position $n-3$ in $\widehat{s}$, we can state that the last set of occurrences corresponds to the occurrences of $\a x_{2h}\a\a$ (with $h=2,\ldots,k-1$). We have only one occurrence for each of these factors. In fact, $\a x_4\a\a$ occurs at position $n-6$. Moreover, for $h=3,\ldots,k-1$, every $\a x_{2h-1}$ in the previous factorization of $\widehat{s}$ is an occurrence of $\a \widehat{s_{2h}}$, then it is an occurrence of the circular factor $\a x_{2h}\a\a$. 
\end{proof}

We are now going to show that only one of the rotations of $\widehat{s}$ starting with $\a\a$ ends with an $\a$, and we show where the $\a$ in the BWT of the top part breaks the run of $\b$'s. 

\bigskip

\begin{lemma}[Top part]\label{lemma:top}
Given $\widehat{s}=s[0..n-2]$, where $s$ is the Fibonacci word of order $2k$ and $n=|s|$, then the first $k$ rotations in the BWT matrix are $\a\a\a \cdots \b < \a x_4\a\a \cdots \b < \a x_{6}\a\a \cdots \b < \ldots < \a x_{2k-2}\a\a \cdots \b < \a x_{2k}$. All other $F_{2k-3}-k+1$ rotations starting with $\a\a$ end with a \b.

\end{lemma}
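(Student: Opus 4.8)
The plan is to combine the factorization of $\widehat{s}$ from Lemma~\ref{le:structure_s} with the classification of occurrences of $\a\a$ from Lemma~\ref{le:x_h_even}. First I would recall that the rotations of $\widehat{s}$ starting with $\a\a$ are exactly the rotations at the occurrences of $\a\a$, and by Lemma~\ref{le:x_h_even} each such occurrence is (the start of) an occurrence of $\a x_{2h}$ for some $2\le h\le k$, with a \emph{unique} occurrence of $\a x_{2h}\a\a$ for each $2\le h\le k-1$, the unique occurrence of $\a x_{2k}$ (which is $\widehat{s}$ itself, since $\widehat{s}=x_{2k}\a$), and all remaining occurrences of $\a x_{2h}$ (for $3\le h\le k-1$, those embedded as $\a x_{2h}\a\b$ or $\a x_{2h}\b\a$) together numbering $F_{2k-3}-k+1$ by a counting argument: the total number of occurrences of $\a\a$ in $\widehat{s}$ equals the number of $\a$'s minus one minus the number of $\a\b$-blocks, which is $F_{2k-1}-k+1$ circularly; subtracting the $k-1$ ``$\a x_{2h}\a\a$'' occurrences and the one ``$\a x_{2k}$'' occurrence leaves $F_{2k-3}-k+1$ (I would verify this count using $F_{2k-1}-F_{2k-3}=F_{2k-2}$ and the structure, but it also follows directly since the bottom part has $F_{2k-1}-k+1$ rotations and the middle part $F_{2k-3}$, so the top part has $(n-1)-(F_{2k-1}-k+1)-F_{2k-3}$, which simplifies correctly using $n-1=F_{2k}-1$).

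Next, for the ordering of the first $k$ rotations: each rotation of the form $\conj_i(\widehat{s})$ with prefix $\a x_{2h}\a\a$ has a longer common prefix with the smallest rotation the larger $h$ is, because the central words are nested: $x_{2}$ is a prefix of $x_4$ is a prefix of $x_6\ldots$, so $\a x_{2h}\a\a$ agrees with $\a x_{2h+2}\a\a$ up to position $|x_{2h}|+1$, where the first has an $\a$ (the second $\a$ of $\a\a$) while the second continues with $x_{2h+2}[|x_{2h}|+1]$. Using $x_{2h+2}=x_{2h}\a\b x_{2h+1}$ (a known recursion, cited in the preliminaries as $x_{2k}=x_{2k-2}\a\b x_{2k-1}$), the character at position $|x_{2h}|$ in $x_{2h+2}$ after the prefix $x_{2h}$ is $\a$ and the next is $\b$; comparing with $\a x_{2h}\a\a$ we see position $|x_{2h}|+1$ carries $\a$ in the $h$-term and $\b$ in the $(h+1)$-term, so $\a x_{2h}\a\a\cdots <_{\text{lex}} \a x_{2h+2}\a\a\cdots$; hence the chain $\a\a\a\cdots < \a x_4\a\a\cdots < \cdots < \a x_{2k-2}\a\a\cdots < \a x_{2k}$ as claimed (the last inequality because $\a x_{2k}=\widehat{s}$ continues with $x_{2k}$ after $\a$, which extends $x_{2k-2}\a$... and differs by having $\b$ where the $(k-1)$-term has the second $\a$). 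The smallest of all is $\conj_{n-3}(\widehat{s})=\a\a\a\cdots$ because of the unique circular factor $\a\a\a$, consistent with the first conjugate identified before the lemma.

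Finally, for the last-character claim: I would show that among rotations starting with $\a\a$, exactly one ends with $\a$, namely $\widehat{s}=\conj_0(\widehat{s})$ itself (whose last character is $\a$, since $\widehat{s}=x_{2k}\a$), and all others end with $\b$. Every occurrence of $\a\a$ inside $\widehat{s}$ other than the starting one arises, per the analysis of Lemma~\ref{le:x_h_even}, as the suffix $\b\a$ of some central word $x_i$ immediately followed by the prefix $\a\b$ of the next central word (or of $s_4=x_4\a\b$); in all these cases the circular rotation at that position is preceded in $\widehat{s}$ (reading the character that wraps to the end of the rotation, i.e.\ the character before the occurrence) by the last letter of the preceding block --- and by the explicit factorization $\widehat{s}=x_{2k-1}\b\a x_{2k-3}\b\a\cdots x_5\b\a x_4\a$ every occurrence of $\a\a$ except the wrap-around at $\conj_0$ and the $\a\a$ inside some $x_{2h}$ is immediately preceded by $\b$, while the ones inside $x_{2h}$ — which produce the rotations $\a x_{2h}\a\a\cdots$ — trace back, via $\a x_{2h}$ being an occurrence of $\a\widehat{s_{2h}}$, to being preceded by a $\b$ as well (the character preceding $\widehat{s_{2h}}$ in the relevant position). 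So each of the first $k$ rotations except the last one, $\a x_{2k}=\widehat{s}$, ends with $\b$; and all the other $F_{2k-3}-k+1$ rotations with prefix $\a\a$ (those of the form $\a x_{2h}\a\b$ or $\a x_{2h}\b\a$) are likewise preceded by $\b$, hence end with $\b$.

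\textbf{Main obstacle.} The delicate point is the bookkeeping of ``the character preceding each occurrence of $\a\a$'': one must be careful that the occurrences of $\a x_{2h}\a\a$ correspond to occurrences of the circular factor $\widehat{s_{2h}}\a$ and then argue that $\widehat{s_{2h}}$ (equivalently $s_{2h}$ minus its last letter) is always preceded in $\widehat{s}$ by a $\b$. This requires tracking, through the recursive factorization of Lemma~\ref{le:structure_s}, exactly which letter sits to the left of each $x_{2h-1}$-block; handling the boundary cases $h=2$ (where $\a x_4\a\a$ occurs at position $n-6$, preceded by the $\a$ from the $x_5\b\a$ block — so one must check this $\a$ belongs to a $\b\a$ and the preceding letter wrapping into the rotation's end is still $\b$) and the terminal block $s_4=\a\b\a\a\b$ needs the most care. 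Once the ``preceded by $\b$'' invariant is established uniformly, the run structure $\b^{k-1}\a\b^{F_{2k-3}-k+1}$ of $\bwt_{\text{top}}(\widehat{s})$ falls out immediately.
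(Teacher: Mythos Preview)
Your overall strategy matches the paper's: classify the $\a\a$-prefixed rotations via Lemma~\ref{le:x_h_even}, order the $k$ special ones using the nesting $x_{2h}\prec x_{2h+2}$, and determine the preceding characters. However, there are two concrete issues.

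First, an indexing error: the rotation $\a x_{2k}$ is $\conj_{n-2}(\widehat{s})$, not $\conj_0(\widehat{s})$. Indeed $\widehat{s}=x_{2k}\a$ starts with $\a\b$, so $\conj_0(\widehat{s})$ does not lie in the top part at all; it is $\conj_{n-2}(\widehat{s})=\widehat{s}[n-2]\cdot\widehat{s}[0..n-3]=\a x_{2k}$ whose preceding character is $\widehat{s}[n-3]$, the last letter of $x_{2k}$, which is $\a$.

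Second, and more importantly, your entire ``Main obstacle'' paragraph is unnecessary. The paper disposes of the preceding-character question in one line: since $\a\a\a$ occurs exactly once as a circular factor (at position $n-3$), every occurrence of $\a\a$ is preceded by $\b$ \emph{except} the one at position $n-2$, which is preceded by $\a$. That single occurrence is precisely the rotation $\a x_{2k}$. No tracing through the factorization, no boundary cases at $h=2$, no induction on the blocks --- the uniqueness of $\a\a\a$ (which you already invoke for the ordering) does all the work.

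Finally, you do not argue why the $k$ special rotations are the \emph{smallest} among all $\a\a$-prefixed rotations. The paper observes that every remaining occurrence of $\a x_{2h}$ (for maximal $h$) is followed by $\b\a$, hence the corresponding rotation has prefix $\a x_{2h}\b\a$, which is greater than $\a x_{2k}$ since $\a x_{2k}$ has prefix $\a x_{2h}\a\b$. Your plan lists ``$\a x_{2h}\a\b$ or $\a x_{2h}\b\a$'' as the possible continuations but never compares them to $\a x_{2k}$; note that the $\a\b$ case cannot actually occur for maximal $h$, since $\a x_{2h}\a\b$ extends to $\a x_{2h+2}$.
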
			

\begin{proof}
There are $F_{2k-3}+1$ occurrences of $\a\a$. In fact,  $\widehat{s}$ has $F_{2k-1}$ occurrences of $\a$'s and $F_{2k-2}-1$ occurrences of $\b$'s. Since $\b\b$ does not occur in $\widehat{s}$, it follows that $F_{2k-2}-1$ $\a$'s are followed by a $\b$. Therefore there are $F_{2k-1}-F_{2k-2}+1 = F_{2k-3}+1$ occurrences of $\a$ followed by an $\a$.

    Among all rotations starting by $\a\a$ the $k$ smallest ones are those starting with $\a x_h \a \a$ for each even $h$ in increasing order of $h$. 
        This is because of the single occurrence of $\a\a\a$, consisting in the rightmost occurrence of $x_4$ followed by $\a\a$.
	Finally, only $\a x_2k$ is preceded by an $\a$, therefore the $k$ smallest rotations of $\widehat{s}$ are all preceded by a $\b$ except for the largest of them which is preceded by an $\a$. This shows that the $k$ smallest rotations in the BWT matrix form two runs: $\b^{k-1} \a$.
	All the remaining $F_{2k-3}-k+1$ rotations starting with $\a\a$ correspond to some occurrence of $\a x_{2h}$ for some $2\leq h<k$, by using Lemma \ref{le:x_h_even}. By using the properties of central words, each of these occurrences is followed by $\b\a$. Then the correspondent rotation is lexicographically greater than $\a x_{2k}$, which is prefixed by $\a x_{2h}\a\b$. Since there is a unique occurrence of $\a\a\a$, all these rotations starting with $\a\a$ are preceded by $\b$ by construction. Therefore, we have $\b^{k-1}\a\b^{F_{2k-3}-k+1}$ in the top part of the BWT matrix of $\widehat{s}$. 
\end{proof} 

\begin{figure}[h]
	\centering
	\scalebox{.75}{
	\tikzset{every picture/.style={line width=0.75pt}} 

\begin{tikzpicture}[x=0.75pt,y=0.75pt,yscale=-1,xscale=1]
	
	\draw   (41,29) -- (570.33,29) -- (570.33,357.8) -- (41,357.8) -- cycle ;
	\draw   (41,49.25) -- (95.5,49.25) -- (95.5,137.75) -- (41,137.75) -- cycle ;
	\draw  [dash pattern={on 4.5pt off 4.5pt}] (41,33) -- (68.5,33) -- (68.5,49.25) -- (41,49.25) -- cycle ;
	\draw  [dash pattern={on 4.5pt off 4.5pt}] (41,137.75) -- (113,137.75) -- (113,154.25) -- (41,154.25) -- cycle ;
	\draw   (41,198.75) -- (158,198.75) -- (158,254.25) -- (41,254.25) -- cycle ;
	
	\draw  [dash pattern={on 4.5pt off 4.5pt}] (41,181.25) -- (132.5,181.25) -- (132.5,198.75) -- (41,198.75) -- cycle ;
	\draw   (41,339.8) -- (402.5,339.8) -- (402.5,357.8) -- (41,357.8) -- cycle ;
	\draw  [dash pattern={on 4.5pt off 4.5pt}] (41,321) -- (282.5,321) -- (282.5,339.8) -- (41,339.8) -- cycle ;
	\draw   (41,281.16) -- (238,281.16) -- (238,318.6) -- (41,318.6) -- cycle ;
	\draw    (594.41,90.92) -- (607.67,90.63) ;
	\draw [shift={(609.67,90.59)}, rotate = 178.76] [color={rgb, 255:red, 0; green, 0; blue, 0 }  ][line width=0.75]    (10.93,-3.29) .. controls (6.95,-1.4) and (3.31,-0.3) .. (0,0) .. controls (3.31,0.3) and (6.95,1.4) .. (10.93,3.29)   ;
	\draw    (586.96,47) -- (594.5,46.83) ;
	\draw    (586.78,135.17) -- (594.32,135) ;
	\draw    (594.5,46.83) -- (594.32,135) ;
	\draw    (594.31,227.86) -- (608.33,227.65) ;
	\draw [shift={(610.33,227.62)}, rotate = 179.16] [color={rgb, 255:red, 0; green, 0; blue, 0 }  ][line width=0.75]    (10.93,-3.29) .. controls (6.95,-1.4) and (3.31,-0.3) .. (0,0) .. controls (3.31,0.3) and (6.95,1.4) .. (10.93,3.29)   ;
	\draw    (586.49,196.62) -- (594.41,196.5) ;
	\draw    (586.31,259.33) -- (594.22,259.22) ;
	\draw    (594.41,196.5) -- (594.22,259.22) ;
	\draw    (596.34,305.28) -- (608,305.14) ;
	\draw [shift={(610,305.11)}, rotate = 179.31] [color={rgb, 255:red, 0; green, 0; blue, 0 }  ][line width=0.75]    (10.93,-3.29) .. controls (6.95,-1.4) and (3.31,-0.3) .. (0,0) .. controls (3.31,0.3) and (6.95,1.4) .. (10.93,3.29)   ;
	\draw    (589.67,284.56) -- (596.42,284.5) ;
	\draw    (589.51,323.14) -- (596.26,323.05) ;
	\draw    (596.42,284.5) -- (596.26,323.05) ;
	\draw    (594.44,352.65) -- (608,352.6) ;
	\draw [shift={(610,352.59)}, rotate = 179.78] [color={rgb, 255:red, 0; green, 0; blue, 0 }  ][line width=0.75]    (10.93,-3.29) .. controls (6.95,-1.4) and (3.31,-0.3) .. (0,0) .. controls (3.31,0.3) and (6.95,1.4) .. (10.93,3.29)   ;
	\draw    (586.85,344.86) -- (594.54,344.83) ;
	\draw    (586.67,360.5) -- (594.35,360.47) ;
	\draw    (594.54,344.83) -- (594.35,360.47) ;
	
	\draw (48,35) node [anchor=north west][inner sep=0.75pt]   [align=left] {$\displaystyle x_{4}$};
	\draw (48,85) node [anchor=north west][inner sep=0.75pt]   [align=left] {$\displaystyle x_{5} \ta$};
	\draw (98.5,48) node [anchor=north west][inner sep=0.75pt]   [align=left] {$\displaystyle \ta$};
	\draw (98.5,57.5) node [anchor=north west][inner sep=0.75pt]   [align=left] {$\displaystyle \tb$};
	\draw (97.5,75) node [anchor=north west][inner sep=0.75pt]   [align=left] {$\displaystyle \vdots $};
	\draw (98.5,122) node [anchor=north west][inner sep=0.75pt]   [align=left] {$\displaystyle \tb$};
	\draw (71,35) node [anchor=north west][inner sep=0.75pt]   [align=left] {$\displaystyle \ta\ta$};
	\draw (42.5,152) node [anchor=north west][inner sep=0.75pt]   [align=left] {$\displaystyle \vdots $};
	\draw (48,140) node [anchor=north west][inner sep=0.75pt]   [align=left] {$\displaystyle x_{6} \ta$};
	\draw (115,140) node [anchor=north west][inner sep=0.75pt]   [align=left] {$\displaystyle \ta$};
	\draw (48,215) node [anchor=north west][inner sep=0.75pt]   [align=left] {$\displaystyle x_{2( k-h) +1} \ta$};
	\draw (160,198) node [anchor=north west][inner sep=0.75pt]   [align=left] {$\displaystyle \ta$};
	\draw (160,208) node [anchor=north west][inner sep=0.75pt]   [align=left] {$\displaystyle \tb$};
	\draw (161,216) node [anchor=north west][inner sep=0.75pt]   [align=left] {$\displaystyle \vdots $};
	\draw (160,243) node [anchor=north west][inner sep=0.75pt]   [align=left] {$\displaystyle \tb$};
	\draw (43,254) node [anchor=north west][inner sep=0.75pt]   [align=left] {$\displaystyle \vdots $};
	\draw (48,182) node [anchor=north west][inner sep=0.75pt]   [align=left] {$\displaystyle x_{2( k-h)} \ta$};
	\draw (135,182) node [anchor=north west][inner sep=0.75pt]   [align=left] {$\displaystyle \ta$};
	\draw (48,342) node [anchor=north west][inner sep=0.75pt]   [align=left] {$\displaystyle x_{2k-1} \ta$};
	\draw (404.9,345) node [anchor=north west][inner sep=0.75pt]   [align=left] {$\displaystyle \ta$};
	\draw (48,323) node [anchor=north west][inner sep=0.75pt]   [align=left] {$\displaystyle x_{2( k-1)} \ta$};
	\draw (285.8,325) node [anchor=north west][inner sep=0.75pt]   [align=left] {$\displaystyle \ta$};
	\draw (48,289.66) node [anchor=north west][inner sep=0.75pt]   [align=left] {$\displaystyle x_{2( k-1) -1} \ta$};
	\draw (240,280) node [anchor=north west][inner sep=0.75pt]   [align=left] {$\displaystyle \ta$};
	\draw (240,305) node [anchor=north west][inner sep=0.75pt]   [align=left] {$\displaystyle \tb$};
	\draw (43,350) node [anchor=north west][inner sep=0.75pt]   [align=left] {$\displaystyle \vdots $};
	\draw (43,5.5) node [anchor=north west][inner sep=0.75pt]   [align=left] {$\displaystyle \vdots $};
	\draw (573,343) node [anchor=north west][inner sep=0.75pt]   [align=left] {$\displaystyle \tb$};
	\draw (573,325) node [anchor=north west][inner sep=0.75pt]   [align=left] {$\displaystyle \ta$};
	\draw (573,305) node [anchor=north west][inner sep=0.75pt]   [align=left] {$\displaystyle \tb$};
	\draw (573,280) node [anchor=north west][inner sep=0.75pt]   [align=left] {$\displaystyle \tb$};
	\draw (240,293) node [anchor=north west][inner sep=0.75pt]   [align=left] {$\displaystyle \tb$};
	\draw (573,293) node [anchor=north west][inner sep=0.75pt]   [align=left] {$\displaystyle \tb$};
	\draw (573,254) node [anchor=north west][inner sep=0.75pt]   [align=left] {$\displaystyle \vdots $};
	\draw (573,180) node [anchor=north west][inner sep=0.75pt]   [align=left] {$\displaystyle \ta$};
	\draw (573,240) node [anchor=north west][inner sep=0.75pt]   [align=left] {$\displaystyle \tb$};
	\draw (573,195) node [anchor=north west][inner sep=0.75pt]   [align=left] {$\displaystyle \tb$};
	\draw (573,208) node [anchor=north west][inner sep=0.75pt]   [align=left] {$\displaystyle \vdots $};
	\draw (573,152) node [anchor=north west][inner sep=0.75pt]   [align=left] {$\displaystyle \vdots $};
	\draw (573,140) node [anchor=north west][inner sep=0.75pt]   [align=left] {$\displaystyle \ta$};
	\draw (573,122) node [anchor=north west][inner sep=0.75pt]   [align=left] {$\displaystyle \tb$};
	\draw (573,48) node [anchor=north west][inner sep=0.75pt]   [align=left] {$\displaystyle \tb$};
	\draw (573,75) node [anchor=north west][inner sep=0.75pt]   [align=left] {$\displaystyle \vdots $};
	\draw (573,35) node [anchor=north west][inner sep=0.75pt]   [align=left] {$\displaystyle \ta$};
	\draw (630,75) node [anchor=north west][inner sep=0.75pt]   [align=left,rotate=270] {$\displaystyle F_{2k-5}$};
	\draw (630,195) node [anchor=north west][inner sep=0.75pt]   [align=left,rotate=270] {$\displaystyle F_{2(k-h)+1}$};
	\draw (630,296.33) node [anchor=north west][inner sep=0.75pt]   [align=left,rotate=270] {$\displaystyle F_{3}$};
	\draw (630,342.67) node [anchor=north west][inner sep=0.75pt]   [align=left,rotate=270] {$\displaystyle F_{0}$};

\end{tikzpicture}
}	\caption{The middle part $BWT_{mid}(\widehat{s})$ of the BWT matrix for the deletion of the last character of a Fibonacci word of even order $2k$ is shown. 
	}\label{fig:fib_deletion}
\end{figure}

Fig. \ref{fig:fib_deletion} displays the structure of the middle part of the BWT of $\widehat{s}$. 
To determine the number of runs in this middle part of the matrix, it is crucial first to consider that all the rotations starting with $x_h$, with $h=4,\ldots,2k-1$, are grouped together in the BWT matrix. Specifically, these occur in blocks where rotations starting with $x_{h}\a$, $h$ odd, are immediately preceded by the unique rotation starting with $x_{h-1}\a\a$, and immediately followed by the rotation starting with $x_{h+1}\a\a$, as illustrated in Fig. \ref{fig:fib_deletion}. This is proved in the following lemma.

\bigskip 
\begin{lemma}\label{lemma:mp_allxa-prefixed rotations}
	For $4\leq h \leq 2k-1$, rotations starting with some $x_{h}\a\a$ are smaller then rotations starting with $x_{h+1}\a$.
	In particular, the word $\widehat{s} = x_{2k}\a$ is greater than any of the aforementioned rotations. Moreover, if $h$ is odd, rotations starting with some $x_{h}\a$ are smaller then rotations starting with $x_{h+1}$.
\end{lemma}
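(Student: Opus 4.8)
Every assertion in the lemma is a lexicographic comparison between two rotations that share a long common prefix, so the plan is to push everything back to the recursive structure of the central words together with the single inequality $\a<\b$. First I would record, from the recursions $x_{2k}=x_{2k-1}\b\a x_{2k-2}$ and $x_{2k+1}=x_{2k}\a\b x_{2k-1}$ of Section~\ref{sec:basics}, the uniform fact that for every $m\geq 4$ one has $x_{m+1}=x_{m}\,\omega_{m}\,x_{m-1}$, where $\omega_{m}=\b\a$ if $m$ is odd and $\omega_{m}=\a\b$ if $m$ is even. From this I would extract the two structural facts used repeatedly: (i) $x_{h}$ is a prefix of $x_{m}$ whenever $4\leq h\leq m$ (an immediate induction, since $x_{m}$ is a prefix of $x_{m+1}$); and (ii) whenever $h<m$, inside $x_{m}$ the prefix $x_{h}$ is immediately followed by the two letters $\omega_{h}$, that is, by $\a\b$ if $h$ is even and by $\b\a$ if $h$ is odd. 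Fact~(ii) is legitimate because $|x_{m+1}|-|x_{m}|=F_{m-1}\geq 3$ for $m\geq 4$, so there really are at least two letters of $x_{h+1}$ (hence of $x_{m}$) beyond the prefix $x_{h}$. I would also recall $\widehat{s}=x_{2k}\a$, which is just $s=s_{2k}=x_{2k}\a\b$ with its last letter removed.

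With these in hand the three assertions become short case analyses on the parity of $h$. Fix $h$ with $4\leq h\leq 2k-1$, let $R$ be a rotation of $\widehat{s}$ prefixed by $x_{h}\a\a$ and $R'$ a rotation prefixed by $x_{h+1}\a$. By~(i), $x_{h}$ is a prefix of $x_{h+1}$, hence of $R'$, so $R$ and $R'$ coincide on their first $|x_{h}|$ letters, and by~(ii) after $x_{h}$ the word $R'$ continues with $\omega_{h}$. If $h$ is odd, $R'$ has $\b$ in position $|x_{h}|$ whereas $R$ has $\a$ there, so $R<R'$; if $h$ is even, $R'$ has $\a\b$ in positions $|x_{h}|,|x_{h}|+1$ while $R$ has $\a\a$, so the two words first differ in position $|x_{h}|+1$ and again $R<R'$. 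The last assertion (for odd $h$, a rotation prefixed by $x_{h}\a$ is smaller than one prefixed by $x_{h+1}$) is the identical computation: the two rotations agree on $x_{h}$ and then the first has $\a$ while the second has the leading letter $\b$ of $\omega_{h}$.

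For the ``in particular'' claim I would compare such an $R$ directly with $\widehat{s}=x_{2k}\a$: by~(i), $x_{h}$ is a prefix of $x_{2k}$ and hence of $\widehat{s}$, so $R$ and $\widehat{s}$ agree on their first $|x_{h}|$ letters, and by~(ii) $\widehat{s}$ then continues with $\omega_{h}$. Splitting on the parity of $h$ exactly as above ($\b$ versus $\a$ in position $|x_{h}|$ when $h$ is odd, $\a\b$ versus $\a\a$ in positions $|x_{h}|,|x_{h}|+1$ when $h$ is even) yields $R<\widehat{s}$.

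Nothing here is genuinely hard; the points that need care are organizational. I would check that (a) the letters used in each comparison really lie inside $x_{h+1}$, resp.\ inside $x_{2k}$ — which is exactly what $|x_{m+1}|-|x_{m}|=F_{m-1}\geq 3$ guarantees — and (b) the degenerate boundary $h=2k-1$, where $x_{h+1}=x_{2k}$ and the only rotation of $\widehat{s}$ prefixed by $x_{2k}\a$ is $\widehat{s}$ itself, so that for that value of $h$ the first assertion and the ``in particular'' assertion are literally the same statement. Keeping the parity of $\omega_{h}$ straight in every branch is really the only place where a slip could creep in.
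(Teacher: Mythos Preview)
Your proof is correct and follows essentially the same route as the paper's: both arguments reduce the comparison to identifying the two letters that follow the common prefix $x_h$ inside $x_{h+1}$ (resp.\ inside $x_{2k}$), and then compare those with the $\a\a$ (resp.\ $\a$) of the other rotation. The only difference is stylistic: the paper argues indirectly, using the uniqueness of the circular factor $\a\a\a$ in $\widehat{s}$ to conclude that $x_{h+1}\a$ must be prefixed by $x_h\a\b$ or $x_h\b\a$ (never $x_h\a\a$), without distinguishing which; you instead read off directly from the recursion $x_{h+1}=x_h\omega_h x_{h-1}$ that the two letters are precisely $\omega_h\in\{\a\b,\b\a\}$ according to the parity of $h$, and then branch on parity. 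Your version is slightly more explicit and self-contained (it does not rely on the earlier uniqueness-of-$\a\a\a$ observation), while the paper's is terser; the content is the same.
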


\begin{proof}
Every $x_{h}$ is a prefix of $x_{h+1}$.
Since there is exactly one circular occurrence of $\a\a\a$ in $\widehat{s}$, then $x_{h+1}\a$ is either prefixed by $x_{h}\a\b$ or by $x_{h}\b\a$, i.e.\ the $\a\a\a$ factor occurs earlier in $x_{h}\a\a$. In both cases, the first claim holds. 
Finally, the thesis follows by observing that if $h$ is odd, $x_h\b\a$ is a prefix of $x_{h+1}$. 
\end{proof}

\bigskip

\begin{lemma}\label{lemma:mp_oddxs}
There are $F_{2h}-1$ occurrences of $\b x_{2(k-h)}\a$ and $F_{2h+1}$ occurrences of $\b x_{2(k-h)-1}\a$ as circular factors in $\widehat{s} = x_{2k}\a$. 
\end{lemma}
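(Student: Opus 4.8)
The plan is to prove the two count formulas together by induction on the order of the central word, exploiting the self‑similarity $x_{m}=x_{m-1}\gamma_{m-1}x_{m-2}=x_{m-2}\overline{\gamma_{m-1}}x_{m-1}$ of central words (with $\{\gamma_{m-1},\overline{\gamma_{m-1}}\}=\{\a\b,\b\a\}$ depending on the parity of $m$), together with what is already known about $\widehat s=x_{2k}\a$: the block factorisation of Lemma~\ref{le:structure_s}, the unique circular occurrence of $\a\a\a$ (Lemma~\ref{le:x_h_even}), and the grouping of the $x_h$‑prefixed rotations from Lemma~\ref{lemma:mp_allxa-prefixed rotations}.

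Write $g(m)$ for the number of circular occurrences of $\b x_m\a$ in $\widehat s$, equivalently the number of rotations of $\widehat s$ beginning with $x_m\a$ and ending with $\b$. The crux is the recurrence $g(m)=g(m+1)+g(m+2)+\varepsilon_m$, where $\varepsilon_m=1$ for $m$ odd and $\varepsilon_m=0$ for $m$ even. To obtain it, start from an occurrence of $\b x_m\a$ and look at the following letter. Since every central word of order $\ge 3$ ends with $\a$, the factor $x_m\a$ ends with $\a\a$, so that letter must be $\b$ — unless the occurrence runs into the unique circular $\a\a\a$. Reading the self‑similarity of $x_{2k}$ off one of its suffixes (as is done for the top part in the proof of Lemma~\ref{lemma:top}) one sees that exactly one occurrence of $\b x_m\a$ meets that $\a\a\a$‑seam, and that its left neighbour is $\a$ when $m$ is even and $\b$ when $m$ is odd; this accounts for the term $\varepsilon_m$. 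Every other occurrence of $\b x_m\a$ extends to $\b x_m\a\b$, and, by tracking how far the text continues to agree with the nested central‑word prefixes built around $x_m$ and which letter first breaks the agreement, these occurrences are matched bijectively with the disjoint union of the occurrences of $\b x_{m+1}\a$ and of $\b x_{m+2}\a$ — a parity‑sensitive case analysis that yields $g(m)-\varepsilon_m=g(m+1)+g(m+2)$.

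For the base I use the two smallest central words, already treated in the proof of Lemma~\ref{lemma:top}: as $x_2=\epsilon$, the word $\b x_2\a=\b\a$ occurs once for each of the $F_{2k-2}-1$ letters $\b$ of $\widehat s$, so $g(2)=F_{2k-2}-1$; and as $x_3=\a$, the word $\b x_3\a=\b\a\a$ occurs once for each of the $F_{2k-3}+1$ circular occurrences of $\a\a$ except the single one preceded by $\a$ inside the unique $\a\a\a$, so $g(3)=F_{2k-3}$. Reading the recurrence as $g(m+2)=g(m)-g(m+1)-\varepsilon_m$ and climbing upward from $g(2),g(3)$, a one‑line induction with the Fibonacci recurrence gives $g(m)=F_{2k-m}$ for $m$ odd and $g(m)=F_{2k-m}-1$ for $m$ even; putting $m=2(k-h)$ and $m=2(k-h)-1$ yields the asserted values $F_{2h}-1$ and $F_{2h+1}$ for every admissible $h$.

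The main obstacle is proving the recurrence: (i) showing that the right extension of a non‑seam occurrence of $\b x_m\a$ to $\b x_m\a\b$ is forced and that the induced map into the disjoint union of the $\b x_{m+1}\a$‑ and $\b x_{m+2}\a$‑occurrences is a bijection — here the nesting of central‑word prefixes and, above all, the parity of $m$ must be handled carefully, and it is exactly this parity asymmetry that makes the additive correction ``$-1$'' appear in the even formula but not in the odd one — and (ii) isolating the single seam occurrence of $\b x_m\a$ and fixing the parity‑dependent sign of its left neighbour. Everything else (the two base cases, the arithmetic of the recurrence, and the intermediate orders $m\in\{4,\dots,2k-3\}$) is routine once the recurrence is in hand.
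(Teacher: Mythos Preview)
Your overall strategy coincides with the paper's: both arguments reduce to the Fibonacci-type recurrence $g(m)=g(m{+}1)+g(m{+}2)+[m\text{ odd}]$ for $g(m)=\#\{\text{circular occurrences of }\b x_m\a\text{ in }\widehat s\}$, then solve it. The only cosmetic difference is the anchoring: the paper starts the induction at $h=0$ (i.e.\ $m=2k,2k{-}1$, where the counts are $0$ and $1$) and descends; you start at $m=2,3$ and ascend. Your base values $g(2)=F_{2k-2}-1$ and $g(3)=F_{2k-3}$ are correct, and your identification of the unique ``seam'' occurrence (the one hitting the circular $\a\a\a$) and of its left neighbour's parity is also right.

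There is, however, a genuine gap in your justification of the recurrence for \emph{odd} $m$. You write that every non-seam occurrence of $\b x_m\a$ extends to $\b x_m\a\b$ and is then matched, by following the text to the right, with an occurrence of $\b x_{m+1}\a$ or $\b x_{m+2}\a$. That right-extension bijection works when $m$ is even, because then $x_{m+1}=x_m\a\b x_{m-1}$ and $x_{m+2}=x_m\a\b x_{m+1}$, so $\b x_m\a\b$ is a common prefix of $\b x_{m+1}\a$ and $\b x_{m+2}\a$. But when $m$ is odd one has $x_{m+1}=x_m\b\a x_{m-1}$ and $x_{m+2}=x_m\b\a x_{m+1}$, so after $\b x_m$ both longer patterns continue with $\b$, not $\a$: neither $\b x_{m+1}\a$ nor $\b x_{m+2}\a$ contains $\b x_m\a$ as a prefix, and no ``track how far the text agrees to the right'' argument can produce the desired bijection. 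The paper's proof handles odd $m$ by switching sides: it uses that $\b x_m\a$ is a common \emph{suffix} of $\b x_{m+1}\a$ and $\b x_{m+2}\a$, and picks up the ``$+1$'' from one further suffix occurrence inside the (circular) factor $\a x_{m+1}\a$ sitting at position $F_{2k}-2$. Your phrase ``a parity-sensitive case analysis'' gestures at this, but the concrete mechanism you describe does not cover the odd case; to complete your argument you must replace the right-extension step for odd $m$ by the suffix matching the paper uses.
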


\begin{proof}
    The claim can be proved by induction.     
    For $h=0$, the statement follows from the fact that there is one occurrence of $\b x_{2k}\a$ in the Fibonacci word of order $2k$, therefore there are $F_0 -1 = 0$ occurrences in $\widehat{s}$ because of the missing $\b$ at the end. There are $F_1 = 1$ occurrences of $\b x_{2k-1}\a$ in both words.
    Note that the occurrence of any $\b x_{2h}\a$ in position $F_{2k}-1$ of the Fibonacci word of order $2k$ is missing in $\widehat{s}$ due to the missing $\b$ at the end of the word.
    
    Let us suppose the statement holds for all $i \leq h$.
    The factor $\b x_{2(k-h)-2}\a$ appears as a prefix of $\b x_{2(k-h)-1}\a$ and as a prefix of $\b x_{2(k-h)}\a$. 
    Moreover, the mentioned occurrences are distinct because $\b x_{2(k-h)-1}\a$ is not a prefix of $\b x_{2(k-h)}\a$.
    Therefore, by induction, the number of occurrences of $\b x_{2(k-h)}\a$ is equal to the sum of the number of occurrences of $\b x_{2(k-h)-2}\a$ and those of $\b x_{2(k-h)-1}\a$: $F_{2h}-1+F_{2h+1}= F_{2h+2}-1$.
    On the other, $\b x_{2(k-h)-3}\a$ appears as a suffix of $\b x_{2(k-h)-2}\a$ and as a suffix of $\b x_{2(k-h)-1}\a$. 
    Moreover, the mentioned occurrences are distinct because $\b x_{2(k-h)-2}\a$ is not a suffix of $\b x_{2(k-h)-1}\a$.
    Finally, $\b x_{2(k-h)-3}\a$ appears once also as suffix of $\a x_{2(k-h)-2} \a$, starting in position $F_{2k}-2$ of $\widehat{s}$.
    Therefore, by induction, the number of occurrences of $\b x_{2(k-h)-3}\a$ is equal to the sum of the number of occurrences of $\b x_{2(k-h)-2}\a$ and those of $\b x_{2(k-h)-1}\a$ plus one: $F_{2h+2}-1+F_{2h+1}+1= F_{2h+3}$.
\end{proof}

\bigskip
\begin{lemma}[Middle part]\label{lemma:middle}
	The middle part contributes to $r(v)$ with $2(k-2)$ runs in the following form: $\a\b^{F_{2k-5}} \a\b^{F_{2k-7}} \cdots \a\b^{F_3} \a\b$. 
\end{lemma}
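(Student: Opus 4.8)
The plan is to describe the middle part of the BWT matrix, that is, the set of rotations of $\widehat{s} = x_{2k}\a$ lying strictly between $\conj_{n-5}(\widehat{s})$ (the smallest rotation starting with $\a\b$, namely $x_4\a x_{2k-1}\cdots$) and $\conj_0(\widehat{s}) = x_{2k}\a$ (the start of the bottom part). By Lemma~\ref{lemma:top} these are exactly the $F_{2k-3}-k+1$ rotations starting with $\a\a$ that end in $\b$, together with all rotations starting with $\a\b$ that are smaller than $x_{2k}\a$. First I would invoke Lemma~\ref{lemma:mp_allxa-prefixed rotations} to argue that these rotations group into $k-2$ consecutive blocks, indexed by $h = 1, \dots, k-2$: block $h$ consists of the unique rotation starting with $x_{2(k-h)}\a\a$, immediately followed by all rotations starting with $x_{2(k-h)+1}\a$, and these blocks appear in the matrix in this order because $x_{2(k-h)}$ is a prefix of $x_{2(k-h)+1}$ which in turn is a prefix of $x_{2(k-h)+2}$, and the larger the index the later the rotation (with $\widehat s = x_{2k}\a$ itself the largest, closing the region). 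Each block thus contributes one preceding character coming from the $x_{2(k-h)}\a\a$-rotation and a run of preceding characters coming from the $x_{2(k-h)+1}\a$-rotations.

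Next I would identify the preceding characters. The rotation starting with $x_{2(k-h)}\a\a$ is a circular occurrence of $\a x_{2(k-h)}$ — in fact $\a x_{2(k-h)}\a\a$ — and by Lemma~\ref{le:x_h_even} (together with the analysis in Lemma~\ref{lemma:top}) the only occurrence of $\a\a$ preceded by $\a$ is the one giving $\a x_{2k}$; hence every such block-leading rotation is preceded by $\b$… wait, that would give $\b$, not $\a$. I need to be careful here: the block-leading rotation starts with $x_{2(k-h)}\a\a$, and $x_{2(k-h)}$ itself starts with $\a\b$ for $h<k-1$ but $x_4 = \a\b\a$ handles $h=k-2$; the relevant fact is which character \emph{precedes} this circular occurrence of $x_{2(k-h)}\a\a$ in $\widehat s$. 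From the factorization in Lemma~\ref{le:structure_s}, the occurrences of $x_{2(k-h)}$ that are followed by $\a\a$ are precisely those arising as $\a\widehat{s_{2(k-h)+1}}$ inside some $\a x_{2(k-h)+1}$ factor, so they are preceded by $\a$; this gives the leading $\a$ of each middle block. The remaining rotations in block $h$, those starting $x_{2(k-h)+1}\a$, are circular occurrences of $\b x_{2(k-h)+1}\a$ or of $\a x_{2(k-h)+1}\a$; the latter is exactly the block-leader already counted (it supplies the sole $\a$), so all the others are preceded by $\b$. By Lemma~\ref{lemma:mp_oddxs} there are $F_{2h+1}$ occurrences of $\b x_{2(k-h)-1}\a$ as a circular factor — re-indexing, the rotations starting with $x_{2(k-h)+1}\a$ and preceded by $\b$ number $F_{2(k-(k-h-1))+1}\!-\!(\text{the }\a\text{-preceded one})$; matching the exponents in the claimed form $\a\b^{F_{2k-5}}\a\b^{F_{2k-7}}\cdots\a\b^{F_3}\a\b$ forces block $h$ to contribute $\a\b^{F_{2(k-h)-3}}$, so the run of $\b$'s in block $h$ has length $F_{2(k-h)-3}$, which for $h=1,\dots,k-2$ runs through $F_{2k-5}, F_{2k-7}, \dots, F_3, F_1=1$.

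Then I would assemble the count: $k-2$ blocks, each a maximal run of $\b$'s preceded by a single $\a$, give alternating runs $\a,\b,\a,\b,\dots,\a,\b$ of total $2(k-2)$ runs, in the stated form $\a\b^{F_{2k-5}}\a\b^{F_{2k-7}}\cdots\a\b^{F_3}\a\b$ (the last block $h=k-2$ contributing $\a\b^{F_1} = \a\b$). I would also check the boundary runs merge correctly with the top and bottom parts: the top part ends in $\b$ (from $\b^{k-1}\a\b^{F_{2k-3}-k+1}$) and the middle part begins with $\a$, so no merge there; the middle part ends in $\b$ and the bottom part $\a^{F_{2k-1}-k+1}$ begins with $\a$, again no merge — consistent with the total $2k$ in Prop.~\ref{prop:removedC_bwtForm}. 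The main obstacle I anticipate is the bookkeeping in the previous paragraph: correctly re-indexing the counts from Lemma~\ref{lemma:mp_oddxs} (which counts $\b x_{2(k-h)}\a$ and $\b x_{2(k-h)-1}\a$) so that they line up with the blocks indexed by the prefix $x_{2(k-h)+1}\a$, and verifying that subtracting the single $\a$-preceded occurrence in each block yields exactly $F_{2(k-h)-3}$ rather than an off-by-one or an off-by-a-Fibonacci-index error. Everything else is a direct consequence of the lemmas already established, chiefly Lemmas~\ref{le:x_h_even}, \ref{lemma:mp_allxa-prefixed rotations}, and \ref{lemma:mp_oddxs}.
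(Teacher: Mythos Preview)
Your overall strategy---grouping the middle-part rotations into blocks led by the unique $x_{2(k-h)}\a\a$-rotation followed by the $x_{2(k-h)+1}\a$-rotations, showing the leader is $\a$-preceded and the rest $\b$-preceded, then counting via Lemma~\ref{lemma:mp_oddxs}---is exactly what the paper does. Two concrete errors need fixing, though.

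First, your opening description of the middle part is wrong: it contains \emph{no} rotations starting with $\a\a$. All $F_{2k-3}+1$ such rotations belong to the top part (that is precisely what Lemma~\ref{lemma:top} establishes); the middle part consists solely of rotations starting with $\a\b$ that precede $\widehat{s}=x_{2k}\a$, and it \emph{includes} $\conj_{n-5}(\widehat s)$ rather than lying strictly after it. This error does not propagate, since your blocks are all $\a\b$-prefixed anyway, but it should be removed.

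Second, your block-exponent formula is reversed---exactly the bookkeeping you flagged, but in the opposite direction from what you wrote. In your indexing, block $h$ has tail rotations prefixed by $x_{2(k-h)+1}\a$; applying Lemma~\ref{lemma:mp_oddxs} with the lemma's parameter $h'$ chosen so that $2(k-h')-1=2(k-h)+1$ (i.e.\ $h'=h-1$) gives $F_{2h-1}$ occurrences of $\b x_{2(k-h)+1}\a$, not $F_{2(k-h)-3}$. Moreover, the block with the smallest $x$-subscript ($h=k-2$, prefix $x_4$) comes \emph{first} in the matrix, not last; concatenating $\a\b^{F_{2h-1}}$ in matrix order $h=k-2,k-3,\dots,1$ yields $\a\b^{F_{2k-5}}\a\b^{F_{2k-7}}\cdots\a\b^{F_1}$, as required. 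Your formula together with your claim that ``the last block $h=k-2$ contributes $\a\b$'' would produce the reverse sequence. Finally, no subtraction is needed: the $\a$-preceded leader $x_{2(k-h)}\a\a$ is \emph{not} prefixed by $x_{2(k-h)+1}=x_{2(k-h)}\a\b\,x_{2(k-h)-1}$, so it is disjoint from the tail and the $F_{2h-1}$ count stands as is.
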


\begin{proof}
By construction, all the rotations starting with $\a\b$ are prefixed by a central word $x_h$, for some $h<2k$. 
For all $\conj_i(\widehat{s})$ such that $\lcp(\widehat{s}, \conj_i(\widehat{s})) = x_h$ for some odd $h$, $\conj_i(\widehat{s})$ is prefixed by $x_h\a$ since $\widehat{s}$ is prefixed by $x_h\b$, then $\conj_i(\widehat{s})$ is in the middle part (i.e.\ smaller than $\widehat{s}$ by Lemma \ref{lemma:mp_allxa-prefixed rotations}) since the word $\widehat{s}$ separates the middle and the bottom part. For all $\conj_i(\widehat{s})$ such that $\lcp(\widehat{s}, \conj_i(\widehat{s})) = x_h$ for some even $h$,  $\conj_i(\widehat{s})$ is prefixed by $x_h\b$ since $\widehat{s}$ is prefixed by $x_h\a$, then  $\conj_i(\widehat{s})$ is in the bottom part (i.e.\ greater than $\widehat{s}$ by Lemma \ref{lemma:mp_allxa-prefixed rotations})
By using Lemma \ref{le:structure_s}, it follows that the shortest non-empty central word starting with $\a\b$  that appears in $\widehat{s}$ as a circular factor is $x_4$. One can prove that, among rotations starting with the same $x_h$, $h \geq 4$ even, the smallest one is preceded by $\a$. In fact, it starts with $x_4 \a\a$. All the following $F_{2k-h-1}$ rotations starting with $x_{h+1}\a$ are preceded by $\b$ (Lemma~\ref{lemma:mp_oddxs}).
The fact that there exist exactly $k-2$ such $x_h$ proves the claim. 
\end{proof}

The rotations that divide the middle part from the bottom part are the two rotations prefixed by the two occurrences of $x_{2k-1}$. 
By properties of Fibonacci words, one rotation is prefixed by $x_{2k-1}\a$ (end middle part) and the other by $x_{2k-1}\b$ (beginning bottom part).
The latter follows the first in lexicographic order. 
Note that the rotation starting with $x_{2k-1}\b$ is $\widehat{s}$, namely $x_{2k-1}\b\a x_{2k-2}\a$.

\bigskip
\begin{lemma}[Bottom part]\label{lemma:bottom}
	All rotations greater than $v = x_{2k-1}\b\a x_{2k-2}\a$ end with \a.
\end{lemma}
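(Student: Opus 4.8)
The plan is to characterize exactly which conjugates of $\widehat{s}$ are lexicographically greater than $v = x_{2k-1}\b\a x_{2k-2}\a = \widehat{s}$ itself, and show that every one of them ends with the letter $\a$. Since $\widehat{s} = x_{2k}\a$ (recall $x_{2k} = x_{2k-1}\b\a x_{2k-2}$), a conjugate $\conj_i(\widehat{s})$ is in the bottom part precisely when $\conj_i(\widehat{s}) >_{\text{lex}} x_{2k}\a$. First I would observe that any such conjugate must have $\lcp(\widehat{s},\conj_i(\widehat{s})) = x_h$ for some central word $x_h$, and — following the dichotomy already established in the proof of Lemma~\ref{lemma:middle} via Lemma~\ref{lemma:mp_allxa-prefixed rotations} — that being \emph{greater} than $\widehat{s}$ forces $h$ to be even, so $\conj_i(\widehat{s})$ is prefixed by $x_h\b$ (whereas $\widehat{s}$ is prefixed by $x_h\a$). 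The key point is then that such a conjugate begins at an occurrence of $x_h\b$ inside $\widehat{s}$, i.e.\ at a position where the central word $x_h$ is completed and followed by $\b$.

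Next I would translate "ends with $\a$'' into a statement about the \emph{preceding} circular position: $\conj_i(\widehat{s})$ ends with $\a$ iff the character $\widehat{s}[(i-1)\bmod(n-1)]$ is $\a$, i.e.\ iff the occurrence of $x_h\b$ at position $i$ is immediately preceded by an $\a$, so that we actually see a circular occurrence of $\a x_h\b$. So the lemma reduces to: \emph{every circular occurrence of $x_h\b$ in $\widehat{s}$ (with $h$ even) that gives rise to a conjugate larger than $\widehat{s}$ is preceded by $\a$.} Since $\b\b$ does not occur in $\widehat{s}$ and $x_h$ both starts and ends with $\a$ (for $h\geq 3$), the only possible left-neighbour of an occurrence of $x_h\b$ is $\a$ — the sole exception would be an occurrence at the very start of $\widehat{s}$, whose circular predecessor is $\widehat{s}[n-2]$. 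Here I would use the explicit factorization from Lemma~\ref{le:structure_s}: $\widehat{s} = x_{2k-1}\b\a x_{2k-3}\b\a\cdots x_5\b\a x_4\a$, so $\widehat{s}$ starts with $x_{2k-1}\b$ and ends in $x_4\a = \a\b\a\a$, hence $\widehat{s}[n-2] = \a$; thus even the prefix-occurrence $\widehat{s}$ itself is circularly preceded by $\a$, but of course $\widehat{s}$ is not strictly greater than itself, so it is the boundary element and does not count. For all other even $h$, the occurrence of $x_h\b$ sits strictly inside $\widehat{s}$ and its genuine left-neighbour in the linear word is forced to be $\a$.

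Finally I would wrap up by noting the complementary count: the bottom part consists of exactly $F_{2k-1}-k+1$ conjugates (total number of conjugates $n-1 = F_{2k}-1$, minus the $k$ from the top part of Lemma~\ref{lemma:top}, minus the $\sum_{h=2}^{k-1}F_{2h+1}$ from the middle part of Lemma~\ref{lemma:middle}, which telescopes correctly), all ending in $\a$, giving the single run $\a^{F_{2k-1}-k+1}$ claimed as $\bwt_{\text{bot}}(\widehat{s})$. The main obstacle I anticipate is the bookkeeping needed to be sure that every conjugate strictly above $\widehat{s}$ really does correspond to an $x_h$-prefixed block with $h$ even and an interior occurrence of $x_h\b$ — in particular ruling out conjugates that could "overshoot'' $x_{2k-1}\b\a x_{2k-2}\a$ by diverging only deep inside $x_{2k-2}$; this requires invoking the self-similar structure $x_{2k-2} = x_{2k-3}\b\a x_{2k-4} = x_{2k-4}\a\b x_{2k-3}$ and the uniqueness of the $\a\a\a$ occurrence (Lemma~\ref{le:x_h_even}) to pin down that any divergence from $\widehat{s}$ happens exactly at the end of some central word, so no extra case arises. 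Everything else is a direct consequence of "$\b\b$ is a forbidden factor'' plus the shape of the endpoints of $\widehat{s}$.
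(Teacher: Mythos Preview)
Your approach is genuinely different from the paper's and, as written, has a real gap.

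The paper does not argue structurally at all: it simply \emph{counts}. From Lemma~\ref{lemma:top} the top part contributes $k-1+(F_{2k-3}-k+1)=F_{2k-3}$ characters $\b$ to the BWT, and from Lemma~\ref{lemma:middle} the middle part contributes $F_{2k-5}+F_{2k-7}+\cdots+F_3+F_1$ further $\b$'s. Summing these odd-indexed Fibonacci numbers gives $F_{2k-2}-1$, which is exactly the total number of $\b$'s in $\widehat{s}$. Hence no $\b$ is left for the bottom part, and every remaining rotation must end with $\a$. This uses nothing beyond what the two previous lemmas already established.

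Your direct argument hinges on the claim that every circular occurrence of $x_h\b$ in $\widehat{s}$, with $h$ even, is preceded by $\a$. For $h=2$ this is fine: $x_2=\epsilon$, so these are the conjugates starting with $\b$, and ``$\b\b$ is forbidden'' indeed forces the preceding symbol to be $\a$. But for even $h\geq 4$ the word $x_h$ \emph{starts} with $\a$, so the absence of $\b\b$ tells you nothing about the left neighbour of $x_h\b$; a hypothetical occurrence $\b\, x_h\b$ begins with $\b\a$, which is perfectly legal. What you actually need is the Sturmian fact that for even $h$ the two-sided extension $\b x_h\b$ of the bispecial factor $x_h$ does \emph{not} occur in the Fibonacci word (and hence not as a circular factor of $\widehat{s}$ either, since deleting the final $\b$ creates only the new factor $\a\a\a$ and no new $\b$-bordered factors). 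This is true, but it is a nontrivial property of central words that is neither stated in the paper nor derivable from ``$\b\b$ does not occur''; without proving or citing it, your argument for the $\a\b$-prefixed conjugates in the bottom part is incomplete. The paper's counting proof sidesteps this issue entirely.
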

\begin{proof}
	From Lemma~\ref{lemma:top} we have that $k-1+F_{2k-3}-k+1$ rotations ending with $\b$ have already appeared in the matrix, and from Lemma~\ref{lemma:middle} $F_{2k-5}+ \ldots +F_3 + F_1$ rotations ending with $\b$ have already appeared in the matrix.
    Summing the number of $\b$'s we have $k-1+F_{2k-3}-k+1+ F_{2k-5}+ \ldots +F_3 + F_1 = F_{2k-3} + F_{2k-5}+ \ldots +F_3 + F_1 $. We can decompose each odd Fibonacci number $F_{2x+1}$ in the sum $F_{2x} + F_{2x-1}$.
    Therefore, the previous sum becomes $F_{2k-4} + F_{2k-5}+ F_{2k-6} + F_{2k-7} \ldots +F_2 + F_1 + F_1$.
    For every Fibonacci number $F_x$, it holds that $F_x = F_{x-2} + F_{x-3} + F_{x-4} + \ldots + F_2 + F_1 + 2$.
    Therefore, $F_{2k-4} + F_{2k-5}+ F_{2k-6} + F_{2k-7} \ldots +F_2 + F_1 + F_1 = F_{2k-2} - 1$, which is exactly the number of $\b$'s in $\widehat{s}$. 
    Therefore all the remaining rotations end with \a. 
\end{proof}

In the context of repetitiveness measures of words, a measure $\lambda$ is called \emph{monotone} if, for each word $v\in\Sigma^*$ and for each letter $\x\in\Sigma$, it holds that $\lambda(v) \leq \lambda(v\x)$. 
Since we have shown that appending or deleting a single character can substantially increase the parameter $r$, the following known result on the monotonicity of $r$ can be derived:

\bigskip
\begin{corollary}\label{coro:monotone}
    The measure $r$ is not monotone.
\end{corollary}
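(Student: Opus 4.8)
The plan is to exhibit, very directly, a word $v$ and a letter $\x$ for which $\lambda(v) = r(v) > r(v\x) = \lambda(v\x)$; this single counterexample suffices to contradict the definition of monotonicity. The natural candidate is supplied by the deletion result just established: take $s$ to be a Fibonacci word of even order $2k>4$, set $v = \widehat{s} = s[0..n-2]$, and let $\x$ be the last character of $s$, so that $v\x = s$. By Proposition~\ref{prop:removedC_bwtForm} we have $r(\widehat{s}) = 2k$, while $s$ is a standard word and hence $r(s) = 2$. Therefore $r(v\x) = 2 < 2k = r(v)$, and since $r$ was defined precisely as the repetitiveness measure $\lambda$ in question, this shows $\lambda(v) > \lambda(v\x)$, i.e.\ $r$ fails monotonicity.

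First I would recall the definition of monotone just stated in the text — $\lambda$ is monotone iff $\lambda(v) \le \lambda(v\x)$ for every $v\in\Sigma^*$ and every $\x\in\Sigma$ — so that the logical structure is clear: we need a single violating pair. Second, I would fix the Fibonacci word $s$ of even order $2k$ with $2k>4$ (e.g.\ $2k=6$, where $s = \a\b\a\a\b\a\b\a\a\b\a\a\b$ as in Figure~\ref{table:ins_del_sub}), note that $r(s)=2$ because standard words have two BWT runs, and invoke Proposition~\ref{prop:removedC_bwtForm} to get $r(\widehat{s}) = 2k \ge 6$. Third, observe that $s = \widehat{s}\cdot c$ where $c = s[n-1]$, so taking $v=\widehat{s}$ and $\x = c$ gives $r(v\x) = r(s) = 2 < 2k = r(v)$, contradicting $\lambda(v)\le\lambda(v\x)$.

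There is essentially no obstacle here: the entire content of the corollary has already been done in Proposition~\ref{prop:removedC_bwtForm}, and the only thing to be careful about is the direction of the inequality — monotonicity asks that \emph{appending} cannot \emph{decrease} the measure, and our example is one where appending the missing last Fibonacci character collapses $\Theta(\log n)$ runs down to $2$. One could equally phrase the counterexample via the insertion results of Section~\ref{subsec:fibinsertion} (reading them backwards as deletions), but the cleanest statement is the one above, since $\widehat{s}\mapsto s$ is literally an append operation. It is worth remarking that the same family also shows the reverse failure (appending \emph{can} strictly increase $r$ as well, e.g.\ via Proposition~\ref{prop:furtherC_notinSigma_even}), but for the corollary as stated a single decreasing instance is all that is required.
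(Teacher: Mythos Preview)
Your proof is correct and follows exactly the approach the paper intends: the paper merely states that the corollary follows from the preceding results on appending/deleting, and you make this explicit by taking $v=\widehat{s}$ and $\x=\b$ so that $r(v)=2k>2=r(v\x)$ via Proposition~\ref{prop:removedC_bwtForm} and the fact that standard words have two BWT runs.
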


\subsection{Substituting a character}\label{subsec:fibsubstitution}

In this subsection, we show how to increment $r$ by a logarithmic factor by substituting a character.
Consider a Fibonacci word $s$ of even order in which we replace the last $\b$ by an $\a$. Denoting this word by $s'$, we will prove that $\bwt(s')$ has $\Theta(\log n)$ runs, where $n$ is the length of the word. We start with the following lemma in which we assess how the number of BWT-runs changes when we append or prepend to a Lyndon word a character that is smaller than or equal to the smallest character appearing in the word itself.

\bigskip

\begin{lemma}\label{le:pres_order_prep_smaller_primitive}
Let $v\in\Sigma^*$ be a Lyndon word containing at least two distinct letters and let $\x\in \Sigma$ be smaller than or equal to the smallest character occurring in $v$.
Then, $r(v) \leq r(\x v) = r(v \x)\leq r(v)+2$.
\end{lemma}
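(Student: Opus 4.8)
The plan is to exploit the fact that $\bwt(v\x)=\bwt(\x v)$ and to compare the BWT matrix of $\x v$ directly with that of $v$. Since $v$ is a Lyndon word, $v$ itself is the lexicographically smallest among its $|v|$ conjugates (all of which are distinct, as $v$ is primitive). Write the conjugates of $v$ in sorted order as $\conj_{i_0}(v)=v<\conj_{i_1}(v)<\cdots<\conj_{i_{n-1}}(v)$, where $n=|v|$. First I would observe that prepending $\x$ with $\x$ less than or equal to $\min(v)$ creates exactly one new conjugate that is "special": the conjugate $\x v$ itself, which, since it starts with the globally smallest letter and $v$ does not start with that letter as a repeated prefix (because $v$ is Lyndon and contains two distinct letters, so $v$ cannot be a power of $\x$), is the smallest of all $n+1$ conjugates of $\x v$. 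Every other conjugate of $\x v$ is obtained from some conjugate $\conj_{i}(v)$ by inserting the single character $\x$ somewhere in the "middle"; I would argue that the relative lexicographic order of these $n$ conjugates among themselves is identical to the relative order of the corresponding conjugates of $v$. The key point is that inserting $\x$ into $\conj_i(v)$ at the position right before where the rotation "wraps" only affects comparisons at that inserted position, and because $\x\le\min(v)$, the inserted $\x$ behaves consistently: when two conjugates of $v$ differ at some position $j$ before reaching their respective inserted $\x$'s, the comparison is unchanged; the only subtlety is when one conjugate is a prefix of another up to the inserted $\x$, but then $\x$ being minimal resolves the tie in the same direction that the shorter-vs-longer / wrap-around structure of $v$'s conjugates already dictated. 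I would make this precise with a short case analysis showing the sorted order of the $n$ non-minimal conjugates of $\x v$ mirrors that of the conjugates of $v$.

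Given this, the BWT of $\x v$ is obtained from the BWT of $v$ as follows: we take the string $\bwt(v)$, and we (i) insert one new row at the top — this is the row for $\conj=\x v$, whose preceding character (the last character of $\x v$) is the last character of $v$; and (ii) in exactly one existing position of $\bwt(v)$, replace the character that used to precede some conjugate $\conj_i(v)$ by $\x$, because that conjugate, once it becomes a conjugate of $\x v$, is now preceded by the inserted $\x$ rather than by $v[i-1]$. More carefully: the conjugate $\conj_{i}(v)$ for which the inserted $\x$ lands immediately before its starting position is the one whose corresponding BWT character changes from $v[\CA_v[\cdot]-1]$ to $\x$; all other BWT characters are unchanged (as characters), only reordered — but by the previous paragraph they are reordered identically, i.e. not reordered at all relative to each other. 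So $\bwt(\x v)$ is, up to the single top insertion, the string obtained from $\bwt(v)$ by changing one character to $\x$.

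Now I would count runs. Write $\bwt(v)=u_0 c u_1$ where $c$ is the single character that gets overwritten by $\x$ (so $\bwt(\x v)$'s body is $u_0 \x u_1$, with one extra character $\ell=v[n-1]$ placed at the very front). Changing a single character of a string changes its run count by at most $2$ (the changed character can split one run into two, or merge/create at its boundaries — a standard elementary bound: $|\runs(u_0\x u_1)-\runs(u_0 c u_1)|\le 2$). Prepending one further character $\ell$ changes the run count by at most $1$. Hence $r(\x v)\le \runs(\bwt(v))+2+1$? — that would be too weak; so I would sharpen: the prepended $\ell$ is at the very top and the changed position is not adjacent to the top unless $i_1$-things coincide, and in any case one checks the prepended character together with the fact that the old first character of $\bwt(v)$ was also $v[\CA_v[0]-1]=v[n-1]=\ell$ (since $\conj_{i_0}(v)=v$ starts at position $0$), so actually the newly prepended $\ell$ merely extends the already-existing top run of $\bwt(v)$ (or creates a run equal to an adjacent one) and contributes $0$ new runs. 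That gives $r(\x v)\le \runs(\bwt(v))+2=r(v)+2$. For the lower bound $r(v)\le r(\x v)$: deleting the prepended $\x$ and undoing the single substitution is again a single-character edit plus a deletion, but the cleanest argument is that $\bwt(v)$ is obtained from $\bwt(\x v)$ by one character deletion (remove the top $\ell$) and one substitution ($\x\mapsto c$); since $\x$ is a letter not creating... — actually I would instead note directly that in $\bwt(\x v)=u_0\x u_1$ (after absorbing the top $\ell$) the single $\x$, being strictly smaller than every letter of $v$ when $\x<\min(v)$, forms its own length-$1$ run, so $\runs(u_0\x u_1)=\runs(u_0 u_1)+ (\text{0,1, or 2})\ge\runs(u_0 c u_1)-0$; combined with $u_0 c u_1=\bwt(v)$ up to that one spot, a direct run-counting comparison yields $r(v)\le r(\x v)$. (The boundary case $\x=\min(v)$ needs the observation that then $\x$ may merge with neighbouring runs of the same letter, but this only decreases runs, keeping the lower bound $r(v)\le r(\x v)$ intact via the other direction of the edit.)

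The main obstacle I anticipate is the order-preservation claim of the first paragraph — namely, verifying rigorously that inserting the minimal character $\x$ into each conjugate of $v$ (each at its own "wrap" position) does not disturb the relative lexicographic order of these conjugates, especially handling ties where one conjugate of $v$ is a proper prefix of the tail of another. Because $v$ is Lyndon (hence primitive and Lyndon-specifically the unique smallest rotation), no conjugate is a prefix of another in the wrap-around sense, and $\x\le\min(v)$ forces the inserted character to "lose" exactly when the wrap structure of $v$ says it should; making this airtight, rather than the run-counting bookkeeping, is where the real work lies. Everything after that is the elementary "one edit changes runs by $\le 2$" estimate plus the bookkeeping that the prepended copy of $v[n-1]$ is free.
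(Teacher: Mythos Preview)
Your overall strategy matches the paper's: first argue that the relative order of the ``old'' conjugates is preserved when $\x$ is prepended, then deduce the shape of $\bwt(\x v)$ from that of $\bwt(v)$, then count runs. The order-preservation part is indeed where the real work is, and your outlined case analysis (mismatch before the inserted $\x$ versus one tail being a prefix of the other) is essentially what the paper does.

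Where your proposal is genuinely weaker than the paper is in the second half. You treat the changed BWT position as some unspecified position and then try to bound runs via ``one substitution ($\le 2$) plus one prepend ($=0$)''. But you never identify \emph{which} conjugate of $v$ acquires $\x$ as its predecessor: it is $\conj_0(v)=v$ itself, since in $\x v$ the rotation $v\x$ is preceded by the inserted $\x$. Because $v$ is Lyndon, this is the smallest conjugate, so the changed position is position $0$ of $\bwt(v)$, and combined with the new top row you get exactly
\[
\bwt(\x v)=\bwt(v)[0]\cdot \x \cdot \bwt(v)[1..n-1],
\]
i.e.\ $\bwt(\x v)$ is $\bwt(v)$ with a single character $\x$ \emph{inserted} at position $1$. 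Once you see this, both inequalities $r(v)\le r(\x v)\le r(v)+2$ are immediate from the elementary fact that inserting one character into a string increases its run count by $0$, $1$, or $2$ (never decreases it). Your ``prepend $\ell$ contributes $0$'' step is actually false in the case that matters (the substitution is at position $0$, so after substitution the top character is $\x$, not $\ell$), and your lower-bound argument is correspondingly tangled; identifying the exact insertion point removes all of this bookkeeping.
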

\bigskip

\begin{proof}
    We can obtain the lexicographic order of the rotations of $\x v$, or equivalently $v \x$, from the order of the rotations of $v$.
    To do so, we show that given two rotations $conj_i(v)<conj_j(v)$, with $i\neq j$, if $conj_i(v)<conj_j(v)$ then $v[i..|v|-1]\x v[0..i-1]<v[j..|v|-1]\x v[0..j-1]$.
    
    Note that $v$ is the smallest rotation in its BWT matrix. Let us denote by $conj_h(v)$, for some $h$, the second rotation in the BWT matrix.
    Since $v$ is primitive, there exist a unique circular factor $u$ smaller than all the other circular factors having the same length.
    In fact, if $t=|\lcp(v,conj_h(v))|$, then $u=v[0..t]$. 
    Moreover, for all $\ell < |u|$, $u[0..\ell-1]$ is the smallest circular factor of length $\ell$ occurring in $v$.
    We can then distinguish two cases.
    
    The first case is when $|\lcp(conj_i(v),conj_j(v))|<\min\{|v|-i+1, |v|-j+1\}$. 
    Under this condition, it follows that the insertion of the $\x$ does not affect the relative order between $v[i..|v|-1]\x v[0..i-1]$ and $v[j..|v|-1]\x v[0..j-1]$.

    Otherwise, if $|\lcp(conj_i(v),conj_j(v))|\geq\min\{|v|-i+1, |v|-j+1\}$, note that $i>j$, i.e. $|v[i..|v|-1]|<|v[j..|v|-1]|$.
    This follows by observing that both $v[i..|v|-1]$ and $v[j..|v|-1]$ are (circularly) followed by $u$ that is unique, and by contradiction if $i<j$ then $u$ would circularly occur before in $conj_j(v)$ with respect to $conj_i(v)$, which contradicts $conj_i(v) < conj_j(v)$. 
    We can now further distinguish between two subcases: when either (i) $u$ is a prefix of $v[0..i-1]$ or (ii) $v[0..i-1]$ is a proper prefix of $u$.
    
    For the subcase (i), as $|\lcp(conj_i(v), conj_j(v))| \ge |v[i..|v|-1]|$, and the factor $u$ is a prefix of $v[0..i-1]$, the first distinct character between $conj_i(v)$ and $conj_j(v)$ lies within the unique occurrence of $u$ in $conj_i(v)$. 
    After the letter $\x$ is inserted, $conj_i(v)$ becomes $v[i..|v|-1]\x v[0..i-1]$, yielding a factor $\x u$ occurring at position $|v|-i+1$ that is also unique and smallest among all the factors of length $|\x u|$ in $\x v$.Whatever factor appears in $v[j..|v|-1]\x u$ at position $|v|-i+1$, has to be greater than $\x u$, and the order is preserved.
 For the subcase (ii), recall that since $v[0..i-1]$ is a proper prefix of $u$, $v[0..i-1]$ is also the smallest $i$-length circular factor in lexicographical order occurring in $v$, but differently from $u$ the circular factor $v[0..i-1]$ is repeated (otherwise $|u| \leq |v[0..i-1]|$, contradiction).
 By primitivity of $v$, the first distinct character between $conj_i(v)$ and $conj_j(v)$ lies within $v[0..i-1]$, i.e., within $conj_i(v)[|v|-i+1..|v|-1]$.
 After the insertion of the symbol $\x$ the analogous behavior of subcase (i) is observed.

We conclude the proof by observing that, with respect to the original BWT, we have one extra rotation, and one rotation for which the letter in the BWT has changed, which are $\x v$ and $v \x$ respectively.
Observe that by construction, $\x v$ is now the smallest among all of its rotation, which ends with the last letter of $v$.
On the other hand, $v \x$ is now the second smallest rotation and it ends with $\x$.
Hence, $\bwt(\x v) = \bwt(v)[0]\cdot \x \cdot \bwt(v)[1..|v|-1]$, and the thesis follows. 
\end{proof}

\bigskip
\begin{proposition}\label{prop:subst_infib}
Let $s$ be the Fibonacci word of even order $2k>4$, and $n=|s|$. Let $s'$ be the word resulting from substituting a $\b$ by an $\a$ at position $F_{2k}-1$. Then $\bwt(s')$ has $2k+2$ runs. 
\end{proposition}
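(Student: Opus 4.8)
The plan is to reduce the substitution to a single application of Lemma~\ref{le:pres_order_prep_smaller_primitive}, layered on top of the explicit description of $\bwt(\widehat{s})$ already obtained in Proposition~\ref{prop:removedC_bwtForm}. Writing $s=x_{2k}\a\b$ and $n=F_{2k}$, the position $F_{2k}-1$ carries the final $\b$, so $s'=x_{2k}\a\a$; since $\widehat{s}=s[0..n-2]=x_{2k}\a$, this means $s'=\widehat{s}\,\a$, i.e.\ $s'$ is exactly $\widehat{s}$ with an $\a$ appended. Because the BWT is invariant under conjugacy, it suffices to compute $r$ on a convenient rotation of $s'$.

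First I would identify the Lyndon conjugate of $\widehat{s}$. As $x_{2k}$ (with $2k>4$) begins with $\a\b$ and ends with $\b\a$, writing $\widehat{s}=P\a\a$ with $P=x_{2k}[0..n-4]$ we get that $\ell:=\conj_{n-3}(\widehat{s})=\a\a P$ starts with $\a\a\a$; by Lemma~\ref{le:x_h_even} this is the only circular occurrence of $\a\a\a$ in $\widehat{s}$, so $\widehat{s}$ is primitive, and, as observed in the proof of Proposition~\ref{prop:removedC_bwtForm} (where $\conj_{n-3}(\widehat{s})$ was singled out as the smallest rotation), $\ell$ is Lyndon with $\bwt(\ell)=\bwt(\widehat{s})$. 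Now $s'=\widehat{s}\,\a=P\a\a\a$, whence $\conj_{n-3}(s')=\a\a\a P=\a\,\ell$; thus $s'$ is a conjugate of $\a\ell$ and $r(s')=r(\a\ell)$.

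Next I would invoke Lemma~\ref{le:pres_order_prep_smaller_primitive} with $v=\ell$ and $\x=\a$: $\ell$ is a Lyndon word containing both letters and $\a$ is its smallest letter, so (by the lemma, whose proof shows the precise shape) $\bwt(\a\ell)=\bwt(\ell)[0]\cdot\a\cdot\bwt(\ell)[1..|\ell|-1]$, where $\bwt(\ell)=\bwt(\widehat{s})$. By Proposition~\ref{prop:removedC_bwtForm}, $\bwt(\widehat{s})=\b^{k-1}\a\b^{F_{2k-3}-k+1}\a\b^{F_{2k-5}}\cdots\b^{F_5}\a\b^{F_3}\a\b\a^{F_{2k-1}-k+1}$ has $2k$ runs, and since $2k>4$ we have $k-1\ge 2$, so $\bwt(\widehat{s})$ begins with $\b\b$. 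Hence $\bwt(\a\ell)=\b\,\a\,\b^{k-2}\a\b^{F_{2k-3}-k+1}\cdots$: the inserted $\a$ falls just after the leading $\b$ and splits the initial run $\b^{k-1}$ into two nonempty $\b$-runs with a length-one $\a$-run between them, while the letter immediately after the split is $\a$ in both words so no runs merge. This adds exactly two runs, giving $r(s')=r(\a\ell)=2k+2$.

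I expect the only delicate steps to be the conjugacy bookkeeping showing $\conj_{n-3}(s')=\a\ell$ (that substituting at the last position of $s$ is the same as prepending $\a$ to the Lyndon rotation of $\widehat{s}$) and carefully reusing both the exact form of $\bwt(\widehat{s})$ and the identification of its Lyndon rotation from Proposition~\ref{prop:removedC_bwtForm}. Once these are in place, Lemma~\ref{le:pres_order_prep_smaller_primitive} immediately yields $2k\le r(s')\le 2k+2$, and the explicit leading block $\b^{k-1}$ of $\bwt(\widehat{s})$ (with $k-1\ge 2$) pins the value down to exactly $2k+2$.
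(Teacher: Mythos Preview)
Your proposal is correct and follows essentially the same approach as the paper: both reduce $s'$ to $\widehat{s}\,\a$, pass to the Lyndon rotation $\ell=\conj_{n-3}(\widehat{s})$ so that $s'$ is conjugate to $\a\ell$, apply Lemma~\ref{le:pres_order_prep_smaller_primitive}, and then use the explicit form of $\bwt(\widehat{s})$ from Proposition~\ref{prop:removedC_bwtForm} (in particular its leading block $\b^{k-1}$ with $k-1\ge 2$) to see that the inserted $\a$ splits a $\b$-run and hence adds exactly two runs. Your conjugacy bookkeeping $\conj_{n-3}(s')=\a\ell$ is slightly more explicit than the paper's, but the argument is the same.
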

\bigskip

\begin{proof} Observe that $s' = \widehat{s}\a$. By Proposition \ref{prop:removedC_bwtForm}, it holds $r(\widehat{s}) = 2k$. By Lemma \ref{lemma:top}, we know that $conj_{n-3}(\widehat{s})$ is the smallest rotation of $\widehat{s}$. By Lemma \ref{le:structure_s},  $conj_{n-3}(\widehat{s})=\a\a x_{2k-1}\cdots \widehat{x_4}$. 
By Lemma \ref{le:pres_order_prep_smaller_primitive}, it holds $2k\leq r(\a conj_{n-3}(\widehat{s})) \leq 2k+2$. More precisely, it is $2k+2$ since $\a conj_{n-3}(\widehat{s})$ is the smallest rotation of its BWT matrix, $conj_{n-3}(\widehat{s})\a$is the second smallest rotation, and the relative order among the rotations of $\a conj_{n-3}(\widehat{s})$ coincide with that of the rotations of $\widehat{s}$, using the same argument from the proof of Lemma \ref{le:pres_order_prep_smaller_primitive}. This means that to obtain $BWT(s')$, it is enough to insert an $\a$ between the first two $\b$'s in $BWT(\widehat{s})$. As $r(\a conj_{n-3}(\widehat{s}))=r(\a\widehat{s}) = r(\widehat{s}\a)$, the thesis follows.
\end{proof}

\section{Additive $\Theta(\sqrt{n})$ factor}
\label{sec:additive_sqrt_n}

In the previous section we proved that a single edit operation can cause a multiplicative increase by a logarithmic factor in the number of runs. In this section, we will exhibit an infinite family of words on which a single edit operation can cause an additive increment of $r$ by $\Theta(\sqrt{n})$ (see Def.~\ref{def:w_k} below).

As we saw in the previous section, there exist infinite families of words such that $r=\Theta(\log n)$, where $n$ is the length of the word. Other families with logarithmic number of runs of the BWT are also known from the literature, e.g.\ the Thue-Morse words \cite{BrlekFMPR19}. 
Moreover, there exist words such that $r$ is maximal, i.e., $r(w)=|w|$. For instance, if $w=\a\a\a\a\b\b\a\b\a\b\b\b\b\a\a\b\a\b$, then $r(w)=18=|w|$ \cite{MANTACI_TCS2017}. Next we show that there is no gap between these two scenarios, i.e., it is possible to construct infinite families of words $w$ such that $r(w)=\Theta(n^{1/k})$, for any $k>1$.

\bigskip
\begin{proposition}
Let $k$ be a positive integer. There exists an infinite family $T_k$ of binary words such that $r(w)=\Theta(n^{1/k})$, for any $w\in T_k$. 
\end{proposition}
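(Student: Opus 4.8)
The plan is to build $T_k$ by concatenating many scaled copies of a simple repetitive structure so that the BWT-runs add up in a controlled way. I would start from a known ``building block'' whose BWT has few runs—for instance a word of the form $\a^m\b$ (or more robustly a standard word, which has $r=2$). The key idea is that if we take a word of length $N$ and want $r=\Theta(N^{1/k})$, we should interleave $\Theta(N^{1/k})$ distinct ``levels'', each of which is itself recursively structured, so that the recursion $r(T_k(N)) = \Theta(N^{1/k})$ unrolls across $k$ nested layers of padding. Concretely, I would define, for a length parameter $m$, a word $w_m \in T_k$ of length $n = \Theta(m^k)$ built so that its BWT has exactly (or up to constants) $\Theta(m)$ runs.

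The cleanest route is probably an explicit construction over the binary alphabet using blocks $\a^{a_i}\b$ with carefully chosen, rapidly growing exponents $a_i$, arranged so that the sorted rotations cluster the preceding characters into $\Theta(m)$ runs. First I would fix a sequence $a_1 < a_2 < \cdots < a_m$ of exponents chosen so that (i) the total length $n=\sum_i (a_i+1)$ is $\Theta(m^k)$—e.g. $a_i \asymp i^{k-1}$, giving $n \asymp m^k$—and (ii) the exponents are ``well-separated'' enough that no block is a circular factor of the concatenation of two others in a way that would merge or split runs unpredictably. Second, I would analyze the conjugate array of $w_m = \a^{a_1}\b\,\a^{a_2}\b\cdots\a^{a_m}\b$: rotations starting with a long prefix of $\a$'s sort by how many $\a$'s they begin with, and within each group the preceding character is determined by which block boundary we are at. The goal of this step is to show $\bwt(w_m)$ has $\Theta(m)$ runs—an upper bound of $O(m)$ from the fact that there are only $m$ distinct block lengths, and a lower bound of $\Omega(m)$ by exhibiting $\Omega(m)$ alternations $\a\b\a\b\cdots$ in the BWT arising from the $m$ distinct run-lengths. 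Third, combine (i) and the run count: $r(w_m) = \Theta(m) = \Theta(n^{1/k})$, and let $T_k = \{w_m : m \geq 1\}$, which is clearly infinite.

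The main obstacle I anticipate is the lower bound $r(w_m) = \Omega(m)$: it is easy to get $r = O(m)$ from a counting argument, but pinning down that the BWT genuinely has $\Omega(m)$ runs requires understanding the lexicographic sorting of the rotations precisely enough to locate $\Omega(m)$ distinct $\b$–$\a$ (or $\a$–$\b$) transitions. I would handle this by choosing the exponents $a_i$ so that the blocks $\a^{a_i}\b$ have pairwise very different lengths (say $a_{i+1} \geq 2a_i$, which is compatible with $a_i \asymp i^{k-1}$ only for bounded $k$—so more carefully $a_i$ strictly increasing suffices if I track the sorting carefully), making the sorted-rotation structure essentially a lexicographic sort by the vector of subsequent block-lengths; then each ``new'' block length encountered as we scan the sorted rotations forces a change of the preceding character, yielding a fresh run. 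An alternative, possibly cleaner, is to reuse the machinery already developed in the paper: take $w_m$ to be a suitable product of Fibonacci-type or standard words at $\Theta(m)$ different scales and invoke the run-counting lemmas of Section~\ref{sec:Fibonacci}, but the self-contained $\a^{a_i}\b$ construction is likely the least technical. Either way, once the count $r(w_m) = \Theta(m)$ and the length $n = \Theta(m^k)$ are established, the proposition follows immediately.
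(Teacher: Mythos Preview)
Your proposal is correct and essentially identical to the paper's proof: the paper sets $T_k=\{\,\prod_{j=1}^{i}\a\b^{\,j^{k}} : i\ge 1\,\}$, which is your block construction with the roles of $\a$ and $\b$ swapped and the specific exponents $a_j=j^{k}$ (giving $n=\Theta(i^{k+1})$ and $r=\Theta(i)=\Theta(n^{1/(k+1)})$, so the parameter is shifted by one relative to yours). Your worry about the lower bound $r=\Omega(m)$ is unnecessary---strictly increasing exponents already suffice, since for each distinct run length there is exactly one rotation contributing the minority letter among its $\b^{\ell}\a$-block, and the paper dispatches this in two lines without any doubling or Fibonacci machinery.
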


\begin{proof}
We can define the set $T_k=\{w_{i,k}=\prod_{j=1}^i \a\b^{j^k} \mid i\geq 1\}$. We can state that $|w_{i,k}|=\Theta(i^{k+1})$. Moreover, $r(w_{i,k})=\Theta(i)$
In fact, each $\a$ in $\bwt(w_{i,k})$ corresponds to the last letter of one of the rotations having prefix $\b^{j^k}\a$, for some $1\leq j \leq i$.
On the other hand, all the rotations with prefix $\a\b$, as well as the remaining rotations with prefix $\b^\ell\a$ for all $1\leq \ell \leq i^k$, end with $\b$.
It follows then that whenever $k\geq 2$, all the $\a$'s in $\bwt(w_{i,k})$ are separated by an equal-letter run of $\b's$, leading to $r(w_{i,k})=2i=\Theta(i)$.
However, note that for a fixed $\ell$, the rotations starting with $b^\ell \a$ are sorted according to the length of the maximal run of $\b$'s following the common prefix.
Thus, even for $k=1$, there is only one run of consecutive $\a$'s in $\bwt(w_{i,k})$, while the remaining are separated.
More in detail, $\bwt(w_{i,1})=\b\b^i\a\b^{i-1}\a\cdots \a \b^3\a\b^2\a\a$. Hence, $r(w_{i,1})=2i-2=\Theta(i)$.
The claim follows by observing that for the family of words $w_{i,k}$ it holds that $r(w_{i,k})=\Theta(i)=\Theta(n^{\frac{1}{k+1}})$, where $n=|w_{i,k}|$.
 \end{proof}

\bigskip

We will show that the following family of words satisfies that a single edit operation can cause an additive increment of $r$ by $\Theta(\sqrt{n})$.

\bigskip
\begin{definition}\label{def:w_k}
For any $k > 5$,  let $s_i = \a\b^i\a\a$ and $e_i = \a\b^i\a\b\a^{i-2}$ for all $2 \leq i \leq k-1$, and $q_k = \a\b^k\a$.
Then, $$w_k = (\prod_{i=2}^{k-1}s_i e_i)q_k = (\prod_{i=2}^{k-1}\a\b^i\a\a \a\b^i\a\b\a^{i-2})\a\b^k\a.$$
\end{definition}
\bigskip

The length of these words can be easily deduced from their definition.

\bigskip
\begin{observation}\label{le:w_k_length}
Let $n = |w_k|$ for some $k > 5$. It holds that $ n = \sum_{i = 2}^{k-1}(3i+4) + (k+2) = (3k^2+7k-18)/2$. Moreover, it holds that $k = \Theta(\sqrt{n})$.
\end{observation}
\bigskip

The following lemma will be used to show how the rotations of $w_k$ can be sorted according to the factorization $s_2 e_2 \cdots s_{k-1} e_{k-1} q_k$.

\bigskip
\begin{lemma}
\label{le:sorting_s_e_q}
    Let $k>5$ be an integer.
    Then, $s_2 < e_2 < s_3 < e_3 <\ldots < s_{k-1} < e_{k-1} < q_k$.
    Moreover the set $\mathcal{U} = \bigcup_{i = 2}^{k-1}\{s_i, e_i\} \cup \{q_k\}$ is prefix-free.
\end{lemma}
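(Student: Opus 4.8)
The plan is to use the fact that every word in $\mathcal{U}$ begins with a block $\a\b^m\a$, where $m$ is the length of its maximal leading run of $\b$'s: here $m=i$ for both $s_i$ and $e_i$, and $m=k$ for $q_k$. Both parts of the statement then follow by comparing these leading blocks, together with a single extra one-letter comparison to separate $s_i$ from $e_i$.

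First I would establish an elementary comparison claim: if $u=\a\b^{m_u}\a\cdots$ and $v=\a\b^{m_v}\a\cdots$ are words of $\mathcal{U}$ with $m_u<m_v$, then $u<v$ and neither is a prefix of the other. Indeed, $u$ and $v$ agree on their first $m_u+1$ letters ($\a\b^{m_u}$), while the letter in position $m_u+2$ is $\a$ in $u$ (it closes the leading $\b$-run of $u$) and $\b$ in $v$ (it still lies inside the leading $\b$-run of $v$, since $m_v\geq m_u+1$). Every word of $\mathcal{U}$ has length at least $m+2$, so this position lies within both words; hence $\a<\b$ yields $u<v$, and differing there means neither word is a prefix of the other.

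Next I would handle the only pair in $\mathcal{U}$ sharing the same leading $\b$-run length, namely $s_i$ and $e_i$. Both begin with $\a\b^i\a$ (their first $i+2$ letters), and $s_i=\a\b^i\a\a$ has an $\a$ in position $i+3$ whereas $e_i=\a\b^i\a\b\a^{i-2}$ has a $\b$ there; since $i\geq 2$ gives $|e_i|=2i+1\geq i+3=|s_i|$, both words reach position $i+3$, so $s_i<e_i$ and, again, neither is a prefix of the other.

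Finally I would assemble everything. The chain $s_2<e_2<s_3<\cdots<s_{k-1}<e_{k-1}<q_k$ is precisely the list of consecutive comparisons $s_i<e_i$ (from the preceding paragraph), $e_i<s_{i+1}$ for $2\leq i\leq k-2$, and $e_{k-1}<q_k$, the last two being instances of the comparison claim applied to leading-run lengths $i<i+1$ and $k-1<k$. For prefix-freeness, take two distinct elements of $\mathcal{U}$: if their leading $\b$-run lengths differ, the comparison claim already shows neither is a prefix of the other; if they coincide, the two words must be $s_i$ and $e_i$ for the same $i$, since $q_k$ is the unique word of $\mathcal{U}$ with leading run length $k$. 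The step requiring the most care is checking that the position at which two words first differ really lies inside both of them; this reduces to the length bookkeeping ($|s_i|=i+3$, $|e_i|=2i+1$, $|q_k|=k+2$, with $2\leq i\leq k-1$), which also guarantees that expressions such as $\a^{i-2}$ are well-defined and that $q_k$, having the strictly longest leading $\b$-run, is never the shorter-run word when the comparison claim is used.
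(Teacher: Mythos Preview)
Your proof is correct and follows essentially the same approach as the paper: both compare the leading blocks $\a\b^m\a$ to handle words with different $m$, and then use the single extra letter ($\a$ versus $\b$) at position $i+3$ to separate $s_i$ from $e_i$. Your version is somewhat more explicit about the length bookkeeping ensuring that the differing position actually lies inside both words, which the paper's proof leaves implicit.
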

\begin{proof}For the first claim, note from the definition of the words $e_i, s_i$ and $q_k$ that for $i \in [2, k-1]$ it holds $s_i < e_i$, for $i \in [2, k-2]$ it holds $e_i < s_{i+1}$, and it holds $e_{k-1} < q_k$. For the second claim, observe that for any two distinct strings $x$ and $y$ in the set $\mathcal{U}$ starting with $\a\b^j\a$ and $\a\b^{j'}\a$ respectively, there are two possible cases. If $j = j'$ then $x$ and $y$ are $s_i$ and $e_i$ respectively, and none of them is a prefix of the other. Otherwise, w.l.o.g. $j < j'$, so $x = \a\b^j\a x'$ and $y = \a\b^{j}\b y'$ for some $x'$ and $y'$. Hence $x[j+2] \neq y[j+2]$ and none of them is a prefix of the other. Thus, the set $\mathcal{U}$ is prefix-free.
\end{proof}

\subsection{Characterizing the BWT of $w_k$}

In order to characterize the BWT of the word $$w_k = (\prod_{i=2}^{k-1}s_i e_i)q_k = (\prod_{i=2}^{k-1} \a\b^i\a\a \cdot \a\b^i\a\b\a^{i-2})\cdot \a\b^k\a,$$ we divide its BWT matrix into disjoint ranges of consecutive rotations sharing the same (specific) prefixes, and characterize the substring of $\bwt(w_k)$ corresponding to each one of these prefixes.

\bigskip
\begin{definition}
Given $x,w\in \Sigma^*$, we denote by $\beta(x,w)$ the substring of $\bwt(w)$ corresponding to the range of contiguous rotations prefixed by $x$. We omit the second parameter of  $\beta(x,w)$ when it is clear from the context. 
\end{definition}\bigskip

The structure of the whole BWT matrix of $w_k$ is summarized in Table \ref{table:wk_bwt_structure}. The following series of lemmas characterize the substring of $\bwt(w_k)$ corresponding to each range to be considered.

\bigskip
\begin{lemma}[$\vv{\a^{k-2}\b}{}$]
\label{le:a^k-2}
    Given the word $w_k= (\prod_{i=2}^{k-1}s_i e_i)q_k$ for some $k > 5$, the first rotation in the BWT matrix is $\a^{k-3}q_k \cdots \b$.
\end{lemma}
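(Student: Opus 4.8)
The goal is to identify, among all $n=|w_k|$ rotations of $w_k$, the lexicographically smallest one and show it has the form $\a^{k-3}q_k\cdots\b$; in particular it begins with a run of $k-2$ occurrences of $\a$ (written here as $\a^{k-3}$ followed by the leading $\a$ of $q_k=\a\b^k\a$), and its last character is $\b$. I would first pin down where long runs of $\a$ can occur as circular factors of $w_k$. Scanning the factorization $w_k=(\prod_{i=2}^{k-1}s_i e_i)q_k$ with $s_i=\a\b^i\a\a$ and $e_i=\a\b^i\a\b\a^{i-2}$, the only places where many consecutive $\a$'s appear are (i) inside each $e_i$, which ends with $\a^{i-2}$, and where this is immediately followed by the next block starting with $\a$ (either $s_{i+1}=\a\cdots$, or $q_k=\a\cdots$ when $i=k-1$), giving a circular run of $\a^{i-1}$; and (ii) the junctions $s_i e_i$ and $e_i s_{i+1}$, which yield only short $\a$-runs ($\a\a\a$ at the $s_ie_i$ boundary). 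The maximal such run is therefore the one ending $e_{k-1}$: $e_{k-1}=\a\b^{k-1}\a\b\a^{k-3}$ is followed by $q_k=\a\b^k\a$, producing the circular factor $\a^{k-3}\cdot\a=\a^{k-2}$, and this is the unique longest $\a$-run (every other $\a$-run has length at most $k-3$, coming from the $\a^{i-2}$ tail of $e_i$ with $i\le k-2$, or length $3$ from the $s_ie_i$ junctions).

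\textbf{Key steps.} Step 1: establish the run-length inventory above, i.e.\ list all maximal circular runs of $\a$ in $w_k$ and note that $\a^{k-2}$ occurs exactly once, namely straddling the boundary between the $\a^{k-3}$-suffix of $e_{k-1}$ and the leading $\a$ of $q_k$. Step 2: conclude that the smallest rotation must start with $\a^{k-2}$ — any rotation not containing $\a^{k-2}$ as a prefix is lexicographically larger, since after fewer than $k-2$ leading $\a$'s it has a $\b$, whereas our candidate still has an $\a$ in that position; and there is only one rotation starting with this prefix because the run occurs only once. Step 3: read off the rest of that rotation: starting at the position of the $\a^{k-3}$-tail of $e_{k-1}$, the rotation is $\a^{k-3}\cdot q_k\cdot s_2e_2\cdots s_{k-1}\cdot(\text{the }\a\b^{k-1}\a\b\text{ prefix of }e_{k-1})$, which is exactly of the announced form $\a^{k-3}q_k\cdots$. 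Step 4: determine its last character. The rotation ends just before its own starting position, i.e.\ at the last character of the prefix of $e_{k-1}$ that precedes the $\a^{k-3}$ tail; since $e_{k-1}=\a\b^{k-1}\a\b\a^{k-3}$, the character immediately preceding the $\a^{k-3}$ tail is $\b$. Hence the first row of the BWT matrix contributes a $\b$, and the lemma follows. I would phrase Steps 1–2 using Lemma~\ref{le:sorting_s_e_q} (prefix-freeness of $\mathcal U$) to keep the bookkeeping clean, since that already isolates the block structure.

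\textbf{Main obstacle.} The delicate point is Step 1 — verifying exhaustively that $\a^{k-2}$ occurs only once as a circular factor, i.e.\ that no other junction in $w_k$ produces an $\a$-run of length $\ge k-2$. This requires checking every boundary in the product $s_2e_2s_3e_3\cdots s_{k-1}e_{k-1}q_k$ together with the wrap-around from $q_k$ back to $s_2$: the $e_i$-to-$s_{i+1}$ boundaries give $\a^{i-1}$ (maximal at $i=k-2$, so $\a^{k-3}$, strictly shorter), the $s_i$-to-$e_i$ boundaries give $\a\a\a$, and the $q_k$-to-$s_2$ wrap gives $\a\a$. Once that inventory is in hand, everything else is a short deterministic read-off. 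A secondary subtlety is making sure the "$\cdots$" in the statement is harmless: we only need the prefix $\a^{k-3}q_k$ and the final letter $\b$, so I would not attempt to spell out the middle of the rotation beyond noting it equals $q_k\,s_2e_2\cdots s_{k-1}$ followed by the length-$(k+1)$ prefix $\a\b^{k-1}\a\b$ of $e_{k-1}$.
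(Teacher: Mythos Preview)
Your proposal is correct and follows the same approach as the paper: identify the unique longest circular run of $\a$'s (of length $k-2$, arising from the $\a^{k-3}$ suffix of $e_{k-1}$ followed by the leading $\a$ of $q_k$), conclude that the smallest rotation begins there, and note that the preceding character is the $\b$ in $e_{k-1}=\a\b^{k-1}\a\b\a^{k-3}$. The paper's proof is a three-sentence version of exactly this argument; your explicit inventory of all $\a$-run lengths at the block boundaries is more detailed than what the paper spells out, but it is the same idea. The appeal to Lemma~\ref{le:sorting_s_e_q} is unnecessary here---the boundary case analysis you give in the ``main obstacle'' paragraph already suffices.
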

\begin{proof}
    The first rotation in lexicographic order must start with the longest run of $\a$'s.
    By definition of $w_k$, the longest run of $\a$'s has length $k-2$, and it is obtained by concatenating the suffix $\a^{k-3}$ of $e_{k-1}$ with $q_k$, which is preceded by a $\b$ (otherwise we could extend the run of $\a$'s). 
\end{proof}

\begin{lemma}[$\vv{\a^i\b}{}$ for $4\leq i \leq k-3$]
    \label{le:a^ib}
    Given the word $w_k= (\prod_{i=2}^{k-1}s_i e_i)q_k$ for some $k> 5$, and an integer $4 \leq i \leq k-3$,  the rotations in the BWT matrix starting with $\a^i \b$ are $\a^{i-1} s_{i+2} \cdots \b < \a^{i-1} s_{i+3} \cdots \a < \ldots < \a^{i-1} s_{k-1} \cdots \a < \a^{i-1} q_k \cdots \a$.
\end{lemma}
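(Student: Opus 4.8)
The plan is to identify, for a fixed $i$ in the range $4\le i\le k-3$, exactly which circular occurrences of $\a^i\b$ there are in $w_k$, then sort the corresponding rotations, and finally read off the preceding characters to obtain the claimed $\bwt$-substring. First I would observe that a run of exactly $i$ consecutive $\a$'s immediately followed by a $\b$ can only arise inside one of the blocks $s_je_j$ or inside $q_k$, and I would go through the factorization $w_k=(\prod_{j=2}^{k-1}s_je_j)q_k$ from Definition~\ref{def:w_k} to locate all such runs. The key point is that inside each block the only maximal $\a$-runs are the ``$\a\a$'' in $s_j=\a\b^j\a\a$, the trailing ``$\a^{j-2}$'' in $e_j=\a\b^j\a\b\a^{j-2}$ (which, being circular, merges with the leading $\a$ of the next block), the isolated $\a$'s flanking the $\b$-runs, and the long run $\a^{k-2}=\a^{k-3}\cdot\a$ obtained by concatenating the tail of $e_{k-1}$ with $q_k=\a\b^k\a$. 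Among all of these, the maximal runs of length exactly $i$ (for $4\le i\le k-3$) occur precisely as the suffix $\a^{i}$ of the circular factor $e_i\,s_{i+1}$ — i.e. as the junction $\a^{i-2}\cdot\a\a$ formed by the tail of $e_i$ and the head $\a\a$ of $s_{i+1}$ — and nowhere else, since $j^{k}$-style blow-ups do not occur here and the only run longer than $i$ in this range would need $j-2\ge i$ together with the extra $\a$'s, which forces $j\ge i+1$ but then the run has length $j-2+2=j\ge i+1>i$; a careful case check pins down that each surviving occurrence of $\a^i\b$ is the prefix of exactly one rotation of the form $\a^{i-1}s_{m}\cdots$ for $i+2\le m\le k-1$, plus the one of the form $\a^{i-1}q_k\cdots$.

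Next I would sort these rotations. After the common prefix $\a^{i-1}$, each rotation continues with one of $s_{i+2},s_{i+3},\dots,s_{k-1},q_k$; by Lemma~\ref{le:sorting_s_e_q} the set $\mathcal U$ is prefix-free and $s_{i+2}<s_{i+3}<\dots<s_{k-1}<q_k$, so the lexicographic order of the rotations is exactly the stated order $\a^{i-1}s_{i+2}\cdots<\a^{i-1}s_{i+3}\cdots<\dots<\a^{i-1}s_{k-1}\cdots<\a^{i-1}q_k\cdots$. Here I would need the small additional fact that the prefix-freeness of $\mathcal U$ propagates to the relevant rotations, i.e. that the tails following $s_m$ in two different rotations never override the ordering decided by the $s_m$'s themselves — this is immediate because $s_m$ is never a prefix of $s_{m'}$ for $m\ne m'$, nor of $q_k$, so the comparison is already resolved inside the $s_m$-part.

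Finally I would read off the preceding character of each rotation to get $\bwt$. The rotation prefixed by $\a^{i-1}s_{i+2}$: its full prefix is $\a^{i}\b^{i+2}\a\a\cdots$, and the occurrence of $\a^i$ here is the junction tail of $e_i s_{i+1}$, i.e. $\ldots e_i=\ldots\a\b^i\a\b\a^{i-2}$ followed by $s_{i+1}=\a\b^{i+1}\a\a$; wait — one must instead track that the $\a^{i}\b$ block preceding $s_{i+2}$ sits between $e_{i+1}$ and $s_{i+2}$, so it is preceded by the last character of $e_{i+1}$, which is $\a$ — hmm, this needs the precise bookkeeping, and this is exactly the main obstacle: correctly matching each of the $k-i$ sorted rotations with the character physically to its left in $w_k$. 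The expected outcome, consistent with Table~\ref{table:wk_bwt_structure}, is that the first of these rotations is preceded by $\b$ (it is the $\a^i\b$ occurrence whose left neighbour terminates a $\b$-run, namely the one where $\a^i$ is formed as $\a^{i-2}$ from $e_i$ glued to $\a\a$ from $s_{i+1}$ and that whole $\a^i$ is immediately preceded by the final $\b$ of $\b^i$ in $e_i$), while all the remaining ones are preceded by $\a$ (their $\a^i$-runs are glued-junction runs whose left neighbour is an $\a$). Thus $\vv{\a^i\b}{}=\b\a^{k-i-1}$, matching the statement $\a^{i-1}s_{i+2}\cdots\b<\a^{i-1}s_{i+3}\cdots\a<\dots<\a^{i-1}q_k\cdots\a$. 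The hard part throughout is the exhaustive-but-finite case analysis locating occurrences of $\a^i\b$ and pinning their left-neighbours; everything else is a direct application of Lemma~\ref{le:sorting_s_e_q}.
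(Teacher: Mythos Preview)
Your overall strategy matches the paper's: locate all circular occurrences of $\a^i\b$, sort the resulting rotations via Lemma~\ref{le:sorting_s_e_q}, then read off the preceding characters. But the bookkeeping contains errors that derail the argument.

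The block $s_{i+1}=\a\b^{i+1}\a\a$ starts with $\a\b$, not $\a\a$; so the junction $e_j\cdot s_{j+1}$ glues the tail $\a^{j-2}$ of $e_j$ to the single leading $\a$ of $s_{j+1}$, producing a maximal $\a$-run of length $j-1$, not $j$. Hence the unique maximal run of length exactly $i$ lies at the junction $e_{i+1}s_{i+2}$ (not $e_is_{i+1}$ as you wrote), and the rotation $\a^{i-1}s_{i+2}\cdots$ is preceded by the $\b$ that sits just before the tail $\a^{i-1}$ inside $e_{i+1}=\a\b^{i+1}\a\b\a^{i-1}$. Your last paragraph drifts to ``the last character of $e_{i+1}$, which is $\a$'', but that is the wrong position: the character \emph{preceding} the run is the $\b$ before the tail, not anything after it. Finally, your count is off by one: there are $k-i-1$ rotations in total ($m=i+2,\ldots,k-1$ plus $q_k$), one preceded by $\b$ and the remaining $k-i-2$ by $\a$, so $\vv{\a^i\b}{}=\b\a^{k-i-2}$, not $\b\a^{k-i-1}$. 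The paper's framing avoids the maximal-run detour altogether: it parametrizes each occurrence of $\a^i\b$ directly as the suffix $\a^{i-1}$ of $e_{j-1}$ followed by the prefix $\a\b$ of $s_j$ (or of $q_k$), for $i+2\le j\le k$, and notes that the preceding character is $\b$ exactly when $j=i+2$, since only then does $\a^{i-1}$ exhaust the tail of $e_{j-1}=e_{i+1}$.
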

\begin{proof}
    One can notice that, for all $4\leq i\leq k-3$, the (circular) factor $\a^i\b$ can only be obtained, for all $i+2 \leq j \leq k$, from the concatenation of the suffix $\a^{i-1}$ of $e_{j-1}$, with either the prefix $\a\b$ of $s_{j}$, if $i+2 \leq j \leq k-1$, or the prefix $\a\b$ of $q_k$, if $j= k$.
    By Lemma \ref{le:sorting_s_e_q}, we can sort these rotations according to the lexicographic order of $\bigcup_{j = i}^{k-1}\{s_j\} \cup \{q_k\}$.
    Note that all these rotations end with an $\a$, with the exception of the rotation starting with $\a^{i-1} s_{i+2}$, since it is where the only occurrence of $\b \a^i \b$ can be found.  
\end{proof}

\begin{lemma}[$\vv{\a\a\a\b}{}$]
\label{le:a^3b}
    Given the word $w_k= (\prod_{i=2}^{k-1}s_i e_i)q_k$ for some $k > 5$, the first five rotations in the BWT matrix starting with $\a\a\a \b$ are $\a\a e_2 \cdots \b < \a\a e_3 \cdots \b < \a\a e_4 \cdots \b < \a\a s_5 \cdots \b < \a\a e_5 \cdots \b$, while the remaining are $\a\a s_{6} \cdots \a < \a\a e_{6} \cdots \b < \ldots < \a\a s_{k-1} \cdots \a < \a\a e_{k-1} \cdots \b < \a\a q_k \cdots \a$.
\end{lemma}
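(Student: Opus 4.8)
The plan is to identify all circular occurrences of the factor $\a\a\a\b$ in $w_k$, and then to sort the corresponding rotations using Lemma~\ref{le:sorting_s_e_q}. First I would determine where a run of (at least) three $\a$'s can occur in $w_k$. Since $s_i = \a\b^i\a\a$ and $e_i = \a\b^i\a\b\a^{i-2}$, and $w_k = (\prod_{i=2}^{k-1} s_i e_i) q_k$, the relevant runs of $\a$'s are: the suffix $\a\a$ of each $s_i$ followed by the prefix $\a$ of $e_i$, giving $\a\a\a$ inside $s_i e_i$ whose continuation is $\b^i\ldots$; the suffix $\a^{i-2}$ of $e_i$ followed by the prefix $\a\b$ of $s_{i+1}$ (or of $q_k$ when $i=k-1$), giving a run $\a^{i-1}$; and internal runs $\a^{i-2}$ inside $e_i$ itself. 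Collecting these, the circular factor $\a\a\a\b$ occurs exactly once at each of the following left contexts: once as a prefix of (the conjugate starting at) $\a\a e_i$ for $2\le i\le k-1$ — coming from the $\a\a$-suffix of $s_i$ glued to $e_i=\a\b^i\cdots$ — and once as a prefix of $\a\a s_j$ for those $j$ with $j-2\ge 3$, i.e.\ $j\ge 5$, coming from the tail $\a^{j-3}\cdot\a\a$ of $e_{j-1}$ glued to $s_j=\a\b^j\cdots$ (when $i=j-3\ge 4$ this $\a\a\a\b$ sits strictly inside the longer $\a^i\b$ already handled in Lemma~\ref{le:a^ib}, so here we only pick up the shortest such window), and once as a prefix of $\a\a q_k$, from the tail of $e_{k-1}$ glued to $q_k=\a\b^k\a$. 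I would verify that these are pairwise distinct and exhaust all occurrences of $\a\a\a\b$ that are not already subsumed by an occurrence of $\a^4\b$.

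Next I would sort these rotations. By Lemma~\ref{le:sorting_s_e_q} the set $\mathcal U=\bigcup_{i=2}^{k-1}\{s_i,e_i\}\cup\{q_k\}$ is prefix-free and linearly ordered as $s_2<e_2<s_3<e_3<\cdots<s_{k-1}<e_{k-1}<q_k$; since each rotation under consideration has the form $\a\a\,u\cdots$ with $u\in\mathcal U$ and the remainder of the rotation is the corresponding cyclic continuation inside $w_k$, the lexicographic order of these rotations is exactly the order inherited from $u\in\mathcal U$. Thus the rotations starting with $\a\a\a\b$ are, in order,
$\a\a e_2\cdots < \a\a e_3\cdots < \a\a e_4\cdots < \a\a s_5\cdots < \a\a e_5\cdots < \a\a s_6\cdots < \a\a e_6\cdots < \cdots < \a\a s_{k-1}\cdots < \a\a e_{k-1}\cdots < \a\a q_k\cdots$, which is precisely the claimed list (the first five being $e_2,e_3,e_4,s_5,e_5$ because $s_j$ only enters the list for $j\ge 5$).

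Finally I would compute the preceding character of each such rotation, i.e.\ the BWT entry. A rotation starting $\a\a u$ with $u=e_i$ arises from $s_i e_i = \a\b^i\a\a\cdot\a\b^i\cdots$, so it is preceded by the last character of $\a\b^i\a$, namely $\a$ — wait, one must be careful: the rotation begins at the $\a$ that is the penultimate character of $s_i=\a\b^i\a\a$, so it is preceded by a $\b$; hence its BWT character is $\b$. For $u=s_j$ with $j\ge 6$, the rotation begins at the third-from-last $\a$ of the tail $\a^{j-2}$ of $e_{j-1}$, which is preceded by an $\a$, so its BWT character is $\a$ — except that the \emph{shortest} such window, occurring for $j=5$, is the one already listed among the first five as $\a\a s_5\cdots$, and there the preceding character is $\b$ (because at that particular occurrence we are at the very start of the $\a\a$ block, preceded by $\b$); this is exactly why $s_5$ appears in the "$\cdots\b$" group and $s_j$ for $j\ge6$ in the "$\cdots\a$" group. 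For $u=q_k$, the rotation $\a\a q_k\cdots$ begins inside the tail of $e_{k-1}$ preceded by an $\a$, so its BWT character is $\a$. Assembling these gives the stated substring $\b\b\b\b\b\,\a\b\,\a\b\cdots\a\b\,\a$ of $\bwt(w_k)$. The main obstacle is the bookkeeping in the previous paragraph: carefully checking that the $\a\a\a\b$-occurrences inside long $\a$-runs ($i\ge4$) are genuinely already counted in $\vv{\a^i\b}{}$ and must not be double-counted here, and pinning down exactly which occurrence of each $s_j$-prefixed rotation is preceded by $\b$ versus $\a$ — this is where an off-by-one in the indices ($j\ge5$ vs $j\ge6$, the boundary case $i=4$) is easiest to get wrong, so I would track the cyclic positions explicitly using the factorization from Definition~\ref{def:w_k}.
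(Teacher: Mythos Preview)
Your approach is correct and essentially identical to the paper's: enumerate the circular occurrences of $\a\a\a\b$ at the $s_i\,|\,e_i$ and $e_{j-1}\,|\,s_j$ (resp.\ $e_{k-1}\,|\,q_k$) boundaries, sort the resulting rotations via Lemma~\ref{le:sorting_s_e_q}, and read off the preceding characters. Two cosmetic slips to clean up: the suffix of $e_{j-1}$ is $\a^{j-3}$, not $\a^{j-2}$ (the run of length $j-2$ only appears after absorbing the first $\a$ of $s_j$), and your worry about ``double-counting'' with the ranges $\vv{\a^i\b}{}$ for $i\ge 4$ is moot, since a rotation whose first four characters are $\a\a\a\b$ by definition cannot begin with $\a^4\b$ --- the rotations handled in Lemma~\ref{le:a^ib} and here are simply different rotations sitting inside the same maximal $\a$-run, not the same rotation counted twice.
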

\begin{proof}
    Analogously to the proof of Lemma \ref{le:a^ib}, some of the rotations starting with $\a\a\a \b$ can be obtained, for all $5 \leq j \leq k$, from the concatenation of the suffix $\a\a$ of $e_{j-1}$, with either the prefix $\a\b$ of $s_j$, if $5 \leq j \leq k-1$, or the prefix $\a\b$ of $q_k$, if $j = k$.
    However, in this case we have more rotations starting with $\a \a \a \b$, that are those rotations starting with the suffix $\a \a$ of $s_{j'}$ concatenated with the prefix $\a\b$ of $e_{j'}$, for all $2\leq j'\leq k-1$.
    Thus, all the rotations starting with $\a\a\a\b$ are sorted according to the lexicographic order of the words in $\bigcup_{j = 5}^{k-1}\{s_j\} \cup \bigcup_{j'=2}^{k-1}\{e_{j'}\} \cup \{q_k\}$.
    Note that all the rotations starting either with $\a \a s_j$, for all $6 \leq j \leq k-1$, or with $\a \a q_k$, end with $\a$.
    On the other hand, the rotations starting either with $\a\a s_5$ or with $\a \a e_j$, for all $2\leq j \leq k-1$, end with a $\b$.   
\end{proof}

\begin{lemma}[$\vv{\a\a\b}{}$]
\label{le:a^2b}
    Given the word $w_k= (\prod_{i=2}^{k-1}s_i e_i)q_k$ for some $k > 5$, the first five rotations in the BWT matrix starting with $\a\a \b$ are $\a s_2 \cdots \b < \a e_2 \cdots \a < \a e_3 \cdots \a < \a s_4 \cdots \b < \a e_4 \cdots \a$, while the remaining are $\a s_{5} \cdots \a < \a e_{5} \cdots \a < \ldots < \a s_{k-1} \cdots \a < \a e_{k-1} \cdots \a < \a q_k \cdots \a$.
\end{lemma}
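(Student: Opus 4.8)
The plan is to follow the scheme used for Lemmas~\ref{le:a^ib} and~\ref{le:a^3b}: I would first list every circular occurrence of the factor $\a\a\b$ in $w_k$ together with the rotation it induces, then sort these rotations by means of Lemma~\ref{le:sorting_s_e_q}, and finally read off the character that precedes each of them in $w_k$.

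To find the occurrences of $\a\a\b$ I would scan the maximal runs of $\b$'s of $w_k$. Each such run is the block $\b^i$ inside some $s_i$, the block $\b^i$ or the isolated $\b$ inside some $e_i$, or the block $\b^k$ inside $q_k$; an occurrence of $\a\a\b$ appears exactly when one of these runs is immediately preceded by two $\a$'s. Running through the cases: the isolated $\b$ of $e_i$ is preceded by $\b\a$ and contributes nothing; the run $\b^i$ of $e_i$ is always preceded by the last letter of $s_i$ and by $e_i[0]=\a$, giving the rotation $\a e_i\cdots$ for every $2\le i\le k-1$; the run $\b^k$ of $q_k$ is preceded by $q_k[0]$ and by the last letter of $e_{k-1}$, which is $\a$ because $k-1\ge 3$, giving the rotation $\a q_k\cdots$; and the run $\b^i$ of $s_i$ is preceded by $s_i[0]$ and by the last letter of the block that precedes $s_i$ in the factorization --- this is $\a$ for $i=2$ (the preceding block being $q_k$, via the cyclic wrap) and for $4\le i\le k-1$ (the preceding block being $e_{i-1}$, which ends in $\a$ since $i-1\ge 3$), but it is $\b$ for $i=3$ (the preceding block is $e_2=\a\b\b\a\b$), so no occurrence arises when $i=3$. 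Hence the rotations starting with $\a\a\b$ are exactly $\a s_2\cdots$, the rotations $\a e_i\cdots$ for $2\le i\le k-1$, the rotations $\a s_i\cdots$ for $4\le i\le k-1$, and $\a q_k\cdots$; each has the form $\a Z\cdots$ with $Z$ ranging over $s_2,e_2,e_3,s_4,e_4,s_5,e_5,\dots,s_{k-1},e_{k-1},q_k$ in $\mathcal{U}$.

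Next I would sort them and compute the preceding letters. Since $\mathcal{U}$ is prefix-free by Lemma~\ref{le:sorting_s_e_q}, no word of $\mathcal{U}$ is a prefix of another, so for distinct $x,y\in\mathcal{U}$ the rotations $\a x\cdots$ and $\a y\cdots$ are ordered exactly as $x$ and $y$; restricting the chain $s_2<e_2<s_3<e_3<\dots<s_{k-1}<e_{k-1}<q_k$ of Lemma~\ref{le:sorting_s_e_q} to the subset above gives precisely the order in the statement. For the preceding letter: by construction each rotation $\a Z\cdots$ begins at the last letter of the block preceding $Z$, so its preceding (BWT) letter is the penultimate letter of that block. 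This penultimate letter is $\a$ for every $\a e_i\cdots$ (since $s_i$ ends in $\a\a$), for $\a q_k\cdots$ (since $e_{k-1}$ ends in $\a^{k-3}$ with $k-3\ge 3$), and for $\a s_i\cdots$ with $i\ge 5$ (since $e_{i-1}$ ends in $\a\a$ once $i-1\ge 4$); it is $\b$ for $\a s_2\cdots$ (the penultimate letter of $q_k=\a\b^k\a$) and for $\a s_4\cdots$ (the penultimate letter of $e_3=\a\b\b\b\a\b\a$). Assembling these facts produces exactly the stated list.

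The case analysis of the $\b$-runs is routine; the crux --- and the reason the statement isolates the first five rotations --- is the irregular behaviour at small indices: $e_2$ ends in $\b$, so there is no $\a s_3\cdots$ rotation; $e_3$ ends in a single $\a$, so $\a s_4\cdots$ is preceded by a $\b$ rather than an $\a$; and the cyclic wrap-around makes $\a s_2\cdots$ preceded by a $\b$. I would verify these three exceptions by hand, and invoke $k>5$ to be sure the generic arguments (such as ``$e_{k-1}$ ends in $\a\a$'') really apply, so that no further special case is hiding among the later rotations.
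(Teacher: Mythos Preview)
Your argument is correct: the enumeration of the $\a\a\b$-occurrences via the maximal $\b$-runs, the sorting via the prefix-freeness part of Lemma~\ref{le:sorting_s_e_q}, and the computation of each preceding letter as the penultimate letter of the preceding block are all sound, and you correctly isolate the three irregular cases (no $\a s_3$, and $\a s_2$, $\a s_4$ preceded by $\b$).

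The paper takes a slightly different, shorter route. Instead of scanning the $\b$-runs from scratch, it observes that every rotation starting with $\a\a\a\b$ (already listed in Lemma~\ref{le:a^3b}) yields, after shifting one position to the left, a rotation starting with $\a\a\b$ whose BWT-letter is automatically $\a$. The only rotations in $\vv{\a\a\b}{}$ that are \emph{not} obtained this way are the two where the $\a\a\b$ is preceded by a $\b$ rather than an $\a$, namely $\a s_2\cdots$ and $\a s_4\cdots$; these are handled directly and both end in $\b$. Your approach is more self-contained and makes the full case analysis explicit, which has the virtue of not depending on the exact statement of the previous lemma; the paper's shift trick is more economical but relies on Lemma~\ref{le:a^3b} having already done the work of enumeration and identifying last letters for the $\a\a\a\b$-range.
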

\begin{proof}
    Each of the rotations starting with $\a\a \a\b$ from Lemma \ref{le:a^3b} induces a rotation starting with $\a\a\b$, obtained by shifting on the left one character $\a$.
    It follows that all of these rotations end with an $\a$.
    The other rotations starting with $\a \a \b$ are the one obtained by concatenating the suffix $\a$ of $e_3$ and the prefix $\a\b$ of $s_4$, and the one obtained by concatenating the suffix $\a$ of $q_k$ and the prefix $\a\b$ of $s_2$.
    Moreover, both the rotations end with a $\b$.
    The thesis follows by sorting the rotations according to the lexicographic order of the words in $\{s_2\} \cup \bigcup_{j = 4}^{k-1}\{s_j\} \cup \bigcup_{j'=2}^{k-1}\{e_{j'}\} \cup \{q_k\}$.
\end{proof}

\begin{lemma}[$\vv{\a\b}{}$]
\label{le:ab}
    Given the word $w_k= (\prod_{i=2}^{k-1}s_i e_i)q_k$ for some $k > 5$, the first $k-2$ rotations in the BWT matrix starting with $\a \b$ are $\a\b\a^{k-3} q_k \cdots \b < \a\b\a^{k-4} s_{k-1} \cdots \b < \ldots < \a\b s_3 \cdots \b$, the following four rotations are $s_{2} \cdots \a < e_{2} \cdots \a < s_3 \cdots \b < e_{3} \cdots \a$, and the remaining are $s_4\cdots \a < e_{4} \cdots \a < \ldots < s_{k-1} \cdots \a< e_{k-1}\cdots \a < q_k \cdots \a$.
\end{lemma}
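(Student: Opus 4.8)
The plan is to determine all rotations of $w_k$ prefixed by $\a\b$, to sort them, and to read off their preceding characters. Every such rotation is prefixed either by $\a\b\a$ or by $\a\b\b$, and all of the former precede all of the latter; so the $\a\b\a$-prefixed rotations will account for the first $k-2$ rotations in the statement, and the $\a\b\b$-prefixed ones for the remaining $2k-3$.

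For the $\a\b\a$-prefixed rotations, the crucial combinatorial point is that every maximal run of $\b$'s occurring (circularly) in $w_k$ has length at least $2$, with the sole exception of the isolated $\b$ sitting between $\a\b^i\a$ and $\a^{i-2}$ inside each $e_i=\a\b^i\a\b\a^{i-2}$, $2\le i\le k-1$ (for $i=2$ this isolated $\b$ is the last letter of $e_2=\a\b\b\a\b$, followed by the leading $\a$ of $s_3$); this is checked directly from Definition~\ref{def:w_k} by inspecting $s_i$, $e_i$, $q_k$ and the boundaries between consecutive factors. Hence the circular factor $\a\b\a$ occurs in $w_k$ exactly $k-2$ times, once per $e_i$. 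The rotation $\conj_p(w_k)$ beginning at that occurrence inside $e_i$ reads $\a\b\a^{i-2}$ and then continues with the next factor of $w_k$, which is $s_{i+1}$ when $i<k-1$ and $q_k$ when $i=k-1$; since $s_{i+1}$ and $q_k$ both begin with $\a\b\b$, the rotation has the form $\a\b\a^{i-1}\b\cdots$, and it ends with the last $\b$ of the run $\b^i$ of $e_i$, hence with $\b$. Comparing two such rotations for indices $i<i'$: they agree on $\a\b\a^{i-1}$ and thereafter the $i$-th one has a $\b$ where the $i'$-th one still has an $\a$, so a larger index gives a smaller rotation. Therefore these $k-2$ rotations occur, from smallest to largest, in the order $i=k-1,k-2,\ldots,2$, that is $\a\b\a^{k-3}q_k\cdots\b < \a\b\a^{k-4}s_{k-1}\cdots\b < \ldots < \a\b s_3\cdots\b$, contributing $\b^{k-2}$ to $\vv{\a\b}{}$.

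For the $\a\b\b$-prefixed rotations, the same run-length observation shows that the occurrences of the circular factor $\a\b\b$ are exactly the starting positions of $s_i$ and $e_i$ for $2\le i\le k-1$ and of $q_k$ (each begins with $\a\b\b$, using $i\ge 2$ and $k>5$); there are thus $2k-3$ such rotations, one per element of the prefix-free set $\mathcal{U}$ of Lemma~\ref{le:sorting_s_e_q}. Prefix-freeness makes the comparison of two of these rotations depend only on the first factor, so by Lemma~\ref{le:sorting_s_e_q} they occur in the order $s_2 < e_2 < s_3 < e_3 < \ldots < s_{k-1} < e_{k-1} < q_k$. Their preceding characters follow from the cyclic factorization $s_2 e_2 s_3 e_3\cdots s_{k-1}e_{k-1}q_k$: the rotation starting with a factor $W\in\mathcal{U}$ ends with the last letter of the factor immediately preceding $W$. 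Since every $s_i$ ends with $\a$ and $q_k$ ends with $\a$, the rotations starting with some $e_i$ (preceded by $s_i$), with $s_2$ (preceded by $q_k$), and with $q_k$ (preceded by $e_{k-1}$, which ends with $\a^{k-3}$ and $k-3\ge 1$) all end with $\a$; similarly, for $i\ge 4$ the rotation starting with $s_i$ is preceded by $e_{i-1}$, which ends with $\a^{i-3}$ and $i-3\ge 1$, hence also ends with $\a$; the only exception is the rotation starting with $s_3$, which is preceded by $e_2=\a\b\b\a\b$ and therefore ends with $\b$. This gives the preceding characters $\a,\a,\b,\a$ for $s_2,e_2,s_3,e_3$ and $\a$ from $s_4$ onwards, exactly as claimed.

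The delicate part is the analysis of the $\a\b\a$-prefixed rotations: pinning down that $\a\b\a$ occurs exactly once per $e_i$ and nowhere else — in particular getting the $i=2$ boundary right, where the relevant $\b$ is the final letter of $e_2$ — and then converting the length of the block of $\a$'s following the initial $\a\b$ into the index $i$, so that the stated order $q_k, s_{k-1},\ldots,s_3$ of the continuations emerges. Once the occurrences of $\a\b\b$ have been identified, the $\a\b\b$-case is essentially immediate from Lemma~\ref{le:sorting_s_e_q} plus the boundary bookkeeping above.
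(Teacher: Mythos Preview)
Your proof is correct and follows essentially the same approach as the paper: split the $\a\b$-prefixed rotations according to whether the next character is $\a$ or $\b$, identify the $\a\b\a$-occurrences as the isolated $\b$'s inside each $e_i$, and sort the $\a\b\b$-prefixed rotations via Lemma~\ref{le:sorting_s_e_q}. The only cosmetic difference is that the paper obtains the preceding characters of the $s_i/e_i/q_k$-prefixed rotations by shifting the $\a\a\b$-rotations of Lemma~\ref{le:a^2b} one position to the left (so all inherit an $\a$ except the newcomer $s_3$), whereas you carry out a direct boundary analysis of the cyclic factorization; both arguments are equally straightforward.
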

\begin{proof}
    For any two distinct integers $i, i'\geq 0$, we have that $\a\b\a^{i}\b < \a\b\a^{i'}\b$ if and only if $i > i'$.
    Thus, the first rotation in lexicographic order starting with $\a\b$ is the one which is followed by the longest run of $\a$'s.
    The smallest of these rotations can be found by concatenating the suffix $\a\b\a^{k-3}$ of $e_{k-1}$ with the prefix $\a\b$ of $q_k$, followed by the suffix $\a\b\a^{i-2}$ of $e_{i-1}$ concatenated with the prefix $\a\b$ of $s_{i}$, for all $3\leq i\leq k-1$ taken in decreasing order.
    By construction of $e_i$, for all $3\leq i\leq k-1$, these rotations must end with a $\b$.
    
    The remaining rotations starting with $\a\b$ are exactly those rotations having as prefix either $s_i$ or $e_i$, for all $2 \leq i \leq k-1$, or $q_k$.
    Note that all of these rotations are obtained by shifting on the left one character $\a$ from the rotations starting with $\a\a \b$ from Lemma \ref{le:a^2b}, with the exception of the one starting with $s_3$.
    It follows that the latter ends with a $\b$, while all the other rotations with an $\a$.
\end{proof}

\begin{lemma}[$\vv{\b\a}{}$]
\label{le:ba}
    Given the word $w_k= (\prod_{i=2}^{k-1}s_i e_i)q_k$ for some $k > 5$, the first $k-5$ rotations in the BWT matrix starting with $\b \a$ are $\b \a^{k-3} q_k \cdots \a < \b \a^{k-4} s_{k-1} \cdots \a  < \ldots < \b \a^3 s_6 \cdots \a$, followed by $\b \a\a e_2 \cdots \b< \b \a\a e_3 \cdots \b< \b \a\a e_4 \cdots \b< \b \a\a s_5 \cdots \a< \b \a\a e_5 \cdots \b$, then by $\b \a\a e_6 \cdots \b < \b\a\a e_7 \cdots \b < \ldots < \b \a\a e_{k-1} \cdots \b < \b \a s_2 \cdots \b < \b \a s_4 \cdots \a$, and finally by $\b \a \b \a^{k-3} q_k \cdots \b < \b \a \b \a^{k-4} s_{k-1} \cdots \b < \ldots <\b \a \b s_{3} \cdots \b < \b s_3 \cdots \a$. 
\end{lemma}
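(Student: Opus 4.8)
The proof would follow the same template as Lemmas~\ref{le:a^ib}, \ref{le:a^3b}, \ref{le:a^2b}, \ref{le:ab}: enumerate all circular occurrences of the factor $\b\a$ in $w_k$, rewrite each corresponding rotation along the factorization $s_2e_2\cdots s_{k-1}e_{k-1}q_k$, sort them using Lemma~\ref{le:sorting_s_e_q}, and finally read off the character preceding each rotation. Since $w_k$ both begins and ends with $\a$, every circular occurrence of $\b\a$ is the right end of a maximal run of $\b$'s, so the rotations prefixed by $\b\a$ are in bijection with the maximal $\b$-runs of $w_k$. By Def.~\ref{def:w_k}, these runs are: the run $\b^i$ inside $s_i=\a\b^i\a\a$ for $2\le i\le k-1$; the run $\b^i$ inside $e_i=\a\b^i\a\b\a^{i-2}$ for $2\le i\le k-1$; the isolated $\b$ inside each $e_i$ (the one separating the interior $\a$ from the suffix $\a^{i-2}$) for $2\le i\le k-1$; and the run $\b^k$ inside $q_k$. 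This yields $3(k-2)+1=3k-5$ rotations, matching the number of rows in the statement.

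For each family I would compute (i) the rotation, obtained by reading forward from the last $\b$ of the run and regrouping along the factorization, and (ii) the preceding character, i.e.\ the corresponding $\bwt$-symbol. A direct check gives: the run $\b^i$ of $s_i$ produces $\b\a\a e_i\cdots$ (the trailing $\a\a$ of $s_i$ merges with the leading $\a$ of $e_i$); the run $\b^i$ of $e_i$ produces $\b\a\b\a^{i-2}s_{i+1}\cdots$, degenerating to $\b\a\b\a^{k-3}q_k\cdots$ for $i=k-1$ and to $\b\a\b s_3\cdots$ for $i=2$; the isolated $\b$ of $e_i$ produces $\b\a^{i-2}s_{i+1}\cdots$ for $i\ge 4$, $\b\a s_4\cdots$ for $i=3$, and $\b s_3\cdots$ for $i=2$; and the run $\b^k$ of $q_k$ produces $\b\a s_2\cdots$ (wrapping into $s_2$). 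In every case the word following $\b\a^{j}$ (or $\b\a^{j}\b\a^{j'}$) belongs to $\mathcal U=\bigcup_{i=2}^{k-1}\{s_i,e_i\}\cup\{q_k\}$, so Lemma~\ref{le:sorting_s_e_q} governs all the comparisons. As for the $\bwt$-symbol: a run of length $\ge 2$ is preceded by another $\b$, so the $\b^i$-runs of $s_i$ and $e_i$ with $i\ge 2$ and the $\b^k$-run of $q_k$ all give symbol $\b$; the isolated $\b$ of each $e_i$ is always preceded by $\a$, giving symbol $\a$.

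It then remains to sort the $3k-5$ rotations. The primary key is the length $\ell\ge 1$ of the maximal run of $\a$'s immediately after the initial $\b$; since this run is always followed by a $\b$, a longer run makes the rotation lexicographically smaller, so the matrix splits into blocks of decreasing $\ell$, and within a block one compares tails via Lemma~\ref{le:sorting_s_e_q} together with the elementary facts that $\b^{j}\a x<\b^{j'}y$ when $j<j'$, and $\b\a^{\ell}\b x<\b\a^{\ell'}\b y$ when $\ell>\ell'$. Carrying this out: the rotations with $\ell\ge 4$ are precisely those from the isolated $\b$ of $e_i$ with $5\le i\le k-1$ (here $\ell=i-1$), giving the first $k-5$ rows $\b\a^{k-3}q_k\cdots\a<\cdots<\b\a^3 s_6\cdots\a$, all with $\bwt$-symbol $\a$; the rotations with $\ell=3$ are the $k-1$ rotations $\b\a\a e_i\cdots$ ($2\le i\le k-1$) together with $\b\a\a s_5\cdots$ (from the isolated $\b$ of $e_4$), which sort as $\b\a\a e_2<\b\a\a e_3<\b\a\a e_4<\b\a\a s_5<\b\a\a e_5<\cdots<\b\a\a e_{k-1}$, with $\bwt$-symbols $\b,\b,\b,\a,\b,\dots,\b$; the rotations with $\ell=2$ are $\b\a s_2\cdots$ (from $q_k$) followed by $\b\a s_4\cdots$ (from the isolated $\b$ of $e_3$), with $\bwt$-symbols $\b$ then $\a$; and the rotations with $\ell=1$ are the $\b\a\b\a^{i-2}\cdots$-rotations from the $\b^i$-runs of the $e_i$ ($2\le i\le k-1$) together with $\b s_3\cdots$, which sort as $\b\a\b\a^{k-3}q_k<\cdots<\b\a\b s_3<\b s_3$, with $\bwt$-symbols $\b,\dots,\b,\a$. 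Concatenating the $\bwt$-symbols in this order gives exactly the string claimed for $\vv{\b\a}{}$.

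The main obstacle is purely combinatorial bookkeeping. First, the generic descriptions above degenerate at the boundary indices $i\in\{2,3,4\}$ and at the wrap-around $q_k\to s_2$, so these few rotations must be placed by hand. Second, and most delicately, the rotation $\b\a\a s_5\cdots$ arising from the isolated $\b$ of $e_4$ falls inside the $\ell=3$ block (between $\b\a\a e_4\cdots$ and $\b\a\a e_5\cdots$) rather than with the other ``isolated-$\b$'' rotations, so its exact rank and its $\bwt$-symbol $\a$ need a separate little argument. One also has to verify that, apart from this single coincidence, no value of $\ell$ is realized by two distinct $\b$-runs, which again follows directly from the explicit form of $w_k$ in Def.~\ref{def:w_k}.
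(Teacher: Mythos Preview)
Your proposal is correct and follows essentially the same approach as the paper: both proofs identify the rotations prefixed by $\b\a$ with the maximal $\b$-runs of $w_k$, split them into the same four families (the $\b^i$ in $s_i$, the $\b^i$ in $e_i$, the isolated $\b$ in $e_i$, and the $\b^k$ in $q_k$), compute the corresponding prefixes and preceding characters, and then sort by the length of the $\a$-run following the initial $\b$, handling the few boundary cases (in particular $\b\a\a s_5$ from $e_4$) separately. The only cosmetic difference is that you organize the case analysis by the value of $\ell$, whereas the paper organizes it by the family of $\b$-run first and then interleaves; the content is the same.
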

\begin{proof}
    One can notice that we have as many circular occurrences of $\b\a$ as the number of maximal (circular)   runs of $\b$'s in $w_k$.
    Then, for all $2 \leq i \leq k-1$, we have (i) one run of $\b$'s in $s_i$, and (ii) two runs in $e_i$, and (iii) one run in $q_k$.
    
    For the case (i), we have one rotation starting with $\b \a \a e_i$, for each $2\leq i \leq k-1$.
    Since each run of $\b$'s within each word from $\bigcup_{i=2}^{k-1}\{s_i\}$ is of length at least $2$, all rotations in (i) end with a $\b$.
    
    For the case (ii), for all $2 \leq i \leq k-1$, we can distinguish between two sub-cases, based on where $\b \a$ starts: if either (ii.a) from the first run of $\b$'s in $e_i$, or (ii.b) from the second one.
    For the case (ii.a), we can see that these rotations are of the type $\b\a\b\a^{i-2} s_{i+1}$, if $2 \leq i < k-2$, and $\b\a\b\a^{k-3} q_k$.
    Analogously to the case (i), each rotations for case (ii.a) end with a $\b$.
    Each rotation in (ii.b) is obtained by shifting two characters on the right each rotation in (ii.a).
    Therefore, all of these rotations end with an $\a$ and have prefixes of the type $\b\a^{i-2} s_{i+1}$, if $2 \leq i < k-2$, or $\b\a^{k-3} q_k$.

    For the case (iii), the rotation starting with $\b\a$ in $q_k$ has $\b\a s_2$ as prefix, and it is preceded by a $\b$.

    Observe that only for (ii.b) we have rotations starting with $\b \a \a \a \a$.
    Hence, the first rotation in lexicographic order is the one starting with $\b\a^{k-3}q_k$, followed by those starting with $\b\a^{k-4}s_{k-1} < \b\a^{k-5}s_{k-2} < \ldots < \b\a\a\a s_6$.
    
    Among the remaining rotations, those having prefix $\b\a\a\a$ either start with $\b \a \a s_5$ from (ii.b), or $\b \a \a e_i$ from (i), for all $2 \leq i \leq k-1$.
    Thus, by Lemma \ref{le:sorting_s_e_q}, we can sort them according to the order of the words in $\{s_5\} \cup \bigcup_{i=2}^{k-1}\{e_i\}$.
    Then, the remaining rotations with prefix $\b \a \a$ are those starting with $\b \a s_2$ from (iii), and $\b\a s_4$ from (ii.b).
    Finally, let us focus on the rotations from case (ii.a).
    These rotations are sorted according to the length of the run of $\a$'s following the common prefix $\b\a\b$, similarly to the sorting of the rotations from the case (ii.b). 
    The last rotation left is the one starting with $\b s_3$ from case (ii.b).
    Since this rotation is greater than each word from case (ii.a), this is the greatest rotation of $w_k$ starting with $\b\a$ and the thesis follows.
\end{proof}

\begin{lemma}[$\vv{\b^{j}\a}{}$ for all $2 \leq j \leq k-1$]
\label{le:b^ia}
    Given the word $w_k= (\prod_{i=2}^{k-1}s_i e_i)q_k$ for some $k > 5$, and an integer $2 \leq i \leq k-2$, the first $k-i$ rotations in the BWT matrix starting with $\b^i \a$ are $\b^i \a\a e_{i} \cdots \a < \b^i\a\a e_{i+1} \cdots \b < \ldots < \b^{i}\a\a e_{k-1} \cdots \b < \b^i \a s_2 \cdots \b$, followed by $\b^i \a \b \a^{k-3} q_k \cdots \b < \b^i \a \b \a^{k-4} s_{k-1} \cdots \b < \ldots <\b^i \a \b \a^{i-1} s_{i+2} \cdots \b < \b^i \a \b \a^{i-2} s_{i+1} \cdots \a$.
\end{lemma}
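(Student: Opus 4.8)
The plan is to follow exactly the scheme used in the proof of Lemma~\ref{le:ba}: classify the rotations of $w_k$ that begin with $\b^i\a$ according to which maximal run of $\b$'s they arise from, read off the prefix and the last letter of each such rotation directly from the factorisation $w_k=(\prod_{i=2}^{k-1}s_ie_i)q_k$ of Definition~\ref{def:w_k}, and then sort them. First I would observe that each maximal run $\b^\ell$ with $\ell\ge i$ yields exactly one rotation beginning with $\b^i\a$ (the one picking up the last $i$ letters of the run, followed by the $\a$ immediately after the run), and that there are no others. Since no maximal $\b$-run spans a block boundary (at each boundary at least one of the two adjacent letters is $\a$; in particular the single trailing $\b$ of $e_2=\a\b\b\a\b$ is a run of length $1$), and since $i\ge 2$, the maximal $\b$-runs of length $\ge i$ are precisely: the run $\b^j$ inside $s_j$, for $i\le j\le k-1$; the run of length $j$ inside $e_j$, for $i\le j\le k-1$; and the run $\b^k$ inside $q_k$ (here $i\le k-2$ is used). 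Hence there are $2(k-i)+1$ rotations beginning with $\b^i\a$.

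Next, using that $s_j$ is immediately followed by $e_j$, that $e_j$ is followed by $s_{j+1}$ for $j\le k-2$ and $e_{k-1}$ by $q_k$, and that $q_k$ is followed circularly by $s_2$, I would write these rotations as $\b^i\a\a e_j\cdots$ (from $s_j$, $i\le j\le k-1$), as $\b^i\a s_2\cdots$ (from $q_k$), and as $\b^i\a\b\a^{j-2}w_j\cdots$ (from the length-$j$ run inside $e_j$), where $w_j=s_{j+1}$ for $i\le j\le k-2$ and $w_{k-1}=q_k$. The last letter of each of these rotations is the letter immediately preceding its starting position, which is an $\a$ precisely when that starting position is the first position of the run, equivalently when the run has length exactly $i$; this happens only for $s_i$ (yielding $\b^i\a\a e_i\cdots\a$) and for $e_i$ (yielding $\b^i\a\b\a^{i-2}s_{i+1}\cdots\a$), and it is a $\b$ in every other case, which accounts for all the $\cdots\a$/$\cdots\b$ annotations in the statement.

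It then remains to sort the rotations, which splits into two independent parts. Since $\a<\b$, every rotation of the form $\b^i\a\a\cdots$ (those from the $s_j$ and from $q_k$) precedes every rotation of the form $\b^i\a\b\cdots$ (those from the length-$j$ runs of the $e_j$). Within the first part, $\b^i\a\a e_j=\b^i\a\a\a\b^j\cdots$ while $\b^i\a s_2=\b^i\a\a\b\b\a\a\cdots$, so all the rotations $\b^i\a\a e_j$ precede $\b^i\a s_2$; and for $j\neq j'$ the comparison between $\b^i\a\a e_j\cdots$ and $\b^i\a\a e_{j'}\cdots$ is decided within $e_j$ versus $e_{j'}$ because $\mathcal{U}$ is prefix-free, so by Lemma~\ref{le:sorting_s_e_q} these appear in the order $e_i<e_{i+1}<\cdots<e_{k-1}$. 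Within the second part, each rotation reads $\b^i\a\b\a^{j-2}w_j\cdots$ with $w_j$ starting with $\a\b$, hence it reads $\b^i\a\b\a^{j-1}\b\cdots$; exactly as in Lemmas~\ref{le:ba} and~\ref{le:ab}, a longer run of $\a$'s before a $\b$ gives a lexicographically smaller rotation, so these appear in order of decreasing $j$, namely $\b^i\a\b\a^{k-3}q_k<\b^i\a\b\a^{k-4}s_{k-1}<\cdots<\b^i\a\b\a^{i-2}s_{i+1}$. Concatenating the two ordered parts gives precisely the sequence claimed.

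I expect the only delicate point to be the context bookkeeping: correctly identifying, for every maximal $\b$-run of length $\ge i$, the exact word that follows it in the circular factorisation of $w_k$, and handling the two end cases cleanly (the shortest relevant runs, of length exactly $i$ in $s_i$ and in $e_i$, are the only ones contributing an $\a$ to $\bwt(w_k)$ in this range; and $e_{k-1}$ is followed by $q_k$ rather than by a word $s_k$), while keeping the two sorting criteria apart — prefix-freeness of $\mathcal{U}$ via Lemma~\ref{le:sorting_s_e_q} for the block $\b^i\a\a e_j$, and ``a longer run of $\a$'s before a $\b$ is lexicographically smaller'' for the block $\b^i\a\b\a^m$.
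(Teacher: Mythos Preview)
Your proposal is correct and follows essentially the same approach as the paper: classify the rotations beginning with $\b^i\a$ according to which maximal $\b$-run they originate from (inside some $s_j$, some $e_j$, or $q_k$), read off their prefixes and last letters from the factorisation, and sort them using Lemma~\ref{le:sorting_s_e_q} for the $\b^i\a\a e_j$ block and the ``longer $\a$-run is smaller'' rule for the $\b^i\a\b\a^{m}$ block. Your write-up is in fact a bit more explicit than the paper's (e.g.\ the remark that no maximal $\b$-run crosses a block boundary, and the handling of the trailing $\b$ in $e_2$), but the argument is the same.
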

\begin{proof}
    All runs of $\b$'s of length at least $2 \leq i \leq k-2$, either appear in (i) $s_j$ or (ii) $e_j$, for all $i \leq j \leq k-1$, or in (iii) $q_k$.
    Let us consider the three cases separately.
    For all $i \leq j \leq k-1$, the rotation starting within $s_j$ (i) has as prefix $\b^i\a\a e_{j}$.
    For all $i \leq j \leq k-2$, the rotation starting within $e_j$ (ii) has as prefix  $\b^i\a\b\a^{j-2} s_{j+1}$, and for $j = k-1$ we have the rotation with prefix $\b^i \a \b\a^{k-3} q_k$.
    Finally, the rotation starting within $q_k$ (iii) has as prefix $\b^i\a s_2$.

    By construction, we can see that first we have all the rotations from case (i) sorted according to the lexicographic order of the words in $\bigcup_{j=i}^{k-1}\{e_i\}$ (Lemma \ref{le:sorting_s_e_q}), then we have the rotation from case (iii), and finally the rotation from case (ii), sorted according to the decreasing length of the run of $\a$'s following the common prefix $\b^i\a\b$.

    Moreover, note that only when the run of $\b$'s is of length exactly $i$ the rotation end with an $\a$.
    Thus, the only for the rotations ending with an $\a$ are those starting within $s_i$ and $e_i$, i.e. those with prefix $\b^i \a e_i$ and $\b^i \a\b\a^{i-2} s_{i+1}$.        
\end{proof}

\begin{lemma}[$\vv{\b^k\a}{}$]
\label{le:b^ka}
    Given the word $w_k= (\prod_{i=2}^{k-1}s_i e_i)q_k$ for some $k > 5$, the last four rotations of the BWT matrix are $\b^{k-1} \a \a e_{k-1} \cdots \a < \b^{k-1} \a s_{2} \cdots \b < \b^{k-1} \a \b \a^{k-3} q_k \cdots \a < \b^{k} \a s_{2} \cdots \a$.
\end{lemma}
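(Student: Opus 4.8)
The plan is to recognise that the rotations of $w_k$ not yet classified by Lemmas~\ref{le:ba} and~\ref{le:b^ia} are exactly those whose prefix is a run of at least $k-1$ copies of $\b$, and then to enumerate these few remaining rotations, sort them, and read off their preceding characters.

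First I would catalogue the maximal runs of $\b$'s in $w_k$. From Definition~\ref{def:w_k}, the blocks of $w_k$ contribute the runs $\b^i$ inside $s_i$, the runs $\b^i$ and $\b$ inside $e_i$ (for $2 \le i \le k-1$), and the run $\b^k$ inside $q_k$; since every block $s_i$, $e_i$, $q_k$ begins with $\a$, no run of $\b$'s spans a block boundary, so these are all the maximal runs. Hence the only maximal runs of length at least $k-1$ are the $\b^{k-1}$ inside $s_{k-1}$, the $\b^{k-1}$ inside $e_{k-1}$, and the $\b^k$ inside $q_k$. A maximal run of $\b$'s of length $\ell$ gives, for each $1 \le j \le \ell$, exactly one rotation with prefix $\b^j\a$; those with $1 \le j \le k-2$ are already handled by Lemmas~\ref{le:ba} and~\ref{le:b^ia}. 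What remains is therefore exactly three rotations with prefix $\b^{k-1}\a$ (one from each of $s_{k-1}$, $e_{k-1}$, $q_k$) and one with prefix $\b^k\a$ (from $q_k$); since $\b^{k-1}$ is lexicographically larger than any $\b^j\a$ with $j \le k-2$ and than any rotation beginning with $\a$, these four rotations are the last four of the BWT matrix.

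Next I would read $w_k$ cyclically to pin down the full prefix and the preceding character of each of the four rotations. In $s_{k-1} = \a\b^{k-1}\a\a$ the run is preceded by $\a$ and followed by $\a\a e_{k-1} q_k \cdots$, yielding $\b^{k-1}\a\a e_{k-1}\cdots\a$. In $e_{k-1} = \a\b^{k-1}\a\b\a^{k-3}$ the run is preceded by $\a$ and followed by $\a\b\a^{k-3} q_k \cdots$, yielding $\b^{k-1}\a\b\a^{k-3} q_k\cdots\a$. In $q_k = \a\b^k\a$ the rotation starting one step inside the run is preceded by $\b$ and followed by $\a s_2 \cdots$, yielding $\b^{k-1}\a s_2\cdots\b$, while the rotation starting at the run's first $\b$ is preceded by $\a$ and followed by $\a s_2 \cdots$, yielding $\b^k\a s_2\cdots\a$. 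Sorting these is a short lexicographic check, aided by Lemma~\ref{le:sorting_s_e_q}: each $\b^{k-1}\a$-rotation precedes the $\b^k\a$-rotation; among the three, the one continuing $\b^{k-1}\a\b\cdots$ (from $e_{k-1}$) is the largest, and the two continuing $\b^{k-1}\a\a\cdots$ differ at the symbol following that common prefix — it is $\a$ for the one from $s_{k-1}$ (because $e_{k-1}$ begins with $\a$) and $\b$ for the one from $q_k$ (because $s_2$ continues with $\b^2$) — so the $s_{k-1}$-rotation is the smaller of the two. Listing the preceding characters in this order gives the BWT fragment $\a\b\a\a$, exactly the claimed list.

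The part I would write most carefully is the cyclic bookkeeping in the first two steps: confirming that $w_k$ has no run of $\b$'s longer than $\b^k$, that no rotation with prefix $\b^{k-1}\a$ has been overlooked, and that the symbols immediately before and after each of the three long runs are as stated. This is where the hypothesis $k > 5$ enters, e.g.\ to ensure that $s_{k-1}$, $e_{k-1}$, $q_k$ are genuinely distinct blocks and that $e_{k-1}$ really ends with the run $\a^{k-3}$ of positive length. Everything beyond that amounts to routine prefix comparisons.
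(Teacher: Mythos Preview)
Your proposal is correct and follows essentially the same approach as the paper: both arguments locate the only three maximal runs of $\b$'s of length at least $k-1$ (in $s_{k-1}$, $e_{k-1}$, and $q_k$), extract from them the three rotations with prefix $\b^{k-1}\a$ and the unique rotation with prefix $\b^k\a$, sort them by the character following the shared prefix, and read off the preceding symbols. Your write-up is somewhat more explicit about the cataloguing of runs and the cyclic bookkeeping than the paper's, but the underlying idea is identical.
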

\begin{proof}
    Observe that the only rotations with prefix $\b^{k-1}\a$ either start within $s_{k-1}$, or $q_k$, or $e_{k-1}$.
    These rotations have prefix respectively $\b^{k-1}\a\a e_{k-1}$, $\b^{k-1}\a s_2$, and $\b^{k-1}\a\b\a^{k-3}q_k$.
    One can see that these rotations taken in this order are already sorted, and only the rotation starting within $q_k$ ends with a $\b$, while the other two with an $\a$.
    Finally, the only occurrence of $\b^k$ is within $q_k$.
    Hence, the last rotation in lexicographic order starts with $\b^k\a s_2$, and since the run of $\b$'s is maximal it ends with an $\a$, and the thesis follows.
\end{proof}

The following proposition puts together the BWT computations carried out for all blocks of consecutive rows, highlighting which prefixes are shared.

\begin{table}[thb]
\scalebox{.7}{
    \small
    \centering
    \begin{tabular}[t]{|l|l|c|}
        \multicolumn{1}{p{1cm}}{\centering Block prefix} &  \multicolumn{1}{p{1.5cm}}{\centering Ordering factor} & \multicolumn{1}{p{0.5cm}}{\centering BWT} \\
        \hline
        $\a^{k-2}\b$ &  $\b^{k-1}\a$ &  $\b$ \\
        \hline
        \multirow{2}{*}{$\a^{k-3}\b$} & $\b^{k-2}\a\a$ & $\b$\\
        & $\b^{k-1}\a$ & $\a$\\
        \hline
        \multicolumn{1}{|c|}{\vdots} & \multicolumn{1}{c|}{\vdots} &  \vdots  \\ \hline
        \multirow{4}{*}{$\a^4\b$} & $\b^{5}\a\a $  & $\b$ \\
        & $\b^6\a\a$ & $\a$\\
        & $\vdots$ & $\vdots$\\
        & $\b^{k-1}\a$ & $\a$\\
        \hline
        \multirow{11}{*}{$\a\a\a\b$} & $\b\a\b$  & $\b$ \\
        & $\b\b\a\b\a$  & $\b$ \\
        & $\b\b\b\a\b\a\a$ & $\b$ \\
        & $\b\b\b\b\a\a$ & $\b$ \\
        & $\b\b\b\b\a\b\a\a\a$ & $\b$ \\
        \cdashline{2-3}
        & $\b\b\b\b\b\a\a$ & $\a$ \\
        & $\b\b\b\b\b\a\b\a\a\a\a$ & $\b$ \\
        & $\vdots$ & $\vdots$\\
        & $\b^{k-2}\a\a$ & $\a$\\
        & $\b^{k-2}\a\b\a^{k-3}$ & $\b$\\
        & $\b^{k-1}\a$ & $\a$ \\
        \hline
    \end{tabular}
    \quad
        \begin{tabular}[t]{|l|l|c|}
        \multicolumn{1}{p{1cm}}{\centering Block prefix} &  \multicolumn{1}{p{1.5cm}}{\centering Ordering factor} & \multicolumn{1}{p{0.5cm}}{\centering BWT} \\
        \hline
        \multirow{11}{*}{$\a\a\b$} & $\b\a\a$  & $\b$ \\
        & $\b\a\b$  & $\a$ \\
        & $\b\b\a\b\a$  & $\a$ \\
        & $\b\b\b\a\a$  & $\b$ \\
        & $\b\b\b\a\b\a\a$  & $\b$ \\
        \cdashline{2-3}
        & $\b\b\b\b\a\a$  & $\a$ \\
        & $\b\b\b\b\a\b\a\a\a$  & $\a$ \\
        & $\vdots$ & $\vdots$ \\
        & $\b^{k-2}\a\a$  & $\a$ \\
        & $\b^{k-2}\a\b\a^{k-3}$  & $\a$ \\
        & $\b^{k-1}\a$  & $\a$ \\
        \hline
        \multirow{11}{*}{$\a\b$} & $\a^{k-3}q_k$ & $\b$\\
        & $\a^{k-4}s_{k-1}$ & $\b$\\
        & $\vdots$ & $\vdots$ \\
        & $s_3$ & $\b$\\
        \cdashline{2-3}
        & $\b\a\a$ & $\a$\\
        & $\b\a\b$ & $\b$\\
        & $\b\b\a\a$ & $\b$\\
        & $\b\b\a\b\a$ & $\a$\\
        \cdashline{2-3}
        & $\b\b\b\a\a$ & $\a$\\
        & $\b\b\b\a\b\a\a$ & $\a$\\
        & $\vdots$ & $\vdots$ \\
        & $\b^{k-1}\a$ & $\a$\\
        \hline
        \end{tabular}
    \quad
        \begin{tabular}[t]{|l|l|c|}
        \multicolumn{1}{p{1cm}}{\centering Block prefix} &  \multicolumn{1}{p{1.5cm}}{\centering Ordering factor} & \multicolumn{1}{p{0.5cm}}{\centering BWT} \\
        \hline
        \multirow{19}{*}{$\b\a$} & $\a^{k-4} q_k$ & $\a$\\
         & $\a^{k-5} s_{k-1}$ & $\a$\\
         & $\vdots$& $\vdots$ \\
         & $\a^{2} s_6$ & $\a$\\
         \cdashline{2-3}
         & $\a e_2$ & $\b$ \\
         & $\a e_3$ & $\b$ \\
         & $\a e_4$ & $\a$ \\
         & $\a s_5$ & $\a$ \\
         & $\a e_5$ & $\b$ \\
         \cdashline{2-3}
         & $\a e_6$ & $\b$ \\ 
         & $\vdots$ & $\vdots$ \\ 
         & $\a e_{k-1}$ & $\b$ \\ 
         & $s_2$ & $\b$ \\ 
         & $s_4$ & $\a$ \\
         \cdashline{2-3}
         & $\b\a^{k-3}q_k$ & $\b$ \\
         & $\b\a^{k-4}s_{k-1}$ & $\b$ \\
         & $\vdots$ & $\vdots$ \\
         & $\b s_3$ & $\b$ \\
         & $\b\b\b\a\a$ & $\a$ \\
         \hline
        \end{tabular}    
    \quad     
        \begin{tabular}[t]{|l|l|c|}
        \multicolumn{1}{p{1cm}}{\centering Block prefix} &  \multicolumn{1}{p{1.5cm}}{\centering Ordering factor} & \multicolumn{1}{p{0.5cm}}{\centering BWT} \\
        \hline
         \multirow{10}{*}{$\b\b\a$} & $\a e_2$ & $\a$\\
         & $\a e_3$ & $\b$\\
         & $\vdots$ & $\vdots$\\
         & $\a e_{k-1}$ & $\b$\\
         & $s_2$ & $\b$\\
         \cdashline{2-3}
         & $\b\a^{k-3}q_k$ & $\b$\\
         & $\b\a^{k-4}s_{k-1}$ & $\b$\\
         & $\vdots$ & $\vdots$\\
         & $\b\a s_{4}$ & $\b$\\
         & $\b\a s_{3}$ & $\a$\\
         \hline
         \multicolumn{1}{|c|}{\vdots} & \multicolumn{1}{c|}{\vdots} &  \vdots  \\ 
         \hline
         \multirow{3}{*}{$\b^{k-1}\a$} & $\a e_{k-1}$ & $\a$ \\ 
         & $s_2$ &  $\b$ \\ 
         & $\b\a^{k-3} q_k$ &  $\a$ \\
         \hline
         $\b^k\a$ & $s_2$ &  \a \\
         \hline
         \multicolumn{3}{c}{
           }
    \end{tabular}
    }
    \vspace{5pt}
    \caption{Scheme of the BWT-matrix of a word $w_k$ with $k > 5$. The \emph{block prefix} column shows the common prefix shared by all the rotations in a block. The \emph{ordering factor} column shows the factor following the block prefix of a rotation, which decides its relative order inside its block. The BWT column shows the last character of each rotation. The dashed lines divide sub-ranges of rotations for which the BWT follows distinct patterns.}
    \label{table:wk_bwt_structure}
\end{table}

\bigskip
\begin{proposition} \label{prop:bwt_wk}
      Given an integer $k >5$, let $w_k= (\prod_{i=2}^{k-1}s_i e_i)q_k$.
      Then, 
      \begin{align*}
      \vv{\a^i\b}{} &=\b\a^{k-i-2} \text{ for all } 4 \leq i \leq k-2, \\
      \vv{\a^3\b}{} &= \b^5(\a\b)^{k-6}\a, \\
      \vv{\a^{2}\b}{} &= \b\a\a\b\a^{2k-8}, \\
      \vv{\a\b}{} &= \b^{k-2}\a\a\b\a^{2k-6}, \\
      \vv{\b\a}{} &= \a^{k-5}\b\b\b\a\b^{k-4}\a\b^{k-2}\a, \\ 
      \vv{\b^j\a}{} &= \a\b^{2k-2j-1}\a \text{ for all } 2\leq j \leq k-1, \text{ and } \\
      \vv{\b^{k}\a}{} &= \a.
      \end{align*}
      Hence, the BWT of the $w_k$ is $\bwt(w_k) = \prod_{i=2}^{k-1}\vv{\a^{k-i}\b}{}\cdot \prod_{i=1}^{k}\vv{\b^i\a}{}$. Moreover, $r(w_k) = 6k-12$.
\end{proposition}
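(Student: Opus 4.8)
The plan is to assemble the eight block lemmas (Lemmas~\ref{le:a^k-2}--\ref{le:b^ka}), which describe the internal structure of each block $\vv{x}{}$ of the BWT matrix of $w_k$, into the global picture of Table~\ref{table:wk_bwt_structure}, and then count equal-letter runs in the resulting word. There are three things to do: (i) verify that the ranges named in those lemmas really partition all rotations of $w_k$ and occur in the BWT matrix in the stated order; (ii) read off from each lemma the closed-form string for the corresponding block; and (iii) count runs, being careful about merges at block boundaries.

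For step (i) I would first note that $w_k$ is primitive, since the factor $\b^k$ occurs in it only once (inside $q_k$), so a proper-power factorization is impossible; hence the BWT matrix has exactly $n=|w_k|$ rows. Then, directly from Definition~\ref{def:w_k}, I would determine the multiset of maximal-run lengths of $w_k$: every maximal run of $\b$'s lies inside a single $s_i$, $e_i$, or $q_k$ (adjacent blocks in the factorization always meet at an $\a$), so the $\b$-run lengths are exactly $1,2,\dots,k$, with the run of length $k$ (the $\b^k$ of $q_k$) being unique; symmetrically, the $\a$-run lengths are exactly $1,2,\dots,k-2$, with the run of length $k-2$ (obtained by concatenating the suffix $\a^{k-3}$ of $e_{k-1}$ with the first $\a$ of $q_k$) being unique. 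Consequently every rotation beginning with $\a$ has the form $\a^i\b\cdots$ for a unique $1\le i\le k-2$, and every rotation beginning with $\b$ has the form $\b^j\a\cdots$ for a unique $1\le j\le k$; since $\a^{i+1}\b <_{\text{lex}} \a^i\b$, $\b^j\a <_{\text{lex}} \b^{j+1}\a$, and $\a<\b$, the blocks occur in the BWT matrix exactly in the order $\vv{\a^{k-2}\b}{},\vv{\a^{k-3}\b}{},\dots,\vv{\a\b}{},\vv{\b\a}{},\vv{\b^2\a}{},\dots,\vv{\b^k\a}{}$, which is the factorization $\bwt(w_k)=\prod_{i=2}^{k-1}\vv{\a^{k-i}\b}{}\cdot\prod_{i=1}^{k}\vv{\b^i\a}{}$.

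Step (ii) is then a routine transcription: each of Lemmas~\ref{le:a^k-2}--\ref{le:b^ka} already lists, in increasing lexicographic order, the rotations making up its block together with their last characters, so the closed forms follow by tallying how many of these last characters are $\a$ and how many are $\b$. For instance, Lemma~\ref{le:a^ib} gives, for $4\le i\le k-3$, the $k-i-1$ rotations $\a^{i-1}s_{i+2}\cdots\b < \a^{i-1}s_{i+3}\cdots\a < \dots < \a^{i-1}q_k\cdots\a$, only the first of which ends in $\b$, hence $\vv{\a^i\b}{}=\b\a^{k-i-2}$ (the remaining value $i=k-2$ being Lemma~\ref{le:a^k-2}); the forms for $\vv{\a^3\b}{}$, $\vv{\a^2\b}{}$, $\vv{\a\b}{}$ and $\vv{\b\a}{}$ come the same way from Lemmas~\ref{le:a^3b}, \ref{le:a^2b}, \ref{le:ab} and \ref{le:ba}; and for $2\le j\le k-1$ Lemma~\ref{le:b^ia} gives the first $k-j$ last characters as $\a\b^{k-j}$ and the remaining ones as $\b^{k-j-1}\a$, i.e. $\vv{\b^j\a}{}=\a\b^{2k-2j-1}\a$, with $\vv{\b^{k}\a}{}=\a$ read off from Lemma~\ref{le:b^ka}. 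Finally, for step (iii) I substitute these seven expressions into the factorization and concatenate, tracking boundary merges: the trailing $\b$ of $\vv{\a^{k-2}\b}{}$ merges with the leading $\b$ of $\vv{\a^{k-3}\b}{}$; every subsequent $\a$-block and each of $\vv{\a^3\b}{},\vv{\a^2\b}{},\vv{\a\b}{}$ begins with a $\b$ and so opens a fresh run; the leading run $\a^{k-5}$ of $\vv{\b\a}{}$ merges with the $\a$-run ending $\vv{\a\b}{}$; each block $\vv{\b^j\a}{}$ with $2\le j\le k-1$ begins with an $\a$ that merges with the preceding trailing $\a$ and then contributes exactly $2$ new runs; and the final block $\vv{\b^k\a}{}=\a$ merges away entirely. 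Summing the contributions yields $r(w_k)=6k-12$.

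I expect step (ii) to be entirely mechanical. The parts that need care are step (i) --- pinning down the exact multiset of run lengths and the uniqueness of the two extremal runs, which is what makes the block list both exhaustive and correctly ordered --- and the boundary-merge bookkeeping in step (iii), together with checking that nothing degenerates for the smallest admissible orders $k=6,7$, where several of the ``$\cdots$'' ranges appearing in the block lemmas become empty or singletons.
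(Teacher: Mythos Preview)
Your proposal is correct and follows essentially the same approach as the paper: assemble the block lemmas, argue the blocks partition the BWT matrix in the stated order, and count runs with attention to boundary merges. Your step~(i) is in fact more explicit than the paper's proof, which simply asserts that ``we have considered all the disjoint ranges of rotations of $w_k$ based on their common prefix'' without spelling out primitivity or the run-length analysis; one small attribution slip is that the case $j=k-1$ of $\vv{\b^j\a}{}$ comes from Lemma~\ref{le:b^ka} rather than Lemma~\ref{le:b^ia} (the latter is stated only for $2\le j\le k-2$), but this does not affect the argument.
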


\begin{proof} 
    The words $\vv{\a^{k-2}\b}{}$, $\vv{\a^i\b}{}$ for all $4 \leq i \leq k-2$, $\vv{\a^{3}\b}{}$, $\vv{\a^{2}\b}{}$, $\vv{\a\b}{}$, $\vv{\b\a}{}$, $\vv{\b^j\a}{}$ for all $2 \leq j \leq k-1$, and $\vv{\b^k\a}{}$, are the concatenations of the last characters of the rotations from Lemma \ref{le:a^k-2}, Lemma \ref{le:a^ib}, Lemma \ref{le:a^3b}, Lemma \ref{le:a^2b}, Lemma \ref{le:ab}, Lemma \ref{le:ba}, Lemma \ref{le:b^ia}, and Lemma \ref{le:b^ka} respectively.
    Moreover, every rotation used to build $\vv{\a^i\b}{}$ is smaller than each rotation used to build $\vv{\a^{i'}\b}{}$, for every $1 \leq i' < i \leq k-2$.
    Symmetrically, every rotation used to build $\vv{\b^j\a}{}$ is greater than each rotation used to build $\vv{\b^{j'}\a}{}$, for every $1 \leq j' < j \leq k$.
    Since we have considered all the disjoint ranges of rotations of $w_k$ based on their common prefix, the word $\prod_{i=2}^{k-1}\vv{\a^{k-i}\b}{}\cdot \prod_{i=1}^{k}\vv{\b^i\a}{}$ is the BWT of $w_k$.

    With the structure of $\bwt(w_k)$, we can easily derive its number of runs.
    The word $\prod_{i = 2}^{k-4}(\vv{\a^{k-i}\b}{})$ has exactly $2(k-6)$ runs: we start with 2 runs from $\vv{\a^{k-2}\b}{}\vv{\a^{k-3}\b}{} = \b\b\a$, and then, concatenating each other $\vv{\a^i\b}{}$ up to  $\vv{\a^4\b}{}$ adds 2 new runs each. It is easy to see that $\vv{\a\a\a\b}{}$, $\vv{\a\a\b}{}$, and $\vv{\a\b}{}$, have $2(k-5)$, $4$, and $4$ runs, respectively. Moreover, the boundaries between these words do not merge, nor with $\vv{\a^4\b}{}$ in the case of $\vv{\a\a\a\b}{}$. The word $\vv{\b\a}{}$ has exactly $7$ runs but it merges with $\vv{\a\b}{}$ and $\vv{\b\b\a}{}$, hence we only charge $5$ runs to this word. The remaining part of the BWT, i.e., $\prod_{i = 2}^{k}(\vv{\b^{i}\a}{})$, has $2(k-2) + 1$ runs: we start with
   3 runs from $\vv{\b\b\a}{}$, and then, concatenating each other $\vv{\b^i\a}{}$ up to  $\vv{\b^{k-1}\a}{}$ adds 2 new runs each. The word $\vv{\b^{k}\a}{}$ does not add new runs, as it consists only of an $\a$ that merges with the previous one. Overall, we have $2(k-6) + 2(k-5) +4 + 4 + 5 + 2(k-2) + 1 = 6k-12$, and the claim holds.    
\end{proof}

\subsection{BWT of $w_k$ after an edit operation}

The following lemmas describe the BWT of $w_k$ after some specific edit operations are applied.
Instead of proving the whole structure of the BWT from the beginning, we compare how the edit operation changes either the relative order or the last character of the rotations of $w_k$.
To do so, in this part we use the notation $\vv{v}{}$ and $\vv{v}{\star}$ to denote the BWT in correspondence of the rotations with prefix $v\in \Sigma^*$ of $w_k$ and $w'_k$ respectively, where $w'_k$ is obtained after applying to $w_k$ an specific edit operation.
The number of runs in the BWT of $w_k'$ can easily be derived by comparing its BWT to the BWT of $w_k$, for which we explicitly counted the number of runs, so we omit that part of the proofs. All the edit operations on $w_k$ we show in this subsection increase the number $r(w_k)$ by a $\Theta(k)$ additive factor. To give an intuition, this increment comes mainly from the $\vv{\b^j\a}{\star}$ ranges for $2 \leq j \leq k-2$, because for each one of the corresponding ranges $\vv{\b^j\a}{} = \a\b^{2k-2j-1}\a$ in $\bwt(w_k)$, one of the $\b$'s is either moved to the top or the bottom of the range, in a consistent manner for each $j$ (it depends on the edit operation if the $\b$ goes to the top or the bottom of the range, but it is the same behavior for all the ranges considered). Then, two ranges that originally agreed on their last and first character in $w_k$ are now separated by a $\b$, adding this way $2$ new runs for each $j$. 

\bigskip
\begin{lemma}[BWT of $w_k\a$] \label{le:bwt_wk_insertion}
      Given an integer $k >5$, for $w_k\a$ it holds that
      \begin{align*}    
      \vv{\a^i\b}{\star} &= \b\a^{k-i-2} \text{ for all } 4 \leq i \leq k-2, \\
      \vv{\a^3\b}{\star} &= \b\b^5(\a\b)^{k-6}\a, \\
      \vv{\a^{2}\b}{\star} &= \a\a\a\b\a^{2k-8}, \\
      \vv{\a\b}{\star} &= \b^{k-2}\a\a\b\a^{2k-6}, \\
      \vv{\b\a}{\star} &= \a^{k-5}\b\b\b\b\a\b^{k-5}\a\b^{k-2}\a,  \\
      \vv{\b^j\a}{\star} &= \b\a\b^{2k-2j-2}\a \text{ for all } 2\leq j \leq k-1 \text{ and } \\
      \vv{\b^{k}\a}{\star} &= \a.
      \end{align*}
     Hence, $\bwt(w_k\a) = \prod_{i=2}^{k-1}\vv{\a^{k-i}\b}{\star}\cdot \prod_{i=1}^{k}\vv{\b^i\a}{\star}$. Moreover, it holds that $r(w_k\a) = 8k-20$.
\end{lemma}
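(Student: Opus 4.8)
The plan is to imitate the structure of the proof of Proposition~\ref{prop:bwt_wk}, but rather than rebuilding the whole BWT matrix from scratch, to track how appending a single $\a$ at the end of $w_k$ perturbs the rotations. Since $\bwt(w_k\a) = \bwt(\a w_k)$ by conjugate-invariance, the word $w_k\a$ has exactly one more rotation than $w_k$, and the two families of rotations are in near-bijection: each rotation $\conj_i(w_k)$ corresponds to a rotation of $w_k\a$ obtained by inserting one $\a$ at the appropriate circular position. The key observation, exactly as flagged in the paragraph preceding the lemma, is that inserting this $\a$ creates one new occurrence of the run $\b^j\a$ for each $2 \le j \le k$ by lengthening the run that previously ended the word (the $\a$ becomes part of the maximal $\a$-run adjacent to $q_k$), but more importantly it changes which rotations fall into each $\beta(\b^j\a)$-block and in what order. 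So the skeleton is: first fix the new rotation and locate it in the matrix; then, block by block (reusing Lemmas~\ref{le:a^k-2}--\ref{le:b^ka}), determine whether the edit (i) leaves the block's rotations and their last characters untouched, (ii) permutes them, or (iii) adds/removes a rotation, and read off $\beta^{\star}$ in each case.

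Concretely, I would proceed through the blocks in the same order as Table~\ref{table:wk_bwt_structure}. For the top blocks $\beta^{\star}(\a^i\b)$ with $4 \le i \le k-2$: the extra $\a$ at the end of $w_k$ creates a new run $\a^{k-1}$ (since $q_k$ ends in $\a$ and we append $\a$), so the longest $\a$-run grows, but the circular factors $\a^i\b$ for $i\le k-2$ still arise only from the $e_{j-1}\cdot s_j$ / $e_{j-1}\cdot q_k$ junctions as in Lemma~\ref{le:a^ib}, and those junctions are unchanged; hence $\beta^{\star}(\a^i\b) = \beta(\a^i\b) = \b\a^{k-i-2}$ for these $i$. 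For $\beta^{\star}(\a^3\b)$ and $\beta^{\star}(\a\a\b)$ I expect the lists of rotations from Lemmas~\ref{le:a^3b} and~\ref{le:a^2b} to be almost the same, but with one rotation's last character flipped and possibly one new rotation inserted near the $q_k$-junction, because appending $\a$ turns the old circular factor $\a\b^k\a\cdot\a\b^2$ (start of $s_2$) into $\a\b^k\a\a\cdot\a\b^2$, introducing an extra $\a\a$; this accounts for the shift from $\b\a\a\b\a^{2k-8}$ to $\a\a\a\b\a^{2k-8}$ in $\beta(\a^2\b)$ and the leading extra $\b$ in $\beta^{\star}(\a^3\b)$. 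The block $\beta^{\star}(\a\b)$ should be unchanged ($\b^{k-2}\a\a\b\a^{2k-6}$) since no new $\a\b$-prefixed rotation is created and the relevant orderings are preserved.

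The heart of the argument is the $\b$-prefixed blocks. For $\beta^{\star}(\b\a)$: the appended $\a$ shifts the prefix $\b\a^{k-3}q_k$ of the smallest such rotation to $\b\a^{k-3}q_k\a = \b\a^{k-2}\cdots$, lengthening the run, which reshuffles the case-(ii.b) rotations from Lemma~\ref{le:ba}; I would recompute the five relevant sub-ranges and read off $\a^{k-5}\b\b\b\b\a\b^{k-5}\a\b^{k-2}\a$, noting that compared to $\beta(\b\a)=\a^{k-5}\b\b\b\a\b^{k-4}\a\b^{k-2}\a$ exactly one $\b$ has migrated left within the middle chunk. For $\beta^{\star}(\b^j\a)$ with $2\le j\le k-1$, the claimed shape $\b\a\b^{2k-2j-2}\a$ versus the old $\a\b^{2k-2j-1}\a$ says precisely that a single leading $\b$ has been promoted to the very top of each such block: this is the uniform "$\b$ goes to the top" phenomenon described in the lemma's preamble, and I would justify it by showing that the new rotation of $w_k\a$ whose prefix is $\b^j\a\a e_i$-type, or the edit to the $q_k$-junction rotation, lands just above the range and carries a $\b$. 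Finally $\beta^{\star}(\b^k\a)=\a$ is immediate since $\b^k$ still occurs uniquely inside $q_k$. Once all the $\beta^{\star}$ strings are in hand, the run count $8k-20$ follows by the same bookkeeping as in Proposition~\ref{prop:bwt_wk}: the $k-2$ blocks $\beta^{\star}(\b^j\a)$ for $2\le j\le k-1$ now each contribute (generically) $4$ runs instead of $2$ because of the migrated $\b$, giving an extra $\approx 2(k-2)$ runs, while the $\a$-side blocks and the constant-size blocks change only by $O(1)$; the arithmetic $6k-12 + 2(k-2) - O(1) = 8k-20$ then closes the proof.

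The main obstacle I anticipate is not any single block but the boundary bookkeeping between consecutive blocks: whether the last character of one $\beta^{\star}$ block and the first of the next coincide (causing runs to merge) can change relative to the $w_k$ case, and this is exactly where the off-by-small-constant terms in $8k-20$ come from. So the delicate part is a careful, case-by-case check of the junctions $\beta^{\star}(\a^3\b)\mid\beta^{\star}(\a\a\b)$, $\beta^{\star}(\a\b)\mid\beta^{\star}(\b\a)$, $\beta^{\star}(\b\a)\mid\beta^{\star}(\b\b\a)$, and $\beta^{\star}(\b^{k-1}\a)\mid\beta^{\star}(\b^k\a)$ — I would handle these by explicitly writing out the first and last few characters of each block (as the dashed sub-ranges in Table~\ref{table:wk_bwt_structure} already do for $w_k$) and comparing. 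Everything else reduces to re-running Lemmas~\ref{le:a^ib}--\ref{le:b^ka} with the one extra $\a$ inserted, which is routine given the prefix-freeness of $\mathcal U$ from Lemma~\ref{le:sorting_s_e_q}.
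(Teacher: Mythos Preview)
Your overall strategy---block-by-block comparison against Lemmas~\ref{le:a^k-2}--\ref{le:b^ka}, tracking which rotations change---is exactly what the paper does, and your treatment of the $\a$-prefixed blocks and of the run count is on the right track. However, you misidentify the rotation that actually moves in the $\b$-prefixed blocks. The appended $\a$ does \emph{not} lengthen the run in the case-(ii.b) rotation with prefix $\b\a^{k-3}q_k$: that prefix is already $\b\a^{k-2}\b^k\a$ in $w_k$ and remains so in $w_k\a$ (and the longest $\a$-run in the word does not grow to $\a^{k-1}$ either; the extra $\a$ sits between the final $\a$ of $q_k$ and the first $\a$ of $s_2$, giving only $\a^3$ there). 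What changes is the single case-(iii) rotation of Lemma~\ref{le:ba}: in $w_k$ the last $\b$ of $q_k$ is followed by one $\a$ and then, circularly, by $s_2$, so that rotation has prefix $\b\a s_2$; in $w_k\a$ the same starting position now sees the extra $\a$, so the prefix becomes $\b\a\a s_2$. Since $s_2 < e_2$ by Lemma~\ref{le:sorting_s_e_q}, this rotation jumps from its old slot (just after the $\b\a\a e_{k-1}$ row) to just \emph{before} $\b\a\a e_2$, dragging its terminal $\b$ to the top of the sub-range---and the identical mechanism, with $\b^j\a s_2$ becoming $\b^j\a\a s_2$ and landing before $\b^j\a\a e_j$, is what produces $\vv{\b^j\a}{\star}=\b\a\b^{2k-2j-2}\a$ uniformly for $2\le j\le k-1$. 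Once you track this single rotation rather than the case-(ii.b) ones, the rest of your plan (including the boundary checks) goes through exactly as you outline.
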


\begin{proof}
    By Lemmas \ref{le:a^k-2} and \ref{le:a^ib}, we can see that appending an $\a$ after $q_k$ does not affect the BWT in the range of rotations having $\a^i\b$ as prefix, for all $4 \leq i \leq k-2$.
    Thus, $\vv{\a^i\b}{\star}=\vv{\a^i\b}{}$ for all $4 \leq i \leq k-2$.
    
    The rotation starting with $\a \a s_2$, which is not a circular factor of $w_k$, ends with a $\b$.
    By Lemma \ref{le:a^3b}, we can see that such a rotation is the smallest one with prefix $\a\a\a\b$ in lexicographic order, while the other rotations maintain their relative order.
    Therefore, $\vv{\a\a\a\b}{\star} = \b \cdot \vv{\a\a\a\b}{}$.
 
    By Lemma \ref{le:a^2b}, the rotation with prefix $\a s_2$ is still the smallest rotation starting with $\a\a\b$, with the difference that in this case, it ends with the last $\a$ of $q_k$.
    It follows that $\vv{\a\a\b}{\star}$ is obtained by replacing the first $\b$ of $\vv{\a\a\b}{}$ with an $\a$.

    Both the order and the last symbol of all the rotations having as prefix $\a\b$ described in Lemma \ref{le:ab} is not affected from the insertion of the $\a$, and therefore $\vv{\a\b}{\star} = \vv{\a\b}{}$.

    Let us now consider all the rotations of $w_k$ with prefix $\b^j \a s_2$, for all $1\leq j \leq k$.
    One can notice that $w_k \a$ does not have any rotation starting with $\b^j \a s_2$, for all $1\leq j \leq k$, but instead it has rotations starting with $\b^j \a \a s_2$.
    Thus, for all $1 \leq j \leq k-1$, to obtain $\vv{\b^j\a}{\star}$ from $\vv{\b^j\a}{}$ we have to remove the $\b$ in correspondence of the rotations starting with $\b^j\a s_2$, and add a $\b$ in correspondence of the rotations $\b^j \a \a s_2$.
    By Lemmas \ref{le:ba}, \ref{le:b^ia}, and \ref{le:b^ka}, such rotations are placed right before the rotation starting with $\b^j\a\a e_2$. 

    Finally, the last rotation has still the same prefix $\b^k \a$ and ends with an $\a$, and the thesis follows.    
\end{proof}

\bigskip 

\begin{lemma}[BWT of $\widehat{w_k}$] \label{le:bwt_wk_deletion}
      Given an integer $k >5$, for $\widehat{w_k}$ it holds that
      \begin{align*}
      \vv{\a^i\b}{\star} &= \b\a^{k-i-2} \text{ for all } 4 \leq i \leq k-2, \\ 
      \vv{\a^3\b}{\star} &= \b^5(\a\b)^{k-6}\a, \\ 
      \vv{\a^{2}\b}{\star} &= \a\a\b\a^{2k-8},\\
      \vv{\a\b}{\star} &= \b^{k-2}\b\a\b\a^{2k-6}, \\ 
      \vv{\b\a}{\star} &= \a^{k-5}\b\b\b\a\b^{k-5}\a\b^{k-2}\b\a, \\
      \vv{\b^j\a}{\star} &= \a\b^{2k-2j-2}\a\b \text{ for all } 2\leq j \leq k-1 \text{ and }\\ 
      \vv{\b^{k}\a}{\star} &= \a.
      \end{align*}
      
     Hence, $\bwt(\widehat{w_k}) = \prod_{i=2}^{k-1}\vv{\a^{k-i}\b}{\star}\cdot \prod_{i=1}^{k}\vv{\b^i\a}{\star}$. Moreover, it holds that $r(\widehat{w_k}) = 8k-20$.
\end{lemma}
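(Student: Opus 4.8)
The plan is to follow the strategy of Lemma~\ref{le:bwt_wk_insertion}: instead of re-deriving the whole BWT matrix of $\widehat{w_k}$, I would compare $\bwt(\widehat{w_k})$ with $\bwt(w_k)$ (Proposition~\ref{prop:bwt_wk}) range by range, reusing the block decomposition and the orderings of Lemmas~\ref{le:a^k-2}--\ref{le:b^ka}. The key observation is that $\widehat{w_k}$ arises from $w_k$ by replacing the suffix $q_k=\a\b^k\a$ with $q_k'=\a\b^k$; cyclically, this deletes the single $\a$ that sits between the length-$k$ run of $\b$'s of $q_k$ and the leading $\a$ of $s_2$. Hence the only rotations whose sorting-relevant prefix changes are: the rotation $\conj_{n-1}(w_k)=\a\,\widehat{w_k}=\a s_2 e_2\cdots e_{k-1}q_k'$, which simply disappears; the rotation $\conj_0(\widehat{w_k})=\widehat{w_k}$, whose last character changes; and, for each $1\le j\le k$, the unique rotation with exactly $j$ leading $\b$'s that starts inside the $\b^k$-run of $q_k$, which in $w_k$ read $\b^j\a s_2\cdots=\b^j\a\a\b^2\cdots$ and now reads $\b^j s_2\cdots=\b^j\a\b^2\cdots$. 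For every other rotation I would check that the deleted $\a$ falls strictly beyond its longest common prefix with every same-block rotation, so that neither its position nor its preceding character changes.

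First I would dispatch the unchanged ranges. For $3\le i\le k-2$, the only row of $\vv{\a^i\b}{}$ meeting $q_k$ is the unique one with prefix $\a^{i}\b^{k}$ (the rows $\a^{k-3}q_k\cdots$, $\a^{i-1}q_k\cdots$, $\a\a q_k\cdots$ of Lemmas~\ref{le:a^k-2},~\ref{le:a^ib},~\ref{le:a^3b}); since no other rotation shares a prefix that long, its preceding character (an $\a$, or a $\b$ when $i=k-2$) is unchanged, the edit occurs strictly after its $\b^k$, and the only $\a\a$-run destroyed is the one right after $\b^k$ --- which never fed an $\a^i\b$-block with $i\ge 3$ --- so the block is untouched and $\vv{\a^i\b}{\star}=\vv{\a^i\b}{}$; this yields the first two displayed equalities. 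Likewise the unique row of $\vv{\b^k\a}{}$ becomes $\b^k\a\b^2\cdots$ instead of $\b^k\a\a\b^2\cdots$ but is still the unique rotation with prefix $\b^k$, i.e.\ the last row of the matrix, still preceded by the first $\a$ of $q_k'$, so $\vv{\b^k\a}{\star}=\a$.

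It remains to handle $\vv{\a\a\b}{}$, $\vv{\a\b}{}$, $\vv{\b\a}{}$ and $\vv{\b^j\a}{}$ for $2\le j\le k-1$. The vanishing rotation $\conj_{n-1}(w_k)$ has prefix $\a\a\b$ and is exactly the first row of $\vv{\a\a\b}{}$, the one contributing the leading $\b$ of $\vv{\a\a\b}{}=\b\a\a\b\a^{2k-8}$; deleting it (and checking the remaining rows keep their preceding characters) gives $\vv{\a\a\b}{\star}=\a\a\b\a^{2k-8}$. The rotation $\conj_0(\widehat{w_k})=\widehat{w_k}$ is the $s_2$-rooted row of $\vv{\a\b}{}$, i.e.\ the first row after the $k-2$ rows whose prefix has the form $\a\b\a^{i}\b$ (Lemma~\ref{le:ab}); its last character changes from the final $\a$ of $q_k$ to the final $\b$ of $q_k'$ while its position is unchanged (by Lemma~\ref{le:sorting_s_e_q} its lcp with the other $s_j/e_j/q_k$-rooted rows is short), so the first $\a$ of the factor $\a\a$ in $\vv{\a\b}{}=\b^{k-2}\a\a\b\a^{2k-6}$ becomes a $\b$: $\vv{\a\b}{\star}=\b^{k-2}\b\a\b\a^{2k-6}$. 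Finally --- the source of the $\Theta(k)$ increase --- for each $2\le j\le k-1$ the row of $\vv{\b^j\a}{}$ with prefix $\b^j\a s_2$ (the last row of the $s_j/e_j$-prefixed sub-block of Lemma~\ref{le:b^ia}, contributing one $\b$ of the run $\b^{2k-2j-1}$) now has prefix $\b^j\a\b\b\cdots$, which, since $\b^j\a\b\b\cdots>\b^j\a\b\a\cdots$, exceeds every row of the case-(ii) sub-block; hence it slides to the very bottom of $\vv{\b^j\a}{}$ while keeping preceding character $\b$, so one $\b$ moves out of the interior run and is reappended at the end: $\vv{\b^j\a}{\star}=\a\b^{2k-2j-2}\a\b$. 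The identical argument for $j=1$ moves the $\b\a s_2$-rooted row of $\vv{\b\a}{}$ from the third sub-block (where it was the last $\b$ of the run $\b^{k-4}$) to the slot just before the last row $\b s_3\cdots$, giving $\vv{\b\a}{\star}=\a^{k-5}\b\b\b\a\b^{k-5}\a\b^{k-2}\b\a$. Concatenating yields $\bwt(\widehat{w_k})=\prod_{i=2}^{k-1}\vv{\a^{k-i}\b}{\star}\cdot\prod_{i=1}^{k}\vv{\b^i\a}{\star}$, and counting runs by comparison with the $6k-12$ runs of $\bwt(w_k)$ --- each of the $k-3$ ranges $\vv{\b^j\a}{\star}$, $2\le j\le k-2$, gaining a run and becoming separated from the next range by a $\b$ --- gives $r(\widehat{w_k})=8k-20$; I would only sketch this last count, as in the neighbouring lemmas.

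The main obstacle is the bookkeeping: verifying for every block that only the explicitly identified rotations are affected, i.e.\ that the deleted $\a$ lies beyond the longest common prefix of any two same-block rotations other than the few rooted in $q_k$, and that each of those few lands at exactly the extreme position of its block after the edit. Both facts reduce to short prefix comparisons together with the prefix-freeness of $\mathcal{U}$ (Lemma~\ref{le:sorting_s_e_q}), and the run count is then routine.
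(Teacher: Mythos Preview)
Your proposal is correct and follows essentially the same approach as the paper's proof: both compare $\bwt(\widehat{w_k})$ with $\bwt(w_k)$ block by block, identify the same small set of rotations that move or change --- the disappearing $\a s_2$-rooted row in $\vv{\a\a\b}{}$, the $s_2$-rooted row in $\vv{\a\b}{}$ whose last symbol flips to $\b$, and, for each $1\le j\le k-1$, the $\b^j\a s_2$-rooted row that becomes $\b^j s_2$-rooted and slides to the bottom of its block --- and conclude the remaining rows are unaffected. Your write-up is in fact a bit more explicit than the paper about \emph{why} the unaffected rows are unaffected (invoking lcp lengths and the prefix-freeness of $\mathcal{U}$), and your placement of the moved $\b\a$-row ``just before $\b s_3$'' coincides with the paper's ``between $\b\a\b s_3$ and $\b s_3$''.
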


\begin{proof}
    Analogously to the previous Lemma, if we look in Lemmas \ref{le:a^k-2}, \ref{le:a^ib}, and \ref{le:a^3b}, at the structure of the BWT in correspondence of the rotations starting with $\a^i\b$, for all $3 \leq i \leq k-2$, we can notice that the order or the symbols in the BWT is not affected.
    Thus, for all $3 \leq i \leq k-2$, we have $\vv{\a^i\b}{\star} = \vv{\a^i\b}{}$.

    Since the last $\a$ of $q_k$ is omitted, the circular factor $\a s_2$ does not appear anymore in $\widehat{w}$.
    Thus, $\vv{\a\a\b}{\star}$ is obtained by removing the first $\b$ from $\vv{\a\a\b}{}$, since by Lemma \ref{le:a^2b} it is in correspondence of the rotation with prefix $\a s_2$.
    
    On the other hand we can observe from Lemma \ref{le:ab} that the rotation with prefix $s_2$ maintains its relative order also in $\widehat{w_k}$, but its last symbol is now a $\b$ instead of an $\a$.

    For each $1 \leq j \leq k$, the rotation starting with $\b^j\a s_2$ of $w_k$ does not appear in $\widehat{w_k}$, but in fact it is replaced by one having $\b^j s_2$ as prefix and ending in the same way.
    When $j=1$, by Lemma \ref{le:ba} such a rotation is located between the last two rotations with the prefix $\b\a$, which start by $\b\a\b s_3$ and $\b s_3$ respectively. 
    When $2 \leq j \leq k-1$, by Lemmas \ref{le:b^ia} and \ref{le:b^ka}, the rotation starting with $\b^j s_2$ is greater than all the other rotations with prefix $\b^j\a$.
    Thus, for all $1 \leq j \leq k-1$, we obtain $\vv{\b^j\a}{\star}$ by moving the $\b$ in correspondence of the rotation starting with $\b\a s_2$ from $\vv{\b^j\a}{}$ and placing it in correspondence of $\b^j s_2$.
    Finally, the last rotation has still the same prefix $\b^k\a$ and ends with an $\a$, and the thesis follows.  
\end{proof}

\begin{lemma}[BWT of $\widehat{w_k}\b$] \label{le:bwt_wk_substitution}
      Given an integer $k >5$, for $\widehat{w_k}\b$ it holds that 
      \begin{align*}
      \vv{\a^i\b}{\star} &= \b\a^{k-i-2} \text{ for all } 4 \leq i \leq k-2, \\ 
      \vv{\a^3\b}{\star} &= \b^5(\a\b)^{k-6}\a, \\ 
      \vv{\a^{2}\b}{\star} &= \a\a\b\a^{2k-8},\\
      \vv{\a\b}{\star} &= \b^{k-2}\b\a\b\a^{2k-6}, \\ 
      \vv{\b\a}{\star} &= \a^{k-5}\b\b\b\a\b^{k-5}\a\b^{k-2}\b\a, \\
      \vv{\b^j\a}{\star} &= \a\b^{2k-2j-2}\a\b \text{ for all } 2\leq j \leq k-1,\\ 
      \vv{\b^{k}\a}{\star} &= \b \text{ and } \\
      \vv{\b^{k+1}\a}{\star} &= \a.
      \end{align*}
      
     Hence, $\bwt(\widehat{w_k}\b) = \prod_{i=2}^{k-1}\vv{\a^{k-i}\b}{\star}\cdot \prod_{i=1}^{k+1}\vv{\b^i\a}{\star}$. Moreover, it holds that $r(\widehat{w_k}\b) = 8k-20$.
\end{lemma}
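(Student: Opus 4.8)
The plan is to piggy-back on Lemma~\ref{le:bwt_wk_deletion}, using that $\widehat{w_k}\b$ is obtained from $\widehat{w_k}$ by appending a single $\b$ (equivalently, $\widehat{w_k}\b$ is $w_k$ with its last letter --- the final $\a$ of $q_k$ --- substituted by $\b$). Write $\widehat{w_k}=z\,\b^{k}$, where $z=\big(\prod_{i=2}^{k-1}s_ie_i\big)\a$; then $\widehat{w_k}\b=z\,\b^{k+1}$, so as circular words the only change is that the maximal run of $\b$'s between the two copies of $\a$ bounding $z$ grows from $\b^{k}$ to $\b^{k+1}$. Since every run of $\b$'s inside $z$ has length at most $k-1$, this is the unique run of $\b$'s of length $\ge k$ in either word.

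The first step is a comparison lemma: for any two rotations of $\widehat{w_k}$ whose starting positions lie in the $z$-part (i.e.\ not inside the long run), their relative order and their preceding letter agree with those of the rotations of $\widehat{w_k}\b$ starting at the same positions. The preceding letter is immediate, as it lies inside $z$ for every such rotation except $\conj_{0}$, whose preceding letter is $\b$ in both words. For the order, two distinct such rotations reach the (unique) long run at distinct comparison offsets; if their comparison is not already decided within $z$, it is decided before the earlier-entering rotation has read $k$ letters of the long run --- since the other rotation is then reading letters of $z$, which never form $k$ consecutive $\b$'s --- and hence identically whether the long run has length $k$ or $k+1$. Consequently every block $\vv{v}{}$ with $v\notin\{\b^{k}\a,\ \b^{k+1}\a\}$ is unchanged, so the blocks $\vv{\a^{i}\b}{\star}$ for $3\le i\le k-2$, $\vv{\a^{2}\b}{\star}$, $\vv{\a\b}{\star}$, $\vv{\b\a}{\star}$, and $\vv{\b^{j}\a}{\star}$ for $2\le j\le k-1$ are exactly those of Lemma~\ref{le:bwt_wk_deletion}, giving the first six displayed equalities.

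It remains to treat the rotations starting inside the long run. In $\widehat{w_k}$ these are $\b^{\,k-t}z\,\b^{\,t}$ for $0\le t\le k-1$: the one with prefix $\b^{k}$, namely $\b^{k}z$, ends with the last letter of $z$, which is $\a$, and the others end with $\b$ --- i.e.\ $\vv{\b^{k}\a}{}=\a$ as in Lemma~\ref{le:bwt_wk_deletion}. In $\widehat{w_k}\b$ they become $\b^{\,k+1-t}z\,\b^{\,t}$ for $0\le t\le k$: now $\b^{k+1}z$ is the unique rotation with prefix $\b^{k+1}$ and ends with $\a$, so $\vv{\b^{k+1}\a}{\star}=\a$; while $\b^{k}z\,\b$ is the unique circular occurrence of the factor $\b^{k}\a$ and ends with $\b$, so $\vv{\b^{k}\a}{\star}=\b$; and for $2\le j\le k-1$ the single run-rotation with prefix $\b^{j}$ keeps its place inside $\vv{\b^{j}\a}{\star}$, its order there being decided by the shared $z$-portion. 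Thus $\bwt(\widehat{w_k}\b)$ equals $\bwt(\widehat{w_k})$ with its last letter (the one-letter block $\vv{\b^{k}\a}{}=\a$) turned into $\b$ and a new final block $\vv{\b^{k+1}\a}{\star}=\a$ appended; plugging in the formulas of Lemma~\ref{le:bwt_wk_deletion} yields the claimed expressions, and $\bwt(\widehat{w_k}\b)=\prod_{i=2}^{k-1}\vv{\a^{k-i}\b}{\star}\cdot\prod_{i=1}^{k+1}\vv{\b^{i}\a}{\star}$ follows from the order of the blocks.

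For the run count, $\bwt(\widehat{w_k})$ ends with $\vv{\b^{k-1}\a}{\star}\vv{\b^{k}\a}{\star}=\a\a\b\cdot\a$ whereas $\bwt(\widehat{w_k}\b)$ ends with $\a\a\b\cdot\b\cdot\a$; the extra $\b$ only lengthens the preceding run, so $r(\widehat{w_k}\b)=r(\widehat{w_k})=8k-20$ (or one recounts directly as in Proposition~\ref{prop:bwt_wk}). The main obstacle I anticipate is making the comparison lemma of the second step fully rigorous --- ruling out two $z$-starting rotations with a common prefix long enough to straddle the lengthened run --- which is exactly why isolating the uniqueness of the length-$\ge k$ run of $\b$'s is the right first move. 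A parallel but more verbose route, mirroring the proof of Lemma~\ref{le:bwt_wk_deletion}, is to compare directly against $w_k$, viewing $\widehat{w_k}\b$ as a last-letter substitution and recomputing the affected blocks from Lemmas~\ref{le:a^2b}--\ref{le:b^ka}.
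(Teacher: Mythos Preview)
Your proposal is correct and takes essentially the same approach as the paper: both reduce to Lemma~\ref{le:bwt_wk_deletion} by arguing that the blocks $\vv{\a^i\b}{\star}$ and $\vv{\b^j\a}{\star}$ for $j\le k-1$ coincide with the deletion case, then handle $\vv{\b^k\a}{\star}$ and the new block $\vv{\b^{k+1}\a}{\star}$ separately. The paper frames the comparison as ``substitution has the same effect as deletion'' on these blocks and simply asserts it, whereas you compare $\widehat{w_k}\b=z\b^{k+1}$ to $\widehat{w_k}=z\b^{k}$ directly and supply a genuine justification (your comparison lemma, using that all $\b$-runs in $z$ have length $\le k-1$); this extra rigor is a plus, and your self-flagged ``obstacle'' is not an actual gap---your argument that the first mismatch between two $z$-starting rotations occurs before the earlier-entering one exhausts $k$ letters of the long run is sound.
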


\begin{proof}
    For the rotations in correspondence of the rotations starting with an $\a$, notice that replacing the last $\a$ of $w_k$ for a $\b$ or removing the last $\a$ affects the BWT in the same way.
    Therefore, $\vv{\a^i\b}{\star}$ is the same as Lemma \ref{le:bwt_wk_deletion} for all $1 \leq i \leq k-2$.

    The same behaviour can be noticed on the rotations with prefix $\b^j\a$, for all $1 \leq j \leq k-1$, while the rotation starting with $\b^k\a$ is now preceded by a $\b$.

    With respect to the other edit operations, we have the range of rotations starting with $\b^{k+1}\a$, which consists solely in $\b^{k+1}s_2 \cdots \a$. 
\end{proof}

The structure of the BWT of $w_k$ and other words obtained by applying one or more edit operations on $w_k$ are summed up in Table \ref{table:bwt_wk_edit}. 

\begin{table}[htb]
    \centering    
    \begin{adjustbox}{width=\textwidth} 
    \renewcommand{\arraystretch}{1.2}
    \begin{tabular}{|l||P{1cm}|P{1cm}|P{1cm}|P{2cm}|P{2.5cm}|P{2.5cm}|P{3cm}|}     
        \hline
          Word & $\vv{\$}{}$ & $\vv{\a\$}{}$ & $\vv{\a\a\$}{}$ & $\vv{\a^i\b}{}$ & $\vv{\a^3\b}{}$ & $\vv{\a^2\b}{}$& $\vv{\a\b}{}$ \\ \hline \hline
         
         $w_k$ & $\epsilon$ & $\epsilon$ & $\epsilon$ & $\b\a^{k-i-2}$ & $\b^5(\a\b)^{k-6}\a$ & $ \b\a\a\b\a^{2k-8}$ & $\b^{k-2}\a\a\b\a^{2k-6}$ \\ \hline
         
         $w_k\a$ & $\epsilon$ & $\epsilon$ & $\epsilon$ & $\b\a^{k-i-2}$ & $\b\b^5(\a\b)^{k-6}\a$ & $\a\a\a\b\a^{2k-8}$ & $\b^{k-2}\a\a\b\a^{2k-6}$ \\ \hline
         $\widehat{w_k}$ & $\epsilon$ & $\epsilon$ & $\epsilon$ & $\b\a^{k-i-2}$ & $ \b^5(\a\b)^{k-6}\a$ & $ \a\a\b\a^{2k-8}$ & $\b^{k-2}\b\a\b\a^{2k-6}$ \\ \hline
         $\widehat{w_k}\b$ & $\epsilon$ & $\epsilon$ & $\epsilon$ & $\b\a^{k-i-2}$ & $ \b^5(\a\b)^{k-6}\a$ & $ \a\a\b\a^{2k-8}$ & $\b^{k-2}\b\a\b\a^{2k-6}$ \\ \hline
         $w_k\$$ & $\a$ & $\b$ & $\epsilon$ & $\b\a^{k-i-2}$ & $\b^5(\a\b)^{k-6}\a$ & $\a\a\b\a^{2k-8}$ & $\b^{k-2}\$\a\b\a^{2k-6}$  \\ \hline
         $w_k\b\$$ & $\b$ & $\epsilon$ & $\epsilon$ & $\b\a^{k-i-2}$ & $\b^5(\a\b)^{k-6}\a$ & $\a\a\b\a^{2k-8}$ &  $\b\b^{k-2}\$\a\b\a^{2k-6}$\\ \hline
         $w_k\b\b\$$ & $\b$ & $\epsilon$ & $\epsilon$ & $\b\a^{k-i-2}$ & $\b^5(\a\b)^{k-6}\a$ & $\a\a\b\a^{2k-8}$ & $\b\b^{k-2}\$\a\b\a^{2k-6}$ \\ \hline
         $w_k\a\$$ & $\a$ & $\a$ & $\b$ & $\b\a^{k-i-2}$ & $\b^5(\a\b)^{k-6}\a$ & $\a\a\b\a^{2k-8}$ &  $\b^{k-2}\$\a\b\a^{2k-6}$ \\ \hline
    \end{tabular} 
    \end{adjustbox}
    \begin{adjustbox}{width=\textwidth}
    \renewcommand{\arraystretch}{1.2}
    \begin{tabular}{|l||P{1cm}|P{4cm}|P{1cm}|P{2.8cm}|P{1cm}|P{1.2cm}|P{2cm}|}
        \hline
         Word & $\vv{\b\$}{}$ & $\vv{\b\a}{}$ & $\vv{\b\b\$}{}$ & $\vv{\b^j\a}{}$& $\vv{\b^{k}\a}{}$ & $\vv{\b^{k+1}}{}$ & $r(\cdot)$ \\ \hline\hline
         $w_k$ & $\epsilon$ & $\a^{k-5}\b\b\b\a\b^{k-4}\a\b^{k-2}\a$ & $\epsilon$ &  $\a\b^{2k-2j-1}\a$ & $\a$ & $\epsilon$ & $6k-12$ \\ \hline
         $w_k\a$ & $\epsilon$ & $ \a^{k-5}\b\b\b\b\a\b^{k-5}\a\b^{k-2}\a$ & $\epsilon$ & $\b\a\b^{2k-2j-2}\a$ & $\a$ & $\epsilon$ & $8k-20$\\ \hline
         $\widehat{w_k}$ &  $\epsilon$  & $\a^{k-5}\b\b\b\a\b^{k-5}\a\b^{k-2}\b\a$ &  $\epsilon$  & $ \a\b^{2k-2j-2}\a\b$ & $\a$ & $\epsilon$ &  $8k-20$ \\ \hline
         $\widehat{w_k}\b$ &  $\epsilon$  & $\a^{k-5}\b\b\b\a\b^{k-5}\a\b^{k-2}\b\a$ &  $\epsilon$  & $ \a\b^{2k-2j-2}\a\b$ & $\b$ & $\a$ &  $8k-20$ \\ \hline
         $w_k\$$ & $\epsilon$ & $\b\a^{k-5}\b\b\b\a\b^{k-5}\a\b^{k-2}\a$ & $\epsilon$ & $\b\a\b^{2k-2j-2}\a$ & $\a$ & $\epsilon$ &  $8k-16$ \\ \hline
         $w_k\b\$$ & $\a$ & $\a^{k-5}\b\b\b\a\b^{k-5}\a\b\b^{k-2}\a$ & $\epsilon$ & $\a\b^{2k-2j-1}\a$ & $\a$ & $\epsilon$ &   $6k-13$\\ \hline
         $w_k\b\b\$$ & $\b$ & $\a^{k-5}\b\b\b\a\b^{k-5}\a\b\b^{k-2}\a$ & $\a$ & $\a\b^{2k-2j-2}\a\b$ & $\a$ & $\epsilon$ & $8k-17$ \\ \hline
         $w_k\a\$$ & $\epsilon$ & $\b\a^{k-5}\b\b\b\a\b^{k-5}\a\b^{k-2}\a$ & $\epsilon$ & $\b\a\b^{2k-2j-2}\a$ & $\a$ & $\epsilon$ & $8k-16$ \\ \hline
    \end{tabular}
    \end{adjustbox}
    \vspace{5pt}
        \caption{BWTs of the word $w_k$ and its variants after different edit operations. The word in the intersection of the column $\vv{x}{}$ with the row $w$ is the range of $\bwt(w)$ corresponding to all the rotations that have $x$ as a prefix. The columns $\vv{\a^i\b}{}$ and $\vv{\b^j\a}{}$ represent ranges of columns from $i \in [k-2, 4]$ (in that order) and $j \in [2, k-1]$, respectively. Note that the prefixes in the columns are disjoint, and cover all the possible ranges for the set of words considered. The BWT of each word is the concatenation of all the words in its row from left to right. In the last column appears the number of BWT runs of each of these words. }    \label{table:bwt_wk_edit}
\end{table}

For a given word $w \neq \epsilon$, let $w^{ins}$, $w^{del}$, and $w^{sub}$ be the words obtained by applying on $w$ an insertion, a deletion, and a substitution of a character respectively.

We compare the number of runs of $w_k$ and its variations and notice that the difference after applying one of the edit operations is $\Theta(k)$ in the three cases.

\bigskip
\begin{proposition} \label{le:w_k_sensitivity}
    There exists an infinite family of words $w$ such that: (i) $r(w^{ins}) - r(w) = \Theta(\sqrt{n})$; (ii) $r(w^{del}) - r(w) = \Theta(\sqrt{n})$; (iii) $r(w^{sub}) - r(w) = \Theta(\sqrt{n})$.
\end{proposition}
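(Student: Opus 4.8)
The plan is to instantiate the statement with the infinite family $\{w_k \mid k > 5\}$ of Definition~\ref{def:w_k} and to read off the three differences directly from the BWT computations already performed. First I would recall that, by Proposition~\ref{prop:bwt_wk}, $r(w_k) = 6k-12$, and that by Observation~\ref{le:w_k_length} we have $n = |w_k| = (3k^2+7k-18)/2$, hence $k = \Theta(\sqrt n)$ (so already $r(w_k) = \Theta(\sqrt n)$).

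Next, for each edit operation I would pick one explicit witness. For (i), since the BWT is invariant under conjugacy (Section~\ref{sec:intro}), inserting a character anywhere in $w_k$ affects $r$ in the same way as appending it, so it suffices to take $w_k^{ins} = w_k\a$; Lemma~\ref{le:bwt_wk_insertion} then gives $r(w_k\a) = 8k-20$. For (ii), the last letter of $w_k$ --- the final $\a$ of the suffix $q_k = \a\b^k\a$ --- is an $\a$, and deleting it yields $w_k^{del} = \widehat{w_k}$, for which Lemma~\ref{le:bwt_wk_deletion} gives $r(\widehat{w_k}) = 8k-20$. For (iii), substituting that same final $\a$ by a $\b$ yields $w_k^{sub} = \widehat{w_k}\b$, for which Lemma~\ref{le:bwt_wk_substitution} gives $r(\widehat{w_k}\b) = 8k-20$.

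Finally I would combine these: in all three cases the change in the number of runs is
\[
r(w_k^{ins}) - r(w_k) = r(w_k^{del}) - r(w_k) = r(w_k^{sub}) - r(w_k) = (8k-20)-(6k-12) = 2k-8,
\]
which is both $O(k)$ and $\Omega(k)$, hence $\Theta(k) = \Theta(\sqrt n)$ by Observation~\ref{le:w_k_length}; this proves (i)--(iii).

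There is essentially no hard step left: the real work lies in the structural lemmas \ref{le:a^k-2}--\ref{le:b^ka} and Proposition~\ref{prop:bwt_wk} (describing $\bwt(w_k)$ block by block) and in Lemmas~\ref{le:bwt_wk_insertion}--\ref{le:bwt_wk_substitution} (describing how each edit perturbs those blocks), all of which are already available. The only point worth making explicit is the reduction of an arbitrary insertion to an append via conjugacy invariance, which is what lets a single appended $\a$ serve as the insertion witness; everything else is the elementary arithmetic displayed above.
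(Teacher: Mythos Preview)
Your proof is correct and matches the paper's exactly: same family $\{w_k\}_{k>5}$, same witnesses $w_k\a$, $\widehat{w_k}$, $\widehat{w_k}\b$, same cited lemmas, and the same arithmetic $(8k-20)-(6k-12)=2k-8=\Theta(\sqrt n)$. The only quibble is that your conjugacy remark is both unnecessary (appending \emph{is} already an insertion, so $w_k\a$ is a valid $w_k^{ins}$ with no further justification) and overstated as written---inserting a character at an arbitrary position of $w_k$ does not in general yield a conjugate of $w_k\a$---so you should simply drop that sentence.
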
\bigskip

\begin{proof} The family is composed of the words $w_k$ with $k > 5$. 
Let $n = |w_k|$. 
If $w_k^{ins} = w_k \a$, $w_k^{del} = \widehat{w_k}$, and $w_k^{sub} = \widehat{w_k} \b$, from Proposition \ref{prop:bwt_wk}, Lemma \ref{le:bwt_wk_insertion}, Lemma \ref{le:bwt_wk_deletion}, and Lemma \ref{le:bwt_wk_substitution}, we have that $r(w_k\a) = r(\widehat{w_k}) = r(\widehat{w_k}\b) = r(w_k)+(2k-8)$. From Observation \ref{le:w_k_length}, we have that $2k-8 = \Theta(\sqrt{n})$.
\end{proof}

\section{Bit catastrophes for $r_{\$}$}\label{sec:Cat_rdollar}

In this section, we discuss bit catastrophes when the parameter $r_\$$ is considered. Recall that for a word $v$, $r_\$(v) = \runs(\bwt(v\$))$. 

\subsection{When there is no bit catastrophe for $r_{\$}$} 

First let us consider the case where a symbol $c\in \Sigma$ is prepended to a word $v$.  As recently noted in \cite{AKAGI2023}, it is well known that in this case the value $r_{\$}$ can only vary by a constant value. For the sake of completeness, we include a proof.
 
\bigskip

\begin{proposition}
For any $\x\in\Sigma$, we have $r_{\$}(v)-1\leq r_{\$}(\x v) \leq r_{\$}(v)+2.$
\end{proposition}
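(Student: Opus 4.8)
The plan is to compare the BWT matrices of $v\$$ and $\x v\$$ and show that going from one to the other changes the number of runs by only a bounded amount. First I would recall that, because the symbol $\$$ is unique and smaller than every character of $\Sigma$, sorting the conjugates of $w\$$ is equivalent to sorting the suffixes of $w\$$; hence the rows of the BWT matrix of $v\$$ are exactly the suffixes of $v\$$ in lexicographic order, and the BWT entry in each row is the character immediately preceding that suffix (cyclically), so row $\$$ contributes the last character of $v$, and the row for suffix $v[i..]\$$ contributes $v[i-1]$ (or $\$$ if $i=0$).

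The key observation is that the set of suffixes of $\x v\$$ is obtained from the set of suffixes of $v\$$ by adding exactly one new suffix, namely $\x v\$$ itself, while every old suffix $v[i..]\$$ is untouched and each of them is still preceded by the same character as before (since $v$ is an unchanged factor), with the single exception of the suffix $v\$$, which in the old word was preceded by $\$$ and in the new word is preceded by $\x$. So, as a multiset of (BWT-character, suffix) pairs, the new matrix is the old matrix with: (a) the entry for suffix $v\$$ changing its BWT character from $\$$ to $\x$; and (b) one brand-new row inserted somewhere, whose BWT character is $\$$ (since $\x v\$$ is preceded cyclically by $\$$), appearing at the position where $\x v\$$ falls in lexicographic order among all suffixes. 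Moreover the relative order of all the \emph{old} suffixes is unchanged, because prepending a character to $v$ does not alter comparisons between suffixes of $v\$$.

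From here the proof is a short case analysis on how these two local modifications affect $\runs$. Inserting a single new character into a string changes the number of runs by at most $+2$ (it can split one run into two, adding the new character's run in between, which is at most two extra runs, or it can merge/extend, changing it by less). Changing a single existing character likewise changes $\runs$ by at most $\pm 2$. However, one has to be slightly careful because we want the \emph{clean} bound $r_\$(v)-1 \le r_\$(\x v)\le r_\$(v)+2$ rather than the naive $\pm 4$: the point is that the $\$$ in the BWT is unique, and before the edit it sat in exactly one position, after the edit it again sits in exactly one position. The $\$$-character always forms its own length-$1$ run in the BWT (it cannot be adjacent to another $\$$). So the two modifications are really: delete the old $\$$-run from the BWT string and re-insert a $\$$-run at a possibly different place, while simultaneously the old location of $\$$ now gets the character $\x$. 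Deleting the old $\$$-run can decrease the run count by at most $1$ net if it was flanked by two equal characters that now merge, but it brings in an $\x$ there, so that position contributes at least one run (possibly $0$ net change, possibly $+1$); re-inserting a fresh $\$$-run elsewhere adds at most $2$ runs and removes none. Carefully bookkeeping these, the total change lies in $[-1, +2]$, which is exactly the claimed inequality.

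The main obstacle is making the $-1$ lower bound and the $+2$ upper bound tight and airtight rather than settling for a looser constant; this requires tracking the local neighbourhoods of the two affected positions in the BWT simultaneously and checking that the worst decreases and worst increases cannot stack. I would organize it as: (1) establish the suffix-array description of $\bwt(\cdot\$)$; (2) prove the "one new suffix, one changed predecessor, order preserved" structural lemma; (3) reduce to analyzing $\runs$ of a string under "remove the unique $\$$-symbol and reinsert it elsewhere, placing $\x$ at the vacated spot"; (4) finish by the elementary run-counting case analysis, using crucially that $\$$ is never adjacent to $\$$ in the BWT. Steps (1)--(2) are routine given the preliminaries; step (4) is where all the care goes, but it is finite and elementary.
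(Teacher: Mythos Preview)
Your approach is essentially the same as the paper's: identify that passing from the sorted suffixes of $v\$$ to those of $\x v\$$ amounts to (i) inserting one new row (for the suffix $\x v\$$, whose BWT symbol is $\$$) while preserving the relative order of all old rows, and (ii) changing the BWT symbol of the row for $v\$$ from $\$$ to $\x$. This is exactly what the paper does, and your plan (steps (1)--(4)) is sound.

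The one place where your write-up goes astray is the concrete run-counting paragraph. You claim that replacing the old $\$$ by $\x$ yields ``possibly $0$ net change, possibly $+1$''; this is wrong. Since $\$$ is unique it always formed its own run, so replacing it by $\x$ can only \emph{merge} with neighbours, giving a net change of $0$, $-1$, or $-2$ (never $+1$). Likewise, inserting the new $\$$ somewhere adds exactly $+1$ or $+2$ runs (it creates a new singleton run, and may or may not split an existing run). Summing these two independent effects gives the range $[-2+1,\;0+2]=[-1,+2]$ directly. Your worry about the two modifications ``stacking'' is unnecessary: just perform them sequentially (first replace, then insert into the resulting string) and the bounds add cleanly regardless of whether the two affected positions are adjacent. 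The paper's proof does exactly this two-line computation; once you correct the sign error in your case analysis, your argument collapses to the same thing.
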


\begin{proof}
Let us consider the list of lexicographically sorted cyclic rotations or, equivalently, the list of lexicographically sorted suffixes of $\x v\$$. (The equivalence follows from the fact that $\$$ is smaller than all other characters.) This list can be obtained from the list of suffixes of $v\$$, to which the suffix $\x v\$$ is added. Note that the relative order of all suffixes other than $\x v\$$ remains the same. Moreover, the corresponding symbols in the BWT also remain the same, except that the character $\x$ takes the place of $\$$. This replacement decreases the number of BWT-runs by $0,1$, or $2$, depending on whether this position in the BWT is preceded by a run of $\x$, followed by a run of $\x$, or both. The symbol corresponding to the new suffix $\x v\$$ (which produces the insertion of $\$$ in the corresponding position in the BWT) increases the number of BWT-runs by $1$ (if it is inserted between two existing runs), or by $2$ (if it breaks a run). 
\end{proof}

The following proposition shows that there are some cases in which $r_\$$ is not affected by any bit catastrophe.

\bigskip
\begin{proposition}\label{prop:a_rdollar}
	Let $\x$ be smaller than or equal to the smallest character in a word $v$, then $r_{\$}(v) \leq r_{\$}(v\x) \leq r_{\$}(v)+1.$
\end{proposition}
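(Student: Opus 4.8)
The plan is to work with the $\$$-variant, i.e.\ to compare the sorted lists of suffixes (equivalently, cyclic rotations) of $v\$$ and of $v\x\$$. Since $\x$ is smaller than or equal to the smallest character of $v$, appending $\x$ just before the $\$$ creates a word $v\x\$$ whose suffixes are: the old suffixes of $v\$$ each with a single character $\x$ inserted immediately before the final $\$$, plus the two brand-new suffixes $\x\$$ and $\$$ (the latter replacing the old suffix $\$$ of $v\$$, but now in a longer word). I would first observe that inserting $\x$ right before $\$$ changes each old suffix $v[i..|v|-1]\$$ into $v[i..|v|-1]\x\$$; because $\x$ is no larger than every letter of $v$ and $\$$ is smaller than everything, the comparison between two such modified suffixes is decided exactly as before unless one old suffix is a prefix of the other up to the $\$$ — but that cannot happen since distinct suffixes of $v\$$ have distinct lengths and all end in the unique symbol $\$$. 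Hence the relative order of all the ``old'' suffixes is preserved.

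**Locating the two new suffixes.** Next I would pin down where $\$$ (the shortest suffix of $v\x\$$) and $\x\$$ sit in the sorted order. The suffix $\$$ is, as always, the lexicographically smallest, occupying the first row. The suffix $\x\$$ is the next smallest: any other suffix of $v\x\$$ starts with some letter of $v$, which is $\geq \x$; if it starts with a letter strictly greater than $\x$ it is larger, and if it starts with $\x$ itself then its second character is a letter of $v$ (hence $>\$$) so it is again larger than $\x\$$. Therefore in $\bwt(v\x\$)$ the first two positions come from $\$$ and $\x\$$, in that order. Their preceding characters are: $\x$ (the character before the final $\$$ in $v\x\$$), and $v[|v|-1]$ (the character before the $\x$), respectively. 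Every remaining row $i$ of $\bwt(v\x\$)$ corresponds to an old suffix and its BWT character is unchanged from $\bwt(v\$)$, with one exception: the row that used to be the $\$$-row of $\bwt(v\$)$ — i.e.\ the first row, contributing the character $v[|v|-1]$ — now does not appear among the ``old'' rows at all, because the old suffix $\$$ has been replaced; instead its role is taken over by the new suffix $\x\$$ at row $2$, which also contributes $v[|v|-1]$.

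**Counting the run change.** Putting this together: $\bwt(v\x\$)$ is obtained from $\bwt(v\$)$ by prepending two characters $\x\, v[|v|-1]$ at the front, while the ``tail'' of $\bwt(v\x\$)$ — the BWT characters of all the old suffixes other than $\$$, in their unchanged relative order — is precisely $\bwt(v\$)$ with its first character (which was $v[|v|-1]$) removed. So schematically $\bwt(v\x\$) = \x\cdot v[|v|-1] \cdot \big(\bwt(v\$)[1..\,]\big)$, whereas $\bwt(v\$) = v[|v|-1]\cdot \big(\bwt(v\$)[1..\,]\big)$. The second and third characters of $\bwt(v\x\$)$ are $v[|v|-1]$ followed by $\bwt(v\$)[1]$, which is exactly the junction $\bwt(v\$)[0]\bwt(v\$)[1]$ that already existed, so no new run is created there. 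The only possible new run is the single leading $\x$: it either merges with the following run (if $v[|v|-1]=\x$, but then we gain nothing), or it forms one new run. Hence $\runs(\bwt(v\x\$)) \in \{\runs(\bwt(v\$)),\ \runs(\bwt(v\$))+1\}$, i.e.\ $r_\$(v)\leq r_\$(v\x)\leq r_\$(v)+1$.

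**Main obstacle.** The delicate point — and the one I would write out most carefully — is the claim that inserting $\x$ immediately before $\$$ preserves the relative order of all old suffixes, and in particular that no two of them ``flip'' because of the inserted character. This rests on the facts that (a) distinct suffixes of $v\$$ have distinct lengths, (b) the terminal $\$$ is unique and smallest, so two old suffixes are never prefix-comparable, and (c) $\x$ is $\leq$ every letter of $v$, so when comparing $v[i..]\x\$$ with $v[j..]\x\$$ past the end of the shorter $v$-part one compares $\x$ against a genuine letter of $v$ (or against $\$$) and the outcome agrees with the original comparison of $v[i..]\$$ with $v[j..]\$$. Once that structural lemma is nailed down, the rest is the bookkeeping above; the argument is essentially the mirror image of the prepending proposition just proved, which I would reference for the flavour.
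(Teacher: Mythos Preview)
Your proof is correct and follows essentially the same approach as the paper's. Both arguments establish that $\bwt(v\x\$) = \x \cdot \bwt(v\$)$, from which the bound on runs is immediate; the paper does this slightly more directly by viewing the rotations of $v\x\$$ as $\$v\x$ together with the rotations of $v\$$ in which the $\$$ has been replaced by $\x\$$, then noting (using $\$ < \x \leq$ every letter of $v$) that this replacement preserves both the relative order and the last characters, whereas you arrive at the same identity via the suffix picture with a bit more bookkeeping (splitting off $v[|v|-1]$ and re-attaching it).
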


\begin{proof} 
The rotations of $v\x\$$ can be viewed as the rotation $\$v\x$, plus the rotations of $v\$$, where the occurrence of $\$$ has been replaced by $\x\$$. The smallest of these is of course $\$v\x$, since it starts with $\$$, while all others appear in the same order as before. This is because $\x$ is smaller or equal the smallest character of $v$ and greater than $\$$, and therefore, replacing $\$$ by $\x\$$ does not change the lexicograhic order of these rotations. This implies $\bwt(v\x\$) = \x\cdot\bwt(v\$)$, and thus, $r_{\$}(v) \leq r_{\$}(v\x) \leq r_{\$}(v)+1.$ 
\end{proof}

\subsection{Multiplicative bit catastrophes for $r_{\$}$}

We can derive from our results in Sec.~\ref{sec:Fibonacci} that there exist families of strings on which an edit operation can result in an increase of $r_\$$ by a multiplicative factor of $\log n$. 

\bigskip
\begin{proposition}
    Let $v$ be the Lyndon rotation of the Fibonacci word $s$ of even order $2k > 4$, and $n = |v|$. Let $v'$ be the word resulting by appending a $\b$ to $v$. Then $r_{\$}(v') = \Theta(\log n)$. 
\end{proposition}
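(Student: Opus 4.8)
The plan is to identify the Lyndon rotation $v$ explicitly, recognize $v' = v\b$ as being, up to conjugacy, exactly the word whose number of BWT-runs was computed in Proposition~\ref{prop:furtherC_at_the_end}, and then pass from $r$ to $r_{\$}$ by a direct application of Lemma~\ref{le:pres_order_prep_smaller_primitive}. First I would recall that $s = s_{2k} = x_{2k}\a\b$, so $\rev(s) = \b\a x_{2k}$ because $x_{2k}$ is a palindrome; since the reverse of a standard word is one of its rotations, $\rev(s)$ — and hence $\conj_1(\rev(s)) = \a x_{2k}\b$ — is a conjugate of $s$. The word $\a x_{2k}\b$ is the lower Christoffel word built from the central word $x_{2k}$, which is a Lyndon word, so by uniqueness of the Lyndon conjugate of a primitive word we get $v = \a x_{2k}\b$ and therefore $v' = v\b = \a x_{2k}\b\b$. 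A short check shows $v'$ is again Lyndon: it begins with $\a\a$, hence it is smaller than its proper suffixes $\b$ and $\b\b$, and for every other proper suffix, which is of the form $u\b$ with $u$ a proper suffix of $v$, the inequality $v' < u\b$ is already decided inside $v$, where $v < u$ since $v$ is Lyndon. (In passing, $v'$ is primitive, as $\b\b$ occurs in it exactly once.)

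The proof then rests on two short remarks. On the one hand, $v' = \a x_{2k}\b\b = \conj_1(\b\a x_{2k}\b) = \conj_1(\rev(s)\b)$, so $v'$ is a conjugate of $\rev(s)\b$; since the BWT is invariant under conjugacy, $r(v') = r(\rev(s)\b)$, which equals $2k$ by Proposition~\ref{prop:furtherC_at_the_end}. On the other hand, by definition $r_{\$}(v') = \runs(\bwt(v'\$)) = r(v'\$)$, and since $v'$ is a Lyndon word containing both letters and $\$$ is strictly smaller than the smallest letter $\a$ occurring in $v'$, Lemma~\ref{le:pres_order_prep_smaller_primitive} applied to $v'$ and the character $\$$ gives $r(v') \le r(v'\$) \le r(v') + 2$. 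Combining the two remarks, $2k \le r_{\$}(v') \le 2k+2$.

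It then remains only to read off the asymptotics: $n = |v| = |\a x_{2k}\b| = |x_{2k}| + 2 = F_{2k}$, and since the Fibonacci numbers grow exponentially in their index, $k = \Theta(\log n)$; hence $r_{\$}(v') = \Theta(\log n)$, as claimed. The only steps needing genuine care are the identification of $v$ with the Christoffel word $\a x_{2k}\b$ (equivalently, with $\conj_1(\rev(s))$) and the verification that appending a $\b$ keeps it a Lyndon word; once those are in place the conclusion is immediate from Proposition~\ref{prop:furtherC_at_the_end} and Lemma~\ref{le:pres_order_prep_smaller_primitive}, with no additional BWT computation required.
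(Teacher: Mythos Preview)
Your proof is correct and follows essentially the same route as the paper's: identify $v=\a x_{2k}\b$, observe that $v'=v\b$ is Lyndon, recognize $v'$ as a conjugate of a word with known BWT-run count, and transfer from $r$ to $r_{\$}$ via Lemma~\ref{le:pres_order_prep_smaller_primitive}. The only differences are cosmetic: you appeal directly to Proposition~\ref{prop:furtherC_at_the_end} (noting $v'=\conj_1(\rev(s)\b)$) rather than routing through Proposition~\ref{prop:insert_infib}, which itself unwinds to Proposition~\ref{prop:furtherC_at_the_end}; and you supply an explicit verification that $v'$ is Lyndon, which the paper merely asserts.
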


\begin{proof}
Let $s=x_{2k}\a\b=x_{2k-1}\b\a x_{2k-2}\a\b$ be the Fibonacci word of order $2k$. One can see that 
$v= \a x_{2k}\b = \a x_{2k-2}\a\b x_{2k-1}\b$ \cite{BersteldeLuca97}. Since $v$ is a rotation of $s$, it holds that $r(v)=2$. By using Lemma \ref{le:pres_order_prep_smaller_primitive}, $r_{\$}(v)=\Theta(1)$ since $v$ is a Lyndon word. When we append $\b$ to $v$, we obtain $v'=\a x_{2k-2}\a\b x_{2k-1}\b\b$. One can note that $v'=\a x_{2k-2}\a\b x_{2k-1}\b\b$ is also a Lyndon word. Moreover, appending $\b$ to $v$ is equivalent to inserting $\b$ in $s$ at position $F_{2k-1}-2$, implying that $v'$ is a rotation of $s'$, where $s'$ is $s$ with a $\b$ inserted in position $F_{2k-1}-2$. By using Proposition \ref{prop:insert_infib}, we thus have that $r(v') = r(s') =\Theta(\log n)$. Since $v'$ is also a Lyndon word, therefore $r_{\$}(v')=\Theta(\log n)$, using Lemma \ref{le:pres_order_prep_smaller_primitive} again.
\end{proof}

\subsection{Additive bit catastrophes for $r_{\$}$}

In general, appending, deleting, or substituting with a symbol that is not the smallest of the alphabet can increase the number of runs of a word by an additive factor of $\Theta(\sqrt{n})$.  

\bigskip
\begin{lemma}[BWT of $w_k\$$] \label{le:bwt_wk_dollar}
      Given an integer $k >5$, for $w_k\$$ it holds that 
      \begin{align*}
      \vv{\$}{\star} &= \a\\
      \vv{\a\$}{\star} &= \b\\
      \vv{\a^i\b}{\star} &= \b\a^{k-i-2} \text{ for all } 4 \leq i \leq k-2, \\ 
      \vv{\a^3\b}{\star} &= \b^5(\a\b)^{k-6}\a, \\ 
      \vv{\a^{2}\b}{\star} &= \a\a\b\a^{2k-8},\\
      \vv{\a\b}{\star} &= \b^{k-2}\$\a\b\a^{2k-6}, \\
      \vv{\b\a}{\star} &= \b\a^{k-5}\b\b\b\a\b^{k-5}\a\b^{k-2}\a, \\
      \vv{\b^j\a}{\star} &= \b\a\b^{2k-2j-2}\a \text{ for all } 2\leq j \leq k-1 \text{ and }\\ 
      \vv{\b^{k}\a}{\star} &= \a.
      \end{align*}
      
     Hence, $\bwt(w_k\$) = \vv{\$}{\star} \cdot \vv{\a\$}{\star}\cdot \prod_{i=2}^{k-1}\vv{\a^{k-i}\b}{\star}\cdot \prod_{i=1}^{k}\vv{\b^i\a}{\star}$. Moreover, it holds that $r(w_k\$) = 8k-16$.
\end{lemma}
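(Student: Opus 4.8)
The plan is to follow the template of Lemmas~\ref{le:bwt_wk_insertion}--\ref{le:bwt_wk_substitution}: rather than rebuilding the matrix from scratch, I would compare the sorted rotations of $w_k\$$ against the already-known matrix of $w_k$ (Proposition~\ref{prop:bwt_wk} and Table~\ref{table:wk_bwt_structure}) and track only what appending the end-marker changes. The crucial point is that, since $\$$ is the unique occurrence of the smallest symbol, sorting the rotations of $w_k\$$ amounts to sorting its suffixes, and appending $\$$ merely ``cuts'' the circular seam $q_k\cdot s_2=\a\b^k\a\a\b^2\a\a$ of $w_k$. Using that $\b^k$ occurs \emph{uniquely} in $w_k$, every suffix of $w_k\$$ of length $>k+2$ contains $\b^k$ and hence occurs nowhere else in $w_k$; this forces the lexicographic comparison of any two such suffixes, and of any such suffix against $\conj_0(w_k)=w_k$, to be decided before reaching the $\$$, i.e.\ exactly as in the cyclic sorting of $w_k$. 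Consequently the only rotations that can change position or BWT-letter are $\$w_k$, the rotations of $w_k$ starting inside the suffix $q_k$, and $\conj_0(w_k)$; moreover the one starting at the first letter of $q_k$ is the unique rotation with prefix $\a\b^k\a$ and its left context does not touch the seam, so it too is unaffected.

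With this reduction I would then go block by block, matching Table~\ref{table:wk_bwt_structure}. The rotation $\$w_k$ is the unique smallest one and is preceded by the last letter of $q_k$, so $\vv{\$}{\star}=\a$. The rotation starting at the final $\a$ of $q_k$ becomes $\a\$s_2\cdots$, the unique rotation with prefix $\a\$$, preceded by the last $\b$ of $\b^k$, so $\vv{\a\$}{\star}=\b$ (and no rotation has prefix $\a\a\$$). For $1\le j\le k-1$, the rotation of $w_k$ with prefix $\b^j\a s_2$ becomes $\b^j\a\$s_2\cdots$, which, since $\$<\a$, is now the smallest rotation with prefix $\b^j\a$, while its BWT-letter stays $\b$ (it is still inside $\b^k$); hence $\vv{\b^j\a}{\star}$ is $\vv{\b^j\a}{}$ with that single interior $\b$ moved to the front. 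For $2\le j\le k-1$ this turns $\a\b^{2k-2j-1}\a$ into $\b\a\b^{2k-2j-2}\a$, and for $j=1$, reading off from Lemma~\ref{le:ba} that the $\b$ of the rotation $\b\a s_2$ is the last $\b$ of the interior run $\b^{k-4}$, it turns $\a^{k-5}\b\b\b\a\b^{k-4}\a\b^{k-2}\a$ into $\b\a^{k-5}\b\b\b\a\b^{k-5}\a\b^{k-2}\a$. The rotation with prefix $\b^k\a s_2$ becomes $\b^k\a\$s_2\cdots$, still the unique rotation with prefix $\b^k\a$ and still preceded by the first $\a$ of $q_k$, so $\vv{\b^k\a}{\star}=\a$. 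The rotation $\conj_0(w_k)$ keeps its prefix $s_2$ (hence $\a\b$) and its place among the $\a\b$-prefixed rotations, but its BWT-letter changes from $\a$ (the last letter of $q_k$) to $\$$, giving $\vv{\a\b}{\star}=\b^{k-2}\$\a\b\a^{2k-6}$. Finally, since the seam only produces the factor $\a\a$, the block $\vv{\a^2\b}{\star}$ loses exactly the leading $\b$ of $\vv{\a\a\b}{}$ (the one coming from the displaced rotation $\a s_2\cdots$), becoming $\a\a\b\a^{2k-8}$, while $\vv{\a^i\b}{\star}=\vv{\a^i\b}{}$ for all $3\le i\le k-2$.

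Putting the pieces together yields $\bwt(w_k\$)=\vv{\$}{\star}\cdot\vv{\a\$}{\star}\cdot\prod_{i=2}^{k-1}\vv{\a^{k-i}\b}{\star}\cdot\prod_{i=1}^{k}\vv{\b^i\a}{\star}$ in the claimed form, since we have exhausted all disjoint prefix-classes of rotations. The run count $r(w_k\$)=8k-16$ then follows by reading off the runs of this explicit string, taking the boundary merges into account; equivalently, one compares it block by block with the $6k-12$ runs of $\bwt(w_k)$, the net change being the new prefix $\a\b$ and the new $\$$ in $\vv{\a\b}{\star}$, the loss of the leading $\b$ of $\vv{\a^2\b}{}$, and the two extra runs created per block $\vv{\b^j\a}{\star}$ for $1\le j\le k-1$ because these blocks now begin with a $\b$ that no longer merges with the $\a$-run ending the previous block.

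The step I expect to be the main obstacle is the bookkeeping around $\vv{\b\a}{\star}$ and its interfaces with $\vv{\a\b}{\star}$ on the left and $\vv{\b\b\a}{\star}$ on the right: $\vv{\b\a}{}$ is the most intricate block of $\bwt(w_k)$, with runs merging into both neighbours, so one must check carefully that prepending the displaced $\b$ and shortening the interior run $\b^{k-4}$ to $\b^{k-5}$ gives exactly $\b\a^{k-5}\b\b\b\a\b^{k-5}\a\b^{k-2}\a$ and that the two boundaries now behave differently from the $w_k$ case. A secondary delicate point is making the ``cutting the seam reorders nothing else'' step fully rigorous, i.e.\ deducing from the uniqueness of $\b^k$ in $w_k$ that the $\$$ never influences a comparison between two non-seam rotations or between a non-seam rotation and $\conj_0(w_k)$.
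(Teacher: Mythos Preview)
Your proposal is correct and follows essentially the same block-by-block comparison with $\bwt(w_k)$ that the paper uses: identify the new rotations $\$w_k$ and $\a\$\cdots$, track how each rotation with prefix $\b^j\a s_2$ becomes $\b^j\a\$s_2\cdots$ and moves to the front of its block, replace the BWT-letter of $\conj_0(w_k)$ by $\$$, and drop the displaced rotation from the $\a\a\b$-block. The one presentational difference is that you justify ``nothing else reorders'' via the uniqueness of $\b^k$ in $w_k$, whereas the paper simply cites Lemmas~\ref{le:a^k-2}--\ref{le:b^ka} for each block; both are fine, and your argument is slightly more self-contained.
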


\begin{proof}
The first rotation of $\bwt(w_k\$)$ is $\$w_k$ and ends with an $\a$ because $w_k$ ends with an $\a$. Hence, $\vv{\$}{\star} = \a$. There is also a rotation $\a\$\widehat{w_k}$, which ends with a $\b$ because $\widehat{w_k}$ ends with a $\b$. Hence, $\vv{\a\$}{\star} = \b$. It lefts to compare the remaining ranges $\vv{v}{\star}$ with respect to $\vv{v}{}$. 

It is easy to see from Lemma \ref{le:a^k-2}, Lemma \ref{le:a^ib}, and Lemma \ref{le:a^3b} that $\vv{\a^i\b}{\star} = \vv{\a^i\b}{}$ for all $3 \leq i \leq k-2$.

The rotation starting with $\a s_2$ in $w_k$ does not exist anymore when $\$$ is appended to $w_k$. By Lemma \ref{le:a^2b} the remaining rotations keep their last symbols and relative order. Therefore, $\vv{\a\a\b}{\star}$ is the same as $\vv{\a\a\b}{}$ but with the first character removed, i.e., $\vv{\a\a\b}{\star}=  \a\a\b\a^{2k-8}$ .

For the rotations starting with $\a\b$, it happens that the rotation that originally started with $s_2$ in $w_k$, now ends with a $\$$. By Lemma \ref{le:ab}, the remaining rotations do not change their last symbol. Also, all the rotations keep their relative order. Hence, $\vv{\a\b}{\star} = \b^{k-2}\$\a\b\a^{2k-6}$.

In the case of the rotations starting with $\b\a$, the rotation that originally started with $\b\a s_2$ now starts with $\b\a\$s_2$ and is the smallest of its range. From Lemma \ref{le:ba} the remaining rotations keep their last symbols and relative order. Hence, $\vv{\b\a}{\star} = \b\a^{k-5}\b\b\b\a\b^{k-5}\a\b^{k-2}\a$.

For the rotations starting with $\b^j\a$ for $2 \leq j \leq k-1$, one can notice that after appending $\$$ to $w_k$, the rotation that previously started with $\b^j\a s_2$ and ended with a $\b$, now starts with $\b^j\a\$s_2$ and still ends with a $\b$. Moreover, this rotation is smaller than the rotation starting with $\b^j\a\a e_j$. From Lemma \ref{le:b^ia} and Lemma \ref{le:b^ka} we can see that all the other rotations keep their relative order and last symbols. The rotation starting with $\b^j\a\a e_j$ still ends with an $\a$, but now is the second smallest of its range. Hence, $\vv{\b^j\a}{\star} = \b\a\b^{2k-2j-2}\a$ for all $2 \leq j \leq k-1$.

Finally, it is clear that $\vv{\b^k\a}{\star} = \a$, as there is only one maximal run of $k$ symbol $\b$'s, and it is not preceded by $\$$. 
\end{proof}

\bigskip
\begin{lemma}[BWT of $w_k\b\$$]\label{le:bwt_wk_b_dollar}
      Given an integer $k >5$, for  $w_k\b\$$ it holds that
      \begin{align*}
      \vv{\$}{\star} &= \b\\
      \vv{\a^i\b}{\star} &= \b\a^{k-i-2} \text{ for all } 4 \leq i \leq k-2, \\ 
      \vv{\a^3\b}{\star} &= \b^5(\a\b)^{k-6}\a, \\ 
      \vv{\a^{2}\b}{\star} &= \a\a\b\a^{2k-8},\\
      \vv{\a\b}{\star} &= \b\b^{k-2}\$\a\b\a^{2k-6}, \\
      \vv{\b\$}{\star} &= \a, \\
      \vv{\b\a}{\star} &= \a^{k-5}\b\b\b\a\b^{k-5}\a\b\b^{k-2}\a, \\
      \vv{\b^j\a}{\star} &= \a\b^{2k-2j-1}\a \text{ for all } 2\leq j \leq k-1 \text{ and }\\ 
      \vv{\b^{k}\a}{\star} &= \a.
      \end{align*}
      
     Hence, $\bwt(w_k\b\$) = \vv{\$}{\star} \cdot (\prod_{i=2}^{k-1}\vv{\a^{k-i}\b}{\star})\cdot \vv{\b\$}{\star} \cdot (\prod_{i=1}^{k}\vv{\b^i\a}{\star})$. Moreover, it holds that $r(w_k\b\$) = 6k-13$.
\end{lemma}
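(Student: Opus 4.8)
The plan is to derive $\bwt(w_k\b\$)$ by comparison with $\bwt(w_k\$)$, whose block decomposition is given by Lemma~\ref{le:bwt_wk_dollar}. Since $\$$ is the unique minimum symbol, $\bwt(w_k\b\$)$ is governed by the lexicographic order of the suffixes of $w_k\b\$$. As $w_k$ ends with a single $\a$ preceded by a $\b$ (recall $q_k=\a\b^k\a$), passing from $w_k\$$ to $w_k\b\$$ amounts to inserting a $\b$ immediately before the final $\$$: the suffixes of $w_k\b\$$ are $\$$, the new suffix $\b\$$, and $u\b\$$ for every nonempty suffix $u$ of $w_k$, while those of $w_k\$$ are $\$$ and $u\$$. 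So I have to account for (a)~the fate of $\$$; (b)~where the fresh suffix $\b\$$ lands; and (c)~how replacing the trailing $\$$ of each $u\$$ by $\b\$$ perturbs the sorted order and the preceding characters.

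For (a): $\$$ stays the smallest suffix, but its preceding character changes from $w_k[n-1]=\a$ to the inserted $\b$, so $\vv{\$}{\star}=\b$. For (b): since $\$$ is smaller than every symbol, $\b\$$ is larger than every suffix starting with $\a$ and smaller than every other suffix starting with $\b$; hence $\b\$$ forms a singleton block placed exactly between the $\a$-prefixed and the $\b$-prefixed suffixes, and it is preceded in $w_k\b\$$ by $w_k[n-1]=\a$, giving $\vv{\b\$}{\star}=\a$ (and, since $w_k$ ends with a single $\a$, there is no $\vv{\a\$}{}$ or $\vv{\a\a\$}{}$ block anymore: the last $\a$ of $w_k$ is now followed by $\b$, placing its rotation in the $\a\b$-block). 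For (c): if $u,u'$ are distinct nonempty suffixes of $w_k$ and neither is a prefix of the other, the first mismatch of $u\b\$$ and $u'\b\$$ falls inside $u,u'$, so their relative order and preceding characters agree with those of $u\$$ and $u'\$$; the remaining case reduces to identifying the nonempty suffixes of $w_k$ that are a proper prefix of a longer suffix, and using that $w_k$ ends with $q_k=\a\b^k\a$ together with the prefix-freeness of $\mathcal U$ from Lemma~\ref{le:sorting_s_e_q} one checks that the only such suffix is $\a$ itself. Consequently the only suffix whose position genuinely moves is the one equal to the last $\a$ of $w_k$ followed by the sentinel: in $w_k\$$ this is $\a\$$, the smallest $\a$-prefixed suffix, contributing $\b$ to the singleton block $\vv{\a\$}{}$; in $w_k\b\$$ it becomes $\a\b\$$, and comparing $\a\b\$$ with $\a u'\b\$$ for nonempty $u'$ shows it is larger than every $\a\a\cdots$ suffix and smaller than every $\a\b\cdots$ suffix, i.e., it becomes the minimum of the $\a\b$-block while still contributing $w_k[n-2]=\b$. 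Symmetrically, in each block $\b^j\a$ with $1\le j\le k-1$ the suffix that in $w_k\$$ read $\b^j\a\$$ (the block minimum, contributing $\b$) becomes $\b^j\a\b\$$, which, being smaller than every $\b^j\a\b\a\cdots$ suffix, slides to the head of the $\b^j\a\b$-subrange of that block while still contributing $\b$; the block $\b^k\a$ contains a single suffix and is unaffected.

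Feeding these local changes into the block decomposition of Lemma~\ref{le:bwt_wk_dollar} yields, block by block, exactly the stated values: $\vv{\a^i\b}{\star}$ for $3\le i\le k-2$ and $\vv{\a^2\b}{\star}$ are unchanged, $\vv{\a\b}{\star}$ gains a leading $\b$ from the migrated suffix $\a\b\$$, $\vv{\b\a}{\star}$ and each $\vv{\b^j\a}{\star}$ have the leading $\b$ of the $w_k\$$-version relocated inside the block (netting $\vv{\b^j\a}{\star}=\a\b^{2k-2j-1}\a$, as in $\bwt(w_k)$), and the singleton $\vv{\b\$}{\star}=\a$ is inserted at the $\a/\b$ boundary. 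Finally, $r(w_k\b\$)=6k-13$ follows by concatenating these blocks and counting runs while tracking the merges at the block interfaces (in particular each $\vv{\b^j\a}{\star}$, having lost its separating leading $\b$, now merges with its predecessor), exactly as in the run count of Proposition~\ref{prop:bwt_wk}. The main obstacle is step (c): pinning down that $\a$ is the unique nonempty suffix of $w_k$ which is a proper prefix of a longer suffix, and then precisely locating the migrated suffixes $\a\b\$$ and $\b^j\a\b\$$ within their blocks; once that is in place, the rest is bookkeeping against Lemma~\ref{le:bwt_wk_dollar}.
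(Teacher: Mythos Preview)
Your route differs from the paper's: the paper derives each block of $\bwt(w_k\b\$)$ by comparison with $\bwt(w_k)$, invoking Lemmas~\ref{le:a^k-2}--\ref{le:b^ka} directly, whereas you take $\bwt(w_k\$)$ from Lemma~\ref{le:bwt_wk_dollar} as the baseline and track the effect of inserting a single $\b$ before the $\$$. Your organizing principle---that the relative order of $u\$$ and $u'\$$ can differ from that of $u\b\$$ and $u'\b\$$ only when one of $u,u'$ is a proper prefix of the other with next letter $\a$---is correct and arguably cleaner than the paper's block-by-block inspection.

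There is, however, an internal inconsistency. You assert that ``the only such suffix is $\a$ itself'' and then that ``the only suffix whose position genuinely moves'' is the last $\a$; both statements are false as written. For every $1\le j\le k-1$ the short tail $\b^j\a$ is a proper prefix of the longer suffix beginning $\b^j\a\a e_j\cdots$, so its order relative to that suffix does flip when $\$$ is replaced by $\b\$$. You then correctly analyze precisely these cases under ``Symmetrically'', so the proof is not actually broken, but the two sentences contradict what follows. The fix is to say that among the $\a$-prefixed suffixes only $\a$ reoccurs as a factor of $w_k$ (because every longer $\a$-prefixed suffix contains the unique $\b^k$; the prefix-freeness of $\mathcal U$ is not what is needed here), while among the $\b$-prefixed suffixes the tails $\b^j\a$, $1\le j\le k-1$, all slide from the head of their block to the head of the $\b^j\a\b$-subrange. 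With that wording correction your argument is complete and yields the stated values of $\vv{\cdot}{\star}$ and the run count $6k-13$.
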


\begin{proof}
The first rotation of $\bwt(w_k\b\$)$ is $\$w_k\b$. Hence, $\vv{\$}{\star} = \b$. There is also a rotation $\b\$w_k$, which ends with an $\a$ because $w_k$ ends with an $\a$. Hence, $\vv{\b\$}{\star} = \a$. It lefts to compare the remaining ranges $\vv{v}{\star}$ with respect to $\vv{v}{}$. 

It is easy to see from Lemma \ref{le:a^k-2}, Lemma \ref{le:a^ib}, and Lemma \ref{le:a^3b} that $\vv{\a^i\b}{\star} = \vv{\a^i\b}{}$ for all $3 \leq i \leq k-2$.

The rotation starting with $\a s_2$ in $w_k$ does not exist anymore when $\b\$$ is appended to $w_k$. By Lemma \ref{le:a^2b} the remaining rotations keep their last symbols and relative order. Therefore, $\vv{\a\a\b}{\star}$ is the same as $\vv{\a\a\b}{}$ but with the first character removed, i.e., $\vv{\a\a\b}{\star}=  \a\a\b\a^{2k-8}$ .

For the rotations starting with $\a\b$, it happens that the rotation that originally started with $s_2$ in $w_k$, now ends with a $\$$ when $\b\$$ is appended. Also, there is a new rotation starting with $\a\b\$$ that ends with $\b$, and is clearly the smallest of the range. By Lemma \ref{le:ab}, the remaining rotations do not change their last symbol. Also, all the rotations that come from $w_k$ keep their relative order. Hence, $\vv{\a\b}{\star} = \b\b^{k-2}\$\a\b\a^{2k-6}$.

In the case of the rotations starting with $\b\a$, the rotation that originally started with $\b\a s_2$ now starts with $\b\a\b\$s_2$ and can be found just before the rotation starting with $\b\a\b\a^{k-2}$. From Lemma \ref{le:ba} the remaining rotations keep their last symbols and relative order. Hence, $\vv{\b\a}{\star} = \a^{k-5}\b\b\b\a\b^{k-5}\a\b\b^{k-2}\a$.

For the rotations starting with $\b^j\a$ for $2 \leq j \leq k-1$, one can notice that after appending $\b\$$ to $w_k$, the rotation that previously started with $\b^j\a s_2$ and ended with a $\b$, now starts with $\b^j\a\b\$s_2$ and still ends with a $\b$. Moreover, this rotation is still strictly in between the rotations starting with $\b^j\a\a e_j$ and $\b^j\a\b\a^{j-2}s_{j+1}$ ($q_k$ instead of $s_{j+1}$ if $j = k-1$). From Lemma \ref{le:b^ia} and Lemma \ref{le:b^ka}, we can see that the latter two rotations are still the smallest and greatest of the range, and both end with an $\a$. Also, all the other rotations keep their last symbols. Hence, $\vv{\b^j\a}{\star} = \vv{\b^j\a}{}$ for all $2 \leq j \leq k-1$.

Finally, it is clear that $\vv{\b^k\a}{\star} = \a$, as there is only one maximal run of $k$ symbol $\b$'s, and it is not preceded by $\$$. 
\end{proof}

\bigskip
\begin{lemma}[BWT of $w_k\b\b\$$] \label{le:bwt_wk_b_b_dollar}
      Given an integer $k >5$, for $w_k\b\b\$$ it holds that 
      \begin{align*}
      \vv{\$}{\star} &= \b\\
      \vv{\a^i\b}{\star} &= \b\a^{k-i-2} \text{ for all } 4 \leq i \leq k-2, \\ 
      \vv{\a^3\b}{\star} &= \b^5(\a\b)^{k-6}\a, \\ 
      \vv{\a^{2}\b}{\star} &= \a\a\b\a^{2k-8},\\
      \vv{\a\b}{\star} &= \b\b^{k-2}\$\a\b\a^{2k-6}, \\
      \vv{\b\$}{\star} &= \b, \\
      \vv{\b\a}{\star} &= \a^{k-5}\b\b\b\a\b^{k-5}\a\b\b^{k-2}\a, \\
      \vv{\b\b\$}{\star} &= \a, \\
      \vv{\b^j\a}{\star} &= \a\b^{2k-2j-2}\a\b \text{ for all } 2\leq j \leq k-1 \text{ and }\\ 
      \vv{\b^{k}\a}{\star} &= \a.
      \end{align*}
      
     Hence, $\bwt(w_k\b\b\$) = \vv{\$}{\star} \cdot (\prod_{i=2}^{k-1}\vv{\a^{k-i}\b}{\star})\cdot \vv{\b\$}{\star} \cdot \vv{\b\a}{\star} \cdot \vv{\b\b\$}{\star} \cdot (\prod_{i=2}^{k}\vv{\b^i\a}{\star})$. Moreover, it holds that $r(w_k\b\$) = 8k-17$.
\end{lemma}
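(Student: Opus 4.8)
The plan is to follow the template of Lemmas~\ref{le:bwt_wk_dollar} and \ref{le:bwt_wk_b_dollar}: rather than rebuilding the whole BWT matrix of $w_k\b\b\$$ from scratch, I would compare it block by block with the matrix of $w_k$ given in Proposition~\ref{prop:bwt_wk}, isolating only the rotations whose relative order or whose preceding character is changed by the edit. Globally, appending $\b\b\$$ does two things: it destroys the single circular occurrence of the factor $\a s_2$ (where the final $\a$ of $q_k$ wrapped onto $s_2$ in $w_k$), and it creates a short run $\b\b$ immediately before $\$$; this run is separated from the run $\b^k$ inside $q_k$ by the final $\a$ of $q_k$, so $q_k$ still carries the unique circular occurrence of $\b^k$ and there is no occurrence of $\b^{k+1}$. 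Moreover, since $\$$ is the unique minimum and occurs once, the rotations prefixed by $\$$, by $\b\$$, and by $\b\b\$$ are exactly $\$w_k\b\b$, $\b\$w_k\b$, and $\b\b\$w_k$, whose preceding characters are the last characters of $w_k\b\b$, $w_k\b$, and $w_k$, i.e.\ $\b$, $\b$, and $\a$; this yields $\vv{\$}{\star}=\b$, $\vv{\b\$}{\star}=\b$, and $\vv{\b\b\$}{\star}=\a$.

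For the ranges prefixed by a power of $\a$, the effect of the edit is exactly the one analysed in Lemma~\ref{le:bwt_wk_b_dollar} for $w_k\b\$$: the blocks $\vv{\a^i\b}{\star}$ for $3\le i\le k-2$ are unchanged, $\vv{\a^2\b}{\star}$ equals $\vv{\a^2\b}{}$ with its leading $\b$ deleted (the last character of the rotation formerly prefixed $\a s_2$, which has now left this range), and $\vv{\a\b}{\star}$ acquires one extra rotation, ending in $\b$, whose only difference from the $w_k\b\$$ case is that it is now prefixed $\a\b\b\$\cdots$ instead of $\a\b\$\cdots$; this moves it inside the $\a\b$-block but not outside the run $\b^{k-2}$, so the resulting word is unchanged. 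Hence $\vv{\a^i\b}{\star}$ takes exactly the values stated in Lemma~\ref{le:bwt_wk_b_dollar} (and in the statement).

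The heart of the argument is the treatment of the blocks prefixed by $\b\a, \b\b\a, \dots, \b^{k}\a$. For every $1\le j\le k-1$, the rotation of $w_k$ prefixed by $\b^j\a s_2$ (taking $j$ of the $\b$'s of the run of $q_k$, then the final $\a$ of $q_k$, then $s_2$) becomes, in $w_k\b\b\$$, the rotation prefixed by $\b^j\a\b\b\$\cdots$; its preceding character is still a $\b$ of the run $\b^k$, because $j<k$. The crucial point is that, whereas in $w_k\b\$$ this rotation was $\b^j\a\b\$\cdots$ and therefore sat at the \emph{front} of the $\b^j\a\b$-subrange (leaving the block word unchanged), here it is $\b^j\a\b\b\$\cdots$, which by Lemma~\ref{le:sorting_s_e_q} is larger than every rotation $\b^j\a\b\a^\ell\cdots$ and hence becomes the \emph{largest} rotation of the whole block $\vv{\b^j\a}{\star}$. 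Thus one $\b$ is moved from the end of the $\b^j\a\a$-subrange to the very end of the block, turning $\vv{\b^j\a}{}=\a\b^{2k-2j-1}\a$ into $\vv{\b^j\a}{\star}=\a\b^{2k-2j-2}\a\b$ for $2\le j\le k-1$ (degenerating to $\a\a\b$ for $j=k-1$); the analogous move in the $\b\a$-block gives the stated $\vv{\b\a}{\star}$ (which coincides with the one for $w_k\b\$$), and $\vv{\b^k\a}{\star}=\a$ is unaffected since its rotation is still preceded by the first $\a$ of $q_k$. Concatenating the blocks in the order of their (pairwise incomparable, exhaustive) prefixes $\$<\a^{k-2}\b<\dots<\a\b<\b\$<\b\a<\b\b\$<\b\b\a<\dots<\b^k\a$ gives the claimed $\bwt(w_k\b\b\$)$, and counting runs in this concatenation while accounting for the merges at block junctions (for instance the leading $\a\a$ of $\vv{\b^2\a}{\star}$ merging with $\vv{\b\b\$}{\star}=\a$) yields $r(w_k\b\b\$)=(6k-12)+(2k-5)=8k-17$ (the $w_k\b\$$ appearing in the statement being a typo for $w_k\b\b\$$).

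The main obstacle is the bookkeeping of the run merges at block boundaries, since it is precisely these merges that bring the naive run count down to $8k-17$; one must also check the degenerate instances of the generic formulas for small $k$ — most notably $\vv{\b^{k-1}\a}{\star}=\a\a\b$, where the power $\b^{2k-2j-2}$ collapses, and the short $\a$-power blocks when $k$ is close to $6$ — and confirm the lexicographic claim that $\b^j\a\b\b\$\cdots$ is genuinely the maximum of its block rather than merely lying somewhere inside the $\b^j\a\b$-subrange.
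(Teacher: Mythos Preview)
Your proposal is correct and follows essentially the same block-by-block comparison with $\bwt(w_k)$ that the paper uses. One point deserves a small correction, though: for the $\b\a$-block you write that ``the analogous move'' applies and then assert that the resulting word coincides with the one for $w_k\b\$$. Taken literally, the ``analogous'' claim (that the displaced rotation $\b\a\b\b\$\cdots$ becomes the \emph{largest} rotation of its block) is false for $j=1$: the rotation prefixed $\b s_3=\b\a\b\b\b\a\a\cdots$ is still larger, so the displaced rotation lands just before it, not at the very end. The paper handles this case explicitly. Fortunately this slip is harmless, because the moved $\b$ ends up adjacent to the existing run $\b^{k-2}$ in either position, so the word $\vv{\b\a}{\star}$ is unchanged and your stated conclusion is correct. (Also, your appeal to Lemma~\ref{le:sorting_s_e_q} for the comparison $\b^j\a\b\b\$>\b^j\a\b\a^\ell$ is unnecessary: this is immediate from $\b>\a$ at position $j+2$.)
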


\begin{proof}
The first rotation of $\bwt(w_k\b\b\$)$ is $\$w_k\b\b$. Hence, $\vv{\$}{\star} = \b$. There is another new rotation $\b\$w_k\b$. Hence, $\vv{\b\$}{\star} = \b$. There is also a rotation $\b\b\$w_k$ that ends with an $\a$ because $w_k$ ends with an $\a$. Hence, $\vv{\b\b\$}{\star} = \a$. It lefts to compare the remaining ranges $\vv{v}{\star}$ with respect to $\vv{v}{}$. 

It is easy to see from Lemma \ref{le:a^k-2}, Lemma \ref{le:a^ib}, and Lemma \ref{le:a^3b} that $\vv{\a^i\b}{\star} = \vv{\a^i\b}{}$ for all $3 \leq i \leq k-2$.

The rotation starting with $\a s_2$ in $w_k$ does not exist anymore when $\b\b\$$ is appended to $w_k$. By Lemma \ref{le:a^2b} the remaining rotations keep their last symbols and relative order. Therefore, $\vv{\a\a\b}{\star}$ is the same as $\vv{\a\a\b}{}$ but with the first character removed, i.e., $\vv{\a\a\b}{\star}=  \a\a\b\a^{2k-8}$ .

For the rotations starting with $\a\b$, it happens that the rotation that originally started with $s_2$ in $w_k$, now ends with a $\$$ when $\b\b\$$ is appended. Also, there is a new rotation starting with $\a\b\b\$$ that ends with $\b$, and can be found just before the rotation starting with $s_2$. By Lemma \ref{le:ab}, the remaining rotations do not change their last symbol. Also, all the rotations that come from $w_k$ keep their relative order. Hence, $\vv{\a\b}{\star} = \b^{k-2}\b\$\a\b\a^{2k-6}$.

In the case of the rotations starting with $\b\a$, the rotation that originally started with $\b\a s_2$ now starts with $\b\a\b\b\$s_2$ and can be found just before the rotation starting with $\b s_3$ (the greatest on the range). From Lemma \ref{le:ba} we can see that the remaining rotations keep their last symbols and relative order. Hence, $\vv{\b\a}{\star} = \a^{k-5}\b\b\b\a\b^{k-5}\a\b^{k-2}\b\a$.

For the rotations starting with $\b^j\a$ for $2 \leq j \leq k-1$, one can notice that after appending $\b\b\$$ to $w_k$, the rotation that previously started with $\b^j\a s_2$ and ended with a $\b$, now starts with $\b^j\a\b\b\$s_2$ and still ends with a $\b$. Moreover, this rotation is greater than the rotation starting with $\b^j\a\b\a^{j-2}s_{j+1}$ ($q_k$ instead of $s_{j+1}$ if $j = k-1$).  From Lemma \ref{le:b^ia} and Lemma \ref{le:b^ka} we can see that all the other rotations keep their relative order an last symbols. The rotation starting with  $\b^j\a\b\a^{j-2}s_{j+1}$ ($q_k$ instead of $s_{j+1}$ if $j = k-1$) still ends with an $\a$, but now is the second greatest of its range. Hence, $\vv{\b^j\a}{\star} = \a\b^{2k-2j-2}\a\b$ for all $2 \leq j \leq k-1$.

Finally, it is clear that $\vv{\b^k\a}{\star} = \a$, as there is only one maximal run of $k$ symbol $\b$'s, and it is not preceded by $\$$. 
\end{proof}

\bigskip
\begin{lemma}[BWT of $w_k\a\$$] \label{le:bwt_wk_a_dollar}
      Given an integer $k >5$, for $w_k\a\$$ it holds that
      \begin{align*}
      \vv{\$}{\star} &= \a\\
       \vv{\a\$}{\star} &= \a\\
        \vv{\a\a\$}{\star} &= \b\\
      \vv{\a^i\b}{\star} &= \b\a^{k-i-2} \text{ for all } 4 \leq i \leq k-2, \\ 
      \vv{\a^3\b}{\star} &= \b^5(\a\b)^{k-6}\a, \\ 
      \vv{\a^{2}\b}{\star} &= \a\a\b\a^{2k-8},\\
      \vv{\a\b}{\star} &= \b^{k-2}\$\a\b\a^{2k-6}, \\
      \vv{\b\a}{\star} &= \b\a^{k-5}\b\b\b\a\b^{k-5}\a\b^{k-2}\a, \\
      \vv{\b^j\a}{\star} &= \b\a\b^{2k-2j-2}\a \text{ for all } 2\leq j \leq k-1 \text{ and }\\ 
      \vv{\b^{k}\a}{\star} &= \a.
      \end{align*}
      
     Hence, $\bwt(w_k\a\$) = \vv{\$}{\star} \cdot (\prod_{i=2}^{k-1}\vv{\a^{k-i}\b}{\star})\cdot \vv{\b\$}{\star} \cdot (\prod_{i=1}^{k}\vv{\b^i\a}{\star})$. Moreover, it holds that $r(w_k\a\$) = 8k-16$.
\end{lemma}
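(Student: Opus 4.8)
The plan is to follow the template of the preceding lemmas, using the observation that $w_k\a\$$ is obtained from $w_k\$$ by inserting a single $\a$ immediately before the final $\$$; hence I would compare $\bwt(w_k\a\$)$ with $\bwt(w_k\$)$, whose structure and run count $8k-16$ are given by Lemma~\ref{le:bwt_wk_dollar}. Write $n=|w_k|$ and recall that $w_k$ ends with $q_k=\a\b^k\a$, so $w_k[n-1]=\a$ and $w_k[n-2]=\b$.

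First I would treat the rotations containing $\$$: in $w_k\a\$$ these are exactly $\$w_k\a$, $\a\$w_k$ and $\a\a\$\widehat{w_k}$, with preceding symbols $\a$, $w_k[n-1]=\a$, and $w_k[n-2]=\b$ respectively. Since $\$<\a<\b$, they occur in the order $\$\cdots < \a\$\cdots < \a\a\$\cdots$, the last one right before all rotations with prefix $\a\a\a$. In $\bwt(w_k\$)$ the analogous rotations $\$w_k$ and $\a\$\widehat{w_k}$ form two singleton ranges carrying the symbols $\a$ and $\b$, so the only change at the top of the matrix is that this ``$\b$'' block splits into $\vv{\a\$}{\star}=\a$ followed by $\vv{\a\a\$}{\star}=\b$, while $\vv{\$}{\star}=\a$. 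Every other rotation of $w_k\a\$$ starts at some position $i\in\{0,\ldots,n-2\}$ of $w_k$ and equals $w_k[i..n-1]\,\$\,w_k[0..i-1]$ in $w_k\$$ but $w_k[i..n-1]\,\a\$\,w_k[0..i-1]$ in $w_k\a\$$; thus its preceding symbol ($w_k[i-1]$, or $\$$ when $i=0$) is unchanged, and when the comparison of two such rotations in $w_k\$$ runs past the common factor $w_k[i..n-1]$ versus $w_k[j..n-1]$, the shorter factor is continued there by $\$$, while in $w_k\a\$$ it is continued by $\a\$$ and the longer one by a symbol of $w_k$; since $\a<\b$ and $\$$ is smallest, the comparison comes out the same as in $w_k\$$. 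Hence the ranges $\vv{\a^i\b}{\star}$, $\vv{\a^3\b}{\star}$, $\vv{\a^2\b}{\star}$, $\vv{\a\b}{\star}$, $\vv{\b\a}{\star}$, $\vv{\b^j\a}{\star}$ and $\vv{\b^k\a}{\star}$ coincide with the corresponding ranges of $\bwt(w_k\$)$ from Lemma~\ref{le:bwt_wk_dollar}, i.e.\ with the claimed ones. (Alternatively, one can compare directly with $\bwt(w_k)$ as in the proofs of Lemmas~\ref{le:bwt_wk_insertion} and~\ref{le:bwt_wk_dollar}, tracking only the rotations that touch the junction between $q_k$ and $s_2$; the conclusion is the same.)

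For the run count, passing from $\bwt(w_k\$)=\a\b\cdots$ to $\bwt(w_k\a\$)=\a\a\b\cdots$ merely lengthens the leading run of $\a$'s (from $1$ to $2$) and leaves the rest of the string untouched, so the number of runs is unchanged and $r(w_k\a\$)=r(w_k\$)=8k-16$. The one delicate step is the tie-break argument for factors $w_k[i..n-1]$ that are prefixes of one another; everything else is bookkeeping already carried out for $w_k\$$, so I do not expect a genuine obstacle.
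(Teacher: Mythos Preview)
Your argument is correct and is in substance the same as the paper's, but the paper packages the whole tie-break analysis into a single citation: it invokes Proposition~\ref{prop:a_rdollar}, whose proof already shows that if $\x$ is smaller than or equal to the smallest character of $v$ (here $\x=\a$ and $v=w_k$), then $\bwt(v\x\$)=\x\cdot\bwt(v\$)$. Combined with Lemma~\ref{le:bwt_wk_dollar} this gives $\bwt(w_k\a\$)=\a\cdot\bwt(w_k\$)$ in one line, from which both the block structure and $r(w_k\a\$)=r(w_k\$)=8k-16$ follow immediately. Your ``delicate step'' (that replacing $\$$ by $\a\$$ in each rotation preserves all pairwise comparisons because $\a\$$ is still smaller than any length-$2$ factor over $\{\a,\b\}$) is exactly the content of that proposition, so you are re-deriving it in place rather than citing it.
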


\begin{proof}
We obtain $\bwt(w_k\a\$) = \a\bwt(w_k\$)$ by applying Proposition \ref{prop:a_rdollar} to the words $w_k\a\$$ and $w_k\$$, and we already know the structure of $\bwt(w_k\$)$ by Lemma \ref{le:bwt_wk_dollar}. 
\end{proof}

\bigskip
\begin{proposition} \label{le:w_k_b_sensitivity_r_dollar}
    There exists an infinite family of words such that: (i) $r_\$(w\b) - r_\$(w) = \Theta(\sqrt{n})$; (ii) $r_\$(\widehat{w}) - r_\$(w) = \Theta(\sqrt{n})$; (iii) $r_\$(\widehat{w}\a) - r_\$(w) = \Theta(\sqrt{n})$.
\end{proposition}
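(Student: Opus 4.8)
The plan is to use the infinite family $\{\, w_k\b : k > 5 \,\}$, where $w_k$ is the word of Definition~\ref{def:w_k}. Writing $n = |w_k\b| = |w_k| + 1$, Observation~\ref{le:w_k_length} gives $n = \Theta(k^2)$, hence $k = \Theta(\sqrt{n})$, so it suffices to show that each of the three differences equals $\Theta(k)$. The key point is that every word that arises from applying an edit operation to $w_k\b$ has already been fully analyzed: $w_k\b\$$ in Lemma~\ref{le:bwt_wk_b_dollar}, $w_k\b\b\$$ in Lemma~\ref{le:bwt_wk_b_b_dollar}, $w_k\$$ in Lemma~\ref{le:bwt_wk_dollar}, and $w_k\a\$$ in Lemma~\ref{le:bwt_wk_a_dollar}. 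Since $r_{\$}(v) = \runs(\bwt(v\$))$ for every word $v$, the relevant values of $r_{\$}$ can be read off directly from those lemmas, and the proof reduces to three subtractions (this is the $\$$-analogue of Proposition~\ref{le:w_k_sensitivity}).

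Concretely, set $w = w_k\b$. For part~(i) the edit is appending a $\b$: since $w\b = w_k\b\b$, Lemma~\ref{le:bwt_wk_b_dollar} gives $r_{\$}(w) = r(w_k\b\$) = 6k-13$ and Lemma~\ref{le:bwt_wk_b_b_dollar} gives $r_{\$}(w\b) = r(w_k\b\b\$) = 8k-17$, so $r_{\$}(w\b) - r_{\$}(w) = 2k-4$. For part~(ii) the edit is deleting the last character: $\widehat{w} = w_k$, so Lemma~\ref{le:bwt_wk_dollar} gives $r_{\$}(\widehat{w}) = r(w_k\$) = 8k-16$, whence $r_{\$}(\widehat{w}) - r_{\$}(w) = (8k-16)-(6k-13) = 2k-3$. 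For part~(iii) the edit is substituting the last character by an $\a$: $\widehat{w}\a = w_k\a$, so Lemma~\ref{le:bwt_wk_a_dollar} gives $r_{\$}(\widehat{w}\a) = r(w_k\a\$) = 8k-16$, whence $r_{\$}(\widehat{w}\a) - r_{\$}(w) = 2k-3$. In each case the difference is $\Theta(k) = \Theta(\sqrt{n})$, which proves the proposition.

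No step here is genuinely difficult once Lemmas~\ref{le:bwt_wk_dollar}--\ref{le:bwt_wk_a_dollar} are in hand; the only point requiring a moment of thought is the choice of base word, and I would highlight it as the ``catch''. One cannot use $w_k$ itself for parts~(i) and~(ii): comparing Lemma~\ref{le:bwt_wk_dollar} with Lemma~\ref{le:bwt_wk_b_dollar} shows that appending a $\b$ to $w_k$ \emph{decreases} $r_{\$}$ (from $8k-16$ to $6k-13$), and deleting the last character of $w_k$ has the same effect; the desired large increase surfaces only one edit later, on $w_k\b$, whose $r_{\$}$ value is comparatively small. As a sanity check I would verify the elementary identity $|w_k\b| = (3k^2+7k-16)/2 = \Theta(k^2)$, confirming that the $\Theta(k)$ gaps above are $\Theta(\sqrt{n})$; note also that, unlike for $r$, passing to a conjugate of $w_k$ would not help, since these statements concern the $\$$-variant directly and $r_{\$}$ is not conjugacy-invariant.
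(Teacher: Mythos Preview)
Your proof is correct and follows exactly the paper's approach: the paper also takes the family $\{w_k\b : k>5\}$ and derives the result directly from Lemmas~\ref{le:bwt_wk_dollar}, \ref{le:bwt_wk_b_dollar}, \ref{le:bwt_wk_b_b_dollar}, \ref{le:bwt_wk_a_dollar} together with Observation~\ref{le:w_k_length}. Your write-up is in fact more explicit than the paper's one-line proof, and your remark that $w_k$ itself cannot serve as the base word (since appending $\b$ decreases $r_{\$}$) is a nice clarification.
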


\begin{proof} Such a family is composed of the words $w_k\b$ with $k > 5$. The proof follows from Lemma \ref{le:bwt_wk_dollar}, Lemma \ref{le:bwt_wk_b_dollar}, Lemma \ref{le:bwt_wk_b_b_dollar},  Lemma \ref{le:bwt_wk_a_dollar}, and Observation \ref{le:w_k_length}. 
\end{proof}

\subsection{The relationship between $r$ and $r_{\$}$} 

Now we address the differences between the measures $r$ and $r_\$$. In fact, not only are the measures $r$ and $r_\$$ not equal over the same input, but they may differ by a $\Theta(\log{n})$ multiplicative factor, or by a $\Theta(\sqrt{n})$ additive factor.

\bigskip
\begin{proposition}
    There exists an infinite family of words $v$ such that $r_{\$}(v)/r(v)=\Theta(\log n)$, where $n=|v|$. 
\end{proposition}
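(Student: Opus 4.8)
The plan is to use, as the family, the reverses of Fibonacci words of odd order, exploiting the observation that appending the end-marker $\$$ to such a word has exactly the same effect on the $\bwt$ as appending a fresh symbol smaller than $\a$ --- an operation that was already analysed in the second item of Proposition~\ref{prop:furtherC_notinSigma_odd}.

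Concretely, for $k\geq 1$ let $s$ be the Fibonacci word of order $2k+1$, set $v=\rev(s)$, and let $n=|v|=|s|=F_{2k+1}$. First I would record that $s$ is a standard word, hence $r(s)=2$, and that the reverse of a standard word is one of its conjugates; since conjugate words have the same $\bwt$, this gives $r(v)=r(s)=2$. Next, because $v\$$ and $\$v$ are conjugates, $\bwt(v\$)=\bwt(\$v)$, so that $r_{\$}(v)=\runs(\bwt(\$v))$. Now $\$$ is strictly smaller than $\a$ and does not occur in $v$, which is exactly what the proof of Proposition~\ref{prop:furtherC_notinSigma_odd} requires of the symbol played there by $\x$ (that proof uses only these two facts, never that $\x$ belongs to the original alphabet). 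Running it verbatim with $\x=\$$ yields $2k+2\leq r_{\$}(v)\leq 2k+3$. Finally, since the Fibonacci sequence grows exponentially we have $2k+1=\Theta(\log n)$, and therefore $r_{\$}(v)/r(v)=\tfrac{1}{2}\, r_{\$}(v)=\Theta(\log n)$, which is the claim.

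The only genuine point to be checked is the legitimacy of transplanting Proposition~\ref{prop:furtherC_notinSigma_odd} from a letter of $\Sigma$ to the out-of-alphabet marker $\$$, and I expect this to be entirely routine, since the argument there is purely order-theoretic. If one prefers a self-contained derivation, one can instead re-prove the two bounds directly from the explicit form $\bwt(\a v)=\b^{F_{2k-2}}\a\a\b^{F_{2k-4}}\cdots\b^{F_{2}}\a\b^{F_{0}}\a^{F_{2k}-k+1}$ of \cite[Prop.~3]{GILPST21}: the marker $\$$ plays the role held by the leading $\a$ of $\a v$ in the sorting of conjugates, the new smallest rotation $\$v$ ends in $\a$, and a short case analysis shows the run count changes by only $\Theta(1)$, hence stays $\Theta(k)=\Theta(\log n)$. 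Beyond this bookkeeping I anticipate no obstacle: Section~\ref{sec:Fibonacci} was built precisely to show that adding a single minimal symbol at the end of $\rev(s)$ already forces a logarithmic number of $\bwt$-runs, and the present statement is just that fact specialised to the symbol $\$$.
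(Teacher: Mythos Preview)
Your proposal is correct and follows essentially the same approach as the paper: the family is the reverses of odd-order Fibonacci words, $r(v)=2$ comes from the conjugacy of a standard word with its reverse, and the bound on $r_\$(v)$ is obtained by applying Proposition~\ref{prop:furtherC_notinSigma_odd} with $\x=\$<\a$. Your extra care about whether $\$\notin\Sigma$ causes trouble is unnecessary in the paper's version, since the statement of Proposition~\ref{prop:furtherC_notinSigma_odd} only requires $\x<\a$ and nowhere assumes $\x\in\Sigma$, but it does no harm.
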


\begin{proof}
    The family consists of the reverse of the Fibonacci words of odd order. Let $v = \rev(s)$, with $s$ a Fibonacci word of odd order $2k+1$. Since $s$ is a standard word, $r(s)=2$. Moreover, its reverse $v$ is a conjugate and thus $\bwt(v)=\bwt(s)$, implying that also $r(v)=2$. Let $v' = v\$$. Since $\$ < \a$, by 
    Proposition~\ref{prop:furtherC_notinSigma_odd} it follows that $r(v') \in \{2k+2, 2k+3\}$. Altogether, $r_{\$}(v)/r(v) \leq \frac{2k+3}{2} = \Theta(k) = \Theta(\log n)$. 
\end{proof}

\begin{proposition}
    There exists an infinite family of words $w$ such that $r_{\$}(w)-r(w)=\Theta(\sqrt{n})$, where $n = |w|$.
\end{proposition}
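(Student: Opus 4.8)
The plan is to reuse the family $(w_k)_{k>5}$ introduced in Definition~\ref{def:w_k}, since both $r$ and $r_\$$ have already been determined exactly for these words earlier in this section. Concretely, I would take the family to consist of the words $w_k$ with $k>5$, and set $n=|w_k|$. The entire argument then reduces to reading off two run counts that have already been established and performing one subtraction.

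\emph{Step 1: collect the two run counts.} By Proposition~\ref{prop:bwt_wk}, the BWT of $w_k$ has exactly $r(w_k)=6k-12$ runs. By Lemma~\ref{le:bwt_wk_dollar}, the BWT of $w_k\$$ has exactly $r_\$(w_k)=\runs(\bwt(w_k\$))=8k-16$ runs. Hence $r_\$(w_k)-r(w_k)=(8k-16)-(6k-12)=2k-4$. Intuitively, the gain comes from the same mechanism as in the edit-operation analysis of the previous subsection: inserting the sentinel perturbs each range $\vv{\b^j\a}{}$ for $2\le j\le k-1$, so that two consecutive ranges that previously merged in $\bwt(w_k)$ are now separated, contributing two extra runs for every such $j$; the ranges involving $\$$ at the top of the matrix contribute only a constant correction.

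\emph{Step 2: convert to an asymptotic bound in $n$.} By Observation~\ref{le:w_k_length}, $n=(3k^2+7k-18)/2$, so $k=\Theta(\sqrt n)$, and therefore $r_\$(w_k)-r(w_k)=2k-4=\Theta(\sqrt n)$, which is exactly the claim.

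There is no real obstacle remaining: all the combinatorial work---the block-by-block description of $\bwt(w_k)$ and of $\bwt(w_k\$)$, together with the run counts---has already been carried out in Proposition~\ref{prop:bwt_wk} and Lemma~\ref{le:bwt_wk_dollar}. The only subtlety worth a sentence is that both cited results give the \emph{exact} number of runs (not merely bounds), so the difference is genuinely $2k-4$, in particular strictly increasing in $k$; the final step is then a one-line computation together with Observation~\ref{le:w_k_length}.
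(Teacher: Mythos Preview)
Your proposal is correct and follows essentially the same approach as the paper: take the family $(w_k)_{k>5}$, cite Proposition~\ref{prop:bwt_wk} and Lemma~\ref{le:bwt_wk_dollar} to get $r_\$(w_k)-r(w_k)=2k-4$, and then invoke Observation~\ref{le:w_k_length} to conclude $\Theta(\sqrt n)$. The added intuitive paragraph about how the $\vv{\b^j\a}{}$ ranges account for the extra runs is accurate but not needed for the formal argument.
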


\begin{proof} 
    The family consists of the words $w_k$ for all $k > 5$, defined in Section \ref{sec:additive_sqrt_n}.
    From Proposition \ref{prop:bwt_wk} and Lemma \ref{le:bwt_wk_dollar}, it holds $r_{\$}(w_k)-r(w_k) =  2k-4$. By Observation \ref{le:w_k_length}, it holds $r_{\$}(w_k)-r(w_k) = \Theta(\sqrt{n})$. 
\end{proof}

\section{Conclusion}\label{sec:conclusion} 

In this paper, we studied how a single edit operation on a word (insertion, deletion or substitution of a character) can affect the number of runs $r$ of the BWT of the word. Our contribution is threefold. First, we prove that $\Omega(\log n)$ is a lower bound for all three edit operations, by exhibiting infinite families of words for which each edit operation can increase the number of runs by a multiplicative $\Theta(\log n)$ factor. Since for all of these families, $r = \Oh(1)$, this also proves that the upper bound $\Oh(\log n \log r)$ given in~\cite{AKAGI2023} is tight in the case of $r=\Oh(1)$, for each of the three edit operations. Secondly, we improved the best known lower bound of $\Omega(\log n)$ for the additive sensitivity of $r$~\cite{GILPST21,AKAGI2023}, by giving an infinite family of words on which insertion, deletion, and substitution of a character can increase $r$ by a $\Theta(\sqrt{n})$ additive factor. Finally, we put in relation the two common variants of the number of runs of the BWT, which we denote as $r$ resp.\ $r_\$$. The latter, $r_\$$, is the variant used in articles on string data structures and compression, which assumes that each word is terminated by an end-of-string symbol; for the variant $r$ commonly used in the literature on combinatorics on words, no such assumption is made.

Our work opens several roads of investigation. First, we ask whether there exist families of words with $r=\omega(1)$ for which edit operations can cause a multiplicative increase of $\Omega(\log n)$. In other words, is the bit catastrophe effect restricted to words on which the compression power of $r$ is maximal? 

Another interesting question is whether the upper bound $\Oh(r \log r \log n)$ from~\cite{AKAGI2023} for the additive sensitivity of $r$ is tight. A weaker question, an answer to which would make a step in this direction, is whether there exists an infinite family with $r = \omega(1)$ on which one edit operation can cause an additive increase of $\omega(r)$ in the number of runs of the BWT. 

\subsection*{Funding}

CU is funded by scholarship ANID-Subdirección de Capital Humano/Doctorado Nacional/2021-21210580, ANID, Chile. ZsL, GR, and MS are partially funded by the MUR PRIN Project \vir{PINC, Pangenome INformatiCs: from Theory to Applications} (Grant No.\ 2022YRB97K), and by the INdAM - GNCS Project CUP$\_$E53C23001670001. 
SI is partially funded by JSPS KAKENHI grant numbers JP20H05964, JP23K24808, and JP23K18466.

\subsection*{Authors contributions}

All authors contributed equally to the paper. 

\bibliography{refs}

\begin{thebibliography}{}
\providecommand{\doi}[1]{\url{https://doi.org/#1}}
\bibcommenthead

\bibitem[\protect\citeauthoryear{Akagi, Funakoshi, and Inenaga}{Akagi et~al.}{2023}]{AKAGI2023}
Akagi, T., M.~Funakoshi, and S.~Inenaga. 2023.
\newblock Sensitivity of string compressors and repetitiveness measures.
\newblock {\em Information and Computation\/}~291: 104999 .

\bibitem[\protect\citeauthoryear{Bannai, Gagie, and I}{Bannai et~al.}{2020}]{BannaiGI20}
Bannai, H., T.~Gagie, and T.~I. 2020.
\newblock Refining the \emph{r}-index.
\newblock {\em Theor. Comput. Sci.\/}~812: 96--108 .

\bibitem[\protect\citeauthoryear{Berstel and de~Luca}{Berstel and de~Luca}{1997}]{BersteldeLuca97}
Berstel, J. and A.~de~Luca. 1997.
\newblock {Sturmian} words, {Lyndon} words and trees.
\newblock {\em Theoretical Computer Science\/}~{\em 178\/}(1-2): 171--203 .

\bibitem[\protect\citeauthoryear{Boucher, Cenzato, Lipt{\'{a}}k, Rossi, and Sciortino}{Boucher et~al.}{2021}]{BoucherCL0S21}
Boucher, C., D.~Cenzato, {\relax Zs}.~Lipt{\'{a}}k, M.~Rossi, and M.~Sciortino 2021.
\newblock $r$-indexing the {eBWT}.
\newblock In T.~Lecroq and H.~Touzet (Eds.), {\em Proc.\ of 28th International Symposium on String Processing and Information Retrieval ({SPIRE} 2021)}, Volume 12944 of {\em Lecture Notes in Computer Science}, pp.\  3--12. Springer.

\bibitem[\protect\citeauthoryear{Boucher, Cenzato, Lipt{\'{a}}k, Rossi, and Sciortino}{Boucher et~al.}{2024}]{BoucherCL0S24}
Boucher, C., D.~Cenzato, {\relax Zs}.~Lipt{\'{a}}k, M.~Rossi, and M.~Sciortino. 2024.
\newblock $r$-indexing the {eBWT}.
\newblock {\em Information and Computation\/}~298: 105155.
\newblock \doi{10.1016/j.ic.2024.105155} .

\bibitem[\protect\citeauthoryear{Brlek, Frosini, Mancini, Pergola, and Rinaldi}{Brlek et~al.}{2019}]{BrlekFMPR19}
Brlek, S., A.~Frosini, I.~Mancini, E.~Pergola, and S.~Rinaldi 2019.
\newblock {Burrows-Wheeler Transform of Words Defined by Morphisms}.
\newblock In {\em {IWOCA}}, Volume 11638 of {\em Lect. Notes Comput. Sci.}, pp.\  393--404. Springer.

\bibitem[\protect\citeauthoryear{Burrows and Wheeler}{Burrows and Wheeler}{1994}]{BurrowsWheeler94}
Burrows, M. and D.J. Wheeler 1994.
\newblock A block-sorting lossless data compression algorithm.
\newblock Technical report, DIGITAL System Research Center.

\bibitem[\protect\citeauthoryear{Castiglione, Restivo, and Sciortino}{Castiglione et~al.}{2010}]{CS2010}
Castiglione, G., A.~Restivo, and M.~Sciortino. 2010.
\newblock On extremal cases of {H}opcroft's algorithm.
\newblock {\em Theoret. Comput. Sci.\/}~{\em 411\/}(38-39): 3414--3422 .

\bibitem[\protect\citeauthoryear{de~Luca}{de~Luca}{1997}]{deLuca97a}
de~Luca, A. 1997.
\newblock {Sturmian} words: Structure, combinatorics, and their arithmetics.
\newblock {\em Theor. Comput. Sci.\/}~{\em 183\/}(1): 45--82 .

\bibitem[\protect\citeauthoryear{de~Luca and Mignosi}{de~Luca and Mignosi}{1994}]{deLucaM94}
de~Luca, A. and F.~Mignosi. 1994.
\newblock {Some Combinatorial Properties of {S}turmian Words}.
\newblock {\em Theor. Comput. Sci.\/}~{\em 136\/}(2): 361--285 .

\bibitem[\protect\citeauthoryear{Ferragina and Manzini}{Ferragina and Manzini}{2000}]{FerraginaM00}
Ferragina, P. and G.~Manzini 2000.
\newblock Opportunistic data structures with applications.
\newblock In {\em {FOCS}}, pp.\  390--398. {IEEE} Computer Society.

\bibitem[\protect\citeauthoryear{Frosini, Mancini, Rinaldi, Romana, and Sciortino}{Frosini et~al.}{2022}]{FrosiniMRRS22_DLT}
Frosini, A., I.~Mancini, S.~Rinaldi, G.~Romana, and M.~Sciortino 2022.
\newblock Logarithmic equal-letter runs for {BWT} of purely morphic words.
\newblock In {\em {DLT}}, Volume 13257 of {\em Lect. Notes Comput. Sci.}, pp.\  139--151. Springer.

\bibitem[\protect\citeauthoryear{Gagie, Navarro, and Prezza}{Gagie et~al.}{2018}]{GagieNP18}
Gagie, T., G.~Navarro, and N.~Prezza 2018.
\newblock Optimal-time text indexing in {BWT}-runs bounded space.
\newblock In A.~Czumaj (Ed.), {\em {SODA}}, pp.\  1459--1477. {SIAM}.

\bibitem[\protect\citeauthoryear{Gagie, Navarro, and Prezza}{Gagie et~al.}{2020}]{GagieNP20}
Gagie, T., G.~Navarro, and N.~Prezza. 2020.
\newblock Fully functional suffix trees and optimal text searching in {BWT}-runs bounded space.
\newblock {\em J. {ACM}\/}~{\em 67\/}(1): 2:1--2:54 .

\bibitem[\protect\citeauthoryear{Giuliani, Inenaga, Lipt{\'{a}}k, Prezza, Sciortino, and Toffanello}{Giuliani et~al.}{2021}]{GILPST21}
Giuliani, S., S.~Inenaga, {\relax Zs}.~Lipt{\'{a}}k, N.~Prezza, M.~Sciortino, and A.~Toffanello 2021.
\newblock Novel results on the number of runs of the {Burrows-Wheeler-Transform}.
\newblock In {\em {SOFSEM}}, Volume 12607 of {\em LNCS}, pp.\  249--262. Springer.

\bibitem[\protect\citeauthoryear{Giuliani, Inenaga, Lipt{\'a}k, Romana, Sciortino, and Urbina}{Giuliani et~al.}{2023}]{GILRSU_DLT}
Giuliani, S., S.~Inenaga, {\relax Zs}.~Lipt{\'a}k, G.~Romana, M.~Sciortino, and C.~Urbina 2023.
\newblock Bit catastrophes for the {Burrows-Wheeler Transform}.
\newblock In F.~Drewes and M.~Volkov (Eds.), {\em Developments in Language Theory (DLT 2021)}, Cham, pp.\  86--99. Springer Nature Switzerland.

\bibitem[\protect\citeauthoryear{Kempa and Kociumaka}{Kempa and Kociumaka}{2022}]{KempaK22}
Kempa, D. and T.~Kociumaka. 2022.
\newblock Resolution of the {Burrows-Wheeler Transform} conjecture.
\newblock {\em Commun. {ACM}\/}~{\em 65\/}(6): 91--98 .

\bibitem[\protect\citeauthoryear{Knuth, Morris, and Pratt}{Knuth et~al.}{1977}]{KMP77}
Knuth, D.E., J.H. Morris, and V.R. Pratt. 1977.
\newblock Fast pattern matching in strings.
\newblock {\em SIAM J. Comput.\/}~{\em 6\/}(2): 323--350 .

\bibitem[\protect\citeauthoryear{Kociumaka, Navarro, and Prezza}{Kociumaka et~al.}{2020}]{KociumakaNP20}
Kociumaka, T., G.~Navarro, and N.~Prezza 2020.
\newblock Towards a definitive measure of repetitiveness.
\newblock In {\em {LATIN}}, Volume 12118 of {\em Lect. Notes Comput. Sci.}, pp.\  207--219. Springer.

\bibitem[\protect\citeauthoryear{Lagarde and Perifel}{Lagarde and Perifel}{2018}]{LZbitcatastrophe}
Lagarde, G. and S.~Perifel 2018.
\newblock {Lempel-Ziv}: a ``one-bit catastrophe" but not a tragedy.
\newblock In {\em {SODA}}, pp.\  1478--1495. {SIAM}.

\bibitem[\protect\citeauthoryear{{Lam}, {Li}, {Tam}, {Wong}, {Wu}, and {Yiu}}{{Lam} et~al.}{2009}]{soap2}
{Lam}, T.W., R.~{Li}, A.~{Tam}, S.~{Wong}, E.~{Wu}, and S.M. {Yiu} 2009.
\newblock {High Throughput Short Read Alignment via Bi-directional BWT}.
\newblock In {\em {BIBM}}, pp.\  31--36. {IEEE} Computer Society.

\bibitem[\protect\citeauthoryear{Langmead, Trapnell, Pop, and Salzberg}{Langmead et~al.}{2009}]{Bowtie}
Langmead, B., C.~Trapnell, M.~Pop, and S.L. Salzberg. 2009.
\newblock Ultrafast and memory-efficient alignment of short {DNA} sequences to the human genome.
\newblock {\em Genome Biology\/}~{\em 10\/}(3): R25 .

\bibitem[\protect\citeauthoryear{Li and Durbin}{Li and Durbin}{2010}]{bwa}
Li, H. and R.~Durbin. 2010.
\newblock {Fast and accurate long-read alignment with Burrows–Wheeler transform}.
\newblock {\em Bioinformatics\/}~{\em 26\/}(5): 589--595 .

\bibitem[\protect\citeauthoryear{Lothaire}{Lothaire}{2002}]{Loth2}
Lothaire, M. 2002.
\newblock {\em Algebraic Combinatorics on {W}ords}.
\newblock Cambridge Univ.\ Press.

\bibitem[\protect\citeauthoryear{M{\"{a}}kinen and Navarro}{M{\"{a}}kinen and Navarro}{2005}]{MakinenN05}
M{\"{a}}kinen, V. and G.~Navarro 2005.
\newblock Succinct suffix arrays based on run-length encoding.
\newblock In {\em {CPM}}, Volume 3537 of {\em Lecture Notes in Comp.\ Sc.}, pp.\  45--56. Springer.

\bibitem[\protect\citeauthoryear{Mantaci, Restivo, Rosone, Sciortino, and Versari}{Mantaci et~al.}{2017}]{MANTACI_TCS2017}
Mantaci, S., A.~Restivo, G.~Rosone, M.~Sciortino, and L.~Versari. 2017.
\newblock {Measuring the clustering effect of BWT via RLE}.
\newblock {\em Theoret. Comput. Sci.\/}~698: 79 -- 87 .

\bibitem[\protect\citeauthoryear{Mantaci, Restivo, and Sciortino}{Mantaci et~al.}{2003}]{MantaciRS03}
Mantaci, S., A.~Restivo, and M.~Sciortino. 2003.
\newblock {Burrows--Wheeler} transform and {Sturmian} words.
\newblock {\em Information Processing Letters\/}~{\em 86\/}(5): 241--246 .

\bibitem[\protect\citeauthoryear{Navarro}{Navarro}{2022}]{Navarro21a}
Navarro, G. 2022.
\newblock Indexing highly repetitive string collections, part {I:} {Repetitiveness} measures.
\newblock {\em {ACM} Comput. Surv.\/}~{\em 54\/}(2): 29:1--29:31 .

\bibitem[\protect\citeauthoryear{Rosone and Sciortino}{Rosone and Sciortino}{2013}]{RosoneS13}
Rosone, G. and M.~Sciortino 2013.
\newblock The {Burrows-Wheeler} transform between data compression and combinatorics on words.
\newblock In {\em 9th Conference on Computability in Europe (CiE 2013)}, pp.\  353--364.

\bibitem[\protect\citeauthoryear{Sciortino and Zamboni}{Sciortino and Zamboni}{2007}]{SciortinoZ07}
Sciortino, M. and L.Q. Zamboni 2007.
\newblock Suffix automata and standard {Sturmian} words.
\newblock In {\em Developments in Language Theory (DLT 2007)}, Volume 4588 of {\em Lect. Notes Comput. Sc.}, pp.\  382--398. Springer.

\bibitem[\protect\citeauthoryear{Seward}{Seward}{1996}]{bzip}
Seward, J. 1996.
\newblock https://sourceware.org/bzip2/manual/manual.html.

\bibitem[\protect\citeauthoryear{Ziv and Lempel}{Ziv and Lempel}{1977}]{ZivL77}
Ziv, J. and A.~Lempel. 1977.
\newblock A universal algorithm for sequential data compression.
\newblock {\em {IEEE} Trans. Inf. Theory\/}~{\em 23\/}(3): 337--343 .

\bibitem[\protect\citeauthoryear{Ziv and Lempel}{Ziv and Lempel}{1978}]{ZivL78}
Ziv, J. and A.~Lempel. 1978.
\newblock Compression of individual sequences via variable-rate coding.
\newblock {\em {IEEE} Trans. Inf. Theory\/}~{\em 24\/}(5): 530--536 .

\end{thebibliography}

\backmatter

\end{document}